\newtheorem{theorem}{Theorem}
\newtheorem{lemma}{Lemma}
\newtheorem{proposition}{Proposition}
\theoremstyle{definition}
\newtheorem{remark}{Remark}
\newcommand{\eps}{\varepsilon}
\newcommand{\ul}{\underaccent{\bar}}
\newcommand{\ol}{\bar}
\newcommand{\df}{\mathrm{d}}
\newcommand{\R}{\mathbb{R}}
\newcommand{\N}{\mathbb{N}}
\begin{document}

\let\MakeUppercase\relax

\title{Robust Sequential Search}
\author[\uppercase{Schlag and Zapechelnyuk}]{
\larger \textsc{Karl H.~Schlag and Andriy
Zapechelnyuk}}
\date{\today}
\thanks{ \ \\
\textit{Schlag}: Department of Economics, University of Vienna, Oskar-Morgenstern-Platz 1,
1090 Vienna, Austria. \emph{E-mail:} karl.schlag@univie.ac.at. \\
\textit{Zapechelnyuk}: School of Economics and Finance, University of St Andrews, Castlecliffe, the Scores, St Andrews KY16 9AR, UK. {\it E-mail:} {az48@st-andrews.ac.uk.} \\ 
\ \\
The authors would like to thank Dirk Bergemann, Jeffrey Ely, Olivier Gossner, Johannes H{\"o}rner, Bernhard Kasberger, and Alexei Parakhonyak for helpful comments and suggestions.
}

\begin{abstract} 
\onehalfspacing
We study sequential search without priors. Our interest lies in decision rules that are close to being optimal under each prior and after each history. We call these rules dynamically robust. The search literature employs optimal rules based on cutoff strategies that are not dynamically robust. We derive dynamically robust rules and show that their performance exceeds 1/2 of the optimum against binary environments and 1/4 of the optimum against all environments. This performance improves substantially with the outside option value, for instance, it exceeds 2/3 of the optimum if the outside option exceeds 1/6 of the highest possible alternative.

\bigskip

\noindent\emph{JEL\ Classification:}\ D83, D81, C44\newline

\noindent\emph{Keywords:} Sequential search; search without priors; robustness; dynamic consistency; competitive ratio

\

\end{abstract}

\maketitle
\onehalfspacing

\newpage
\section{Introduction}\label{s:intro}
Suppose that you check stores one by one in search of the cheapest 
place to buy some good. Your decision of when to stop searching depends 
on the distribution of prices you expect to encounter in unvisited 
stores. The methodology of Bayesian decision making proposes to turn 
this into an optimization problem, using as input your prior belief about 
possible distributions, mathematically formulated as a distribution over 
distributions. This is a complex and usually intractable intertemporal decision problem.
Special cases can be solvable, but solutions are fragile as they depend on your beliefs about what you do not know (see \citeauthor{Gastwirth1976}, 1976).

We are interested in a robust approach to this problem that does not depend on specific prior beliefs of a decision maker. Instead of focusing on optimality for some prior, we look for epsilon-optimality for all priors. Furthermore, we are interested in a dynamically consistent approach in which the performance matters at any point in time and not only at the outset. In this paper, we formalize a performance criterion that fulfills these desiderata. Decision rules that are optimal under this criterion are called {\it dynamically robust}. We derive general properties of dynamically robust rules and then show how close their performance is to the optimal ones under each prior.

The practical relevance of robust decision making is apparent.
How can a shopper know the distribution of prices offered in the next store? How does
she form a prior about such distributions? Even if a prior is formed, will
the shopper be able to overcome the complexity of Bayesian optimization? Will the decision rule still be good if the prior puts little
or no weight on the environment that is realized? How will the shopper
argue about the optimality of a particular decision rule in front of her
peers if they do not have the same prior as she does? These questions can be addressed by a decision rule that performs relatively well for any
prior. Such a rule can be proposed as a compromise among Bayesian decision makers who have
different priors. It is a shortcut to avoid cumbersome calculations involved in finding the
Bayesian optimal rule. Finally, as a single rule that does not depend on individual (unobservable) beliefs, it is a useful benchmark for empirical studies.

The setting we consider in this paper is as follows. Alternatives arrive according to some i.i.d.~process. An individual who does not know the underlying distribution has to decide after each draw whether to stop or to continue. There is free recall: when the
individual stops she can choose the best alternative found so far. Values are discounted over
time, thus, waiting for better alternatives is costly. In an extension we also include an additive cost of waiting for better alternatives (see Appendix \ref{s:cost}).

As our first contribution, we develop a methodology for robust decision making that applies not only to sequential search. In a nutshell, we replace optimality for a given prior by epsilon-optimality under all priors and after all histories of past observations, and look for a smallest such epsilon. 
Specifically, we measure the performance of a given decision rule as follows. For each prior and each history, we compute the ratio of the rule's payoff to the maximal possible payoff. We then evaluate the rule by the smallest of these ratios and call it the {\it performance ratio} of the rule. This performance ratio describes what fraction of the maximal payoff can be guaranteed, regardless of the prior under which the payoffs are computed and regardless of which alternatives have realized over time. We are interested in a decision rule that achieves the largest possible performance ratio. Such a rule will be called {\it dynamically robust}. 

As our second contribution, we solve the described sequential search
problem. This is done first for binary environments, and then for more general environments. An environment is {\it binary } if it is a lottery over two alternatives, low and high. The values of these alternatives need not be known to the individual. We find that the dynamically robust performance ratio against binary environments is at least 1/2. So, the individual can always guarantee at least a half of the maximal payoff, even if the value of the maximal payoff is not known. Moreover, if there is an upper bound on the possible values of the high alternative, then the dynamically robust performance ratio is strictly increasing in the individual's outside option, attaining 2/3 and 3/4 when the outside option is, respectively, 1/6 and 1/3 of that upper bound. Surprisingly, these results extend to general environments, provided that possible values of alternatives have an upper bound, and the outside option is not too small. The decision rule that supports these findings prescribes to stop after any given history with a probability that is increasing in the value of the best realized alternative. In general, we show that the dynamically robust performance ratio is always at least 1/4, where this lower bound is attained when alternatives are unbounded, or in the limit as the outside option approaches zero.

Our analysis reveals that a dynamically robust rule has three notable properties. 

First, any such rule prescribes randomization between stopping and continuing the search. Intuitively, one should not stop with certainty when concerned that future outcomes may be higher. Similarly, one should not continue with certainty when concerned that future outcomes may never be higher. This stands in contrast with almost the entire search literature that studies deterministic cutoff rules.\footnote{An exception is Janssen et al.~(\citeyear*{Janssen2017}).}  

Second, a dynamically robust rule does not make any inference about the environment from past observations. The reason is that, after any history of explored alternatives, some degenerate environments can be ruled out. Yet, for every such environment, there are arbitrarily close nondegenerate environments that cannot be ruled out. Thus the closure of set of feasible posteriors about unexplored alternatives remains unchanged.

Finally, the worst-case priors that determine the robust performance ratio are degenerate, assigning probability one to a specific i.i.d.~distribution. This means that the payoff ratio of a decision rule can only be higher under nondegenerate priors. In addition, the worst-case distributions have support on at most two different values of alternatives. Loosely speaking, this is because the individual makes a binary choice in each round, and hence, two values, high and low, provide enough freedom to construct worst-case distributions. 

Our dynamically robust rules can be replaced by simpler rules without substantially changing the performance ratio. These simpler rules involve a stopping probability that is linear in the best realized alternative (see Appendix \ref{s:linear}).

\subsection*{Alternative Approaches to Performance Measurement}
Our paper deals with decision making under multiple priors. A prominent candidate criterion is maximin expected utility, as in \citeasnoun{Wald50} and \citeasnoun{Gilboa89}. There is a conceptual reason why we do not follow this approach. In this paper, we maintain the classic utility maximization preferences, moving only from optimality to epsilon optimality. Our approach makes sense to one who is unable to solve the sequential decision problem, unsure which specific prior to assign, or in need of justifying behavior in front of others. In contrast, the maximin utility approach does not have any one of these interpretations. 
Moreover, it takes a very different approach to multiplicity of priors. Instead of trying to be good irrespective of the prior (as in the original meaning of the term ``robust'' as discussed below) it aims to do best for the very specific prior where payoffs are lowest. 

On top of this, the maximin utility approach is too restrictive in the sequential search problem. 
The rule selected by the maximin utility criterion prescribes to stop immediately and not to search at all.
So, this criterion does not present useful insights for understanding how to search.

Another criterion that receives a lot of attention is minimax regret. The degree of suboptimality (referred to as regret) is measured either in terms of differences \citep{Savage51} or, as popular in the computer science literature, in terms of ratios (\citeauthor{Sleator85}, 1985; see also the axiomatization of \citeauthor{Terlizzese}, 2008), which can also be found in the robust contract literature (e.g., \citeauthor{Chassang2013}, 2013). We prefer ratios to obtain a scale-free measure and, thus, to be able to compare the performance after different histories, as well as across different specifications of the environment.

A common feature in the minimax regret literature is the evaluation of the payoffs retrospectively, after all uncertainty is resolved, as in the search models of \citeasnoun{Bergemann2011WP} and \citeasnoun{Parakhonyak2015}. Instead, we adopt a forward-looking approach, similar to \citeasnoun{Hansen2001}, \citeasnoun{Perakis}, Jiang et al.~(\citeyear*{Jiang}), and \citeasnoun{Kasberger2017}. The individual judges and compares decision rules by their discounted expected payoffs before the uncertainty is resolved, as a standard Bayesian decision maker would. 

An innovative aspect to our methodology is that, in the spirit of Bayesian decision making, we evaluate the performance not only ex-ante, but also after each additional piece of information has been gathered. 
We identify a bound on the relative performance loss that the decision maker tolerates in exchange for having a rule that does not depend on a specific prior. The corresponding decision rule is dynamically consistent in the sense that this bound will not be exceeded, regardless of what alternatives are realized. We are not aware of any paper that either formulates or derives dynamically consistent robust search behavior.\footnote{\citeasnoun{Schlag2015} consider dynamic decision making without priors in a non-search setting. A crucial difference from this paper is that they compare the performance of a decision rule to those of a few given benchmark strategies, not to the optimal behavior for the underlying model.} In particular, ex-ante commitment is required in the literature on the secretary problem \citep{Fox1960} that studies sequential search within a nonrandom set of exchangeable alternatives (for a review, see \citeauthor{Ferguson1989}, 1989).\footnote{We investigate the secretary problem under our criterion of dynamic robustness in a separate paper \citep{SZ-Secretary}.}   An analysis of ex-ante robust search in the setting of this paper is difficult and remains unsolved. \citeasnoun{Bergemann2011WP} and \citeasnoun{Parakhonyak2015} study a special case with two periods, and Babaioff et al.~\citeyear{Babaioff2009} study asymptotic performance of approximately optimal algorithms in a related problem with no recall, so these results are not comparable to our paper.

\subsection*{Other Related Literature.}

The term {\it robustness} goes back to Huber (\citeyear*{Huber1964}, \citeyear*{Huber1965}), defined as a procedure whose
\textquotedblleft performance is insensitive to small deviations of the
actual situation from the idealized theoretical model\textquotedblright \ 
\citep{Huber1965}. \citeasnoun{Prasad2003} and \citeasnoun{Bergemann2011JET}
formalize this notion for a policy choice, they measure insensitivity under
small deviations as performance being close to that of the optimal policy. The same approach has been applied to large deviations, where the performance is evaluated under a large class of distributions, as in statistical treatment choice (\citeauthor{Manski2004}, 2004, \citeauthor{Schlag2006}, 2006, and \citeauthor{Stoye2009}, 2009), auctions \citep{Kasberger2017}, and search in markets (\citeauthor{Bergemann2011WP}, 2011b, and \citeauthor{Parakhonyak2015}, 2015). The term {\it robustness} has been used in the same spirit -- to achieve
an objective independently of modeling details -- in robust
mechanism design \citep{Bergemann05}, and in the
field of control theory (Zhou et al., \citeyear*{Zhou1995}).\footnote{The term {\it robustness} has also been used in other contexts. It appears in the maximin utility approach (\citeauthor{Wald50}, 1950, and \citeauthor{Gilboa89}, 1989) adapted to robust contract design (\citeauthor{Chassang2013}, 2013, and \citeauthor{Carroll2015}, 2015), robust optimization (Ben-Tal et al., \citeyear*{BenTal2009}), robust selling mechanisms (Carrasco et al., \citeyear*{Carrasco}), and robust control in macroeconomics \citep{Hansen2001}. It also appears in \citeasnoun{Kajii97} where the concept of robustness is related to closeness in the strategy space, rather than in the payoff space.}

Dynamic consistency has been studied in other models of choice under ambiguity by \citeasnoun{Epstein2003}, Maccheroni et al.~(\citeyear*{Maccheroni2006}), Klibanoff et al.~(\citeyear*{Klibanoff2009}), \citeasnoun{Riedel2009}, and \citeasnoun{Siniscalchi2011}.  The challenge in this literature has been how to appropriately update information over time. In many cases this can only be done by artificially constraining possible environments and priors. We avoid the resulting conceptual and technical obstacles by letting a Bayesian decision maker process the information, which is dynamically consistent by definition.

\section{Model}\label{s:model}

\subsection{Setting}
An individual chooses among alternatives that arrive sequentially. She starts with an outside option $x_0$ which is given and is strictly positive, so $x_0>0$. Alternatives $x_1,x_2,...$ are realizations of an infinite sequence of i.i.d.~random variables. Each $x_t\ge 0$ describes how much this alternative is worth to the individual. In each round $t=0,1,2,...$, after having observed $x_t$, the individual decides whether to stop the search, or to wait for another alternative. There is {\it free recall}: when the individual decides to stop, she chooses from all the alternatives she has seen so far. The highest alternative in a history $h_t=(x_0,x_1,...,x_t)$ is referred to as {\it best-so-far alternative} and denoted by $y_t$, so
\[
y_t=\max\{x_0,x_1,...,x_t\}.
\]
Payoffs are discounted over time with a discount factor $\delta\in (0,1)$. From the perspective of round 0, the payoff of stopping after $t$ rounds is $\delta^t y_t$. The discount factor incorporates various multiplicative costs of search, such as the individual's impatience and a decay of values that are not accepted.\footnote{The restriction to multiplicative search costs is for simplicity and clarity of exposition. Our methodology extends to more general costs of search that include both additive and multiplicative components, as we show in Appendix \ref{s:cost}.}  

We assume that alternatives are drawn from a given (Borel) set $X\subset\R_+$, with $0\in X$, 
according to a probability distribution $F$.\footnote{Inclusion of $0$ in $X$ is for notational convenience. Nothing changes if we replace 0 by some $\ul x$ as long as the outside option satisfies $x_0\ge\ul x$. Inclusion of 0 is natural in applications where search may not provide a new alternative in each round, so the absence of a new alternative is modeled as the zero-valued alternative.} For instance, the set of alternatives $X$ can be $\R_+$, $\N$, $[0,\bar x]$, or $\{0,\bar x\}$. Let $\mathcal F_X$ denote the set of all distributions over $X$ that have a finite mean.\footnote{Distribution $F$ must have a finite mean to ensure that the optimal payoff under $F$ is well defined.}  We refer to $F$ as an {\it environment} and to $\mathcal F\subset \mathcal F_X$ as a {\it set of feasible environments}.

We also allow for mixed environments. A {\it mixed environment} is a probability distribution with a finite support over the set of feasible environments $\mathcal F$.\footnote{We restrict attention to mixed environments with finite support to avoid technical complications of dealing with priors over infinite sets. In fact, we show later that the analysis reduces to dealing with pure environments only, so the restriction to finite support plays no role in the results.} The set of mixed environments is denoted by $\Delta(\mathcal F)$. Mixed environments capture applications where each alternative $x_t$ depends on two components, an independent value $\xi_t$ and a common value $\theta$. For example, in a job search model, the value $x_t$ of a job offer may be expressed as $x_t=\theta+\xi_t$, where $\theta$ is  a market-wide or jobseeker-specific unobservable variable, and $\xi_t$ is an idiosyncratic unobservable value specific to employer $t$. 

The decision making of the individual is formalized as follows. Clearly, if the individual stops, 
she chooses the best-so-far alternative. So, the only relevant decision is when to stop. This is given by a {\it decision rule} $p$ that prescribes for each history of alternatives $h_t=(x_0,x_1,...,x_t)$ the probability $p(h_t)$ of stopping after that history.

\subsection{Bayesian Decision Making}\label{s:Bayesian}

A Bayesian approach to this search problem is as follows. A Bayesian decision maker starts with some prior over (mixed) environments. In each round, she updates this prior according to Bayes' rule and makes a choice that maximizes her expected payoff under the current posterior. Given a prior $\mu$, we call such a decision rule {\it optimal under $\mu$}.  

Note that each prior, formally defined as a probability distribution with a finite support over the set mixed environments $\Delta(\mathcal F)$, is a compound lottery over the set of environments $\mathcal F$. Beca use compound lotteries are equivalent to simple lotteries, any prior over mixed environments, $\mu\in\Delta(\Delta(\mathcal F))$, is an element of $\Delta(\mathcal F)$ itself. In what follows, we will refer to elements of $\Delta(\mathcal F)$ synonymously as priors and mixed environments. 

A prior is called {\it degenerate} if it assigns unit mass on a single environment $F\in\mathcal F$. By convention, we associate each environment $F$ with the correspondent degenerate prior, so $F\in \Delta(\mathcal F)$.

An environment $F\in\mathcal F$ and a prior $\mu\in\Delta(\mathcal F)$ are called {\it consistent} with a history of alternatives $h_t=(x_0,x_1,...,x_t)$ if the sequence of alternatives $x_1,...,x_t$ occurs with a positive probability under $F$ and $\mu$, respectively. Denote by $\mathcal F(h_t)$ the sets of environments that are consistent with $h_t$. With a slight abuse of notation, denote by $\Delta(\mathcal F(h_t))$ the sets of priors that are consistent with $h_t$.

%

\subsection{Performance Criterion\label{s:criterion}} 
We consider an individual who does not know which environment she faces. Rather than being concerned with the optimality under a particular prior, the individual wishes to find a decision rule that is approximately optimal under all priors and at all stages of the decision making. We formalize this performance criterion as follows.

Consider a set of alternatives $X$, a set of feasible environments $\mathcal F\subset\mathcal F_X$, a history of alternatives $h_t=(x_0,x_1,...,x_t)$, and a prior $\mu\in \Delta(\mathcal F(h_t))$, so $\mu$ is consistent with history $h_t$. Let $U_p(\mu,h_t)$ denote the expected payoff of a decision rule $p$ under $\mu$, conditional on history $h_t$, so
\begin{align}
U_p(\mu,h_t) &=p(h_t)y_t+(1-p(h_t))\delta \int_{\mathcal F}\int_X U_p(\mu,h_t\oplus x_{t+1}) \df F(x_{t+1})\df \mu(F),\label{U-0}
\end{align}
where $h_t\oplus x_{t+1}=(x_0,...,x_t,x_{t+1})$. 

Let $V(\mu,h_t)$ denote the {\it optimal payoff} under $\mu$ conditional on $h_t$,
\begin{align*}
V(\mu,h_t)&=\sup\nolimits_{p} U_p(\mu,h_t).
\end{align*}
This is the highest possible expected payoff, in other words, the payoff of a Bayesian decision maker, under prior $\mu$ given history $h_t$.

The {\it payoff ratio} $U_p(\mu,h_t)/V(\mu,h_t)$ describes the fraction of the optimal payoff that a given rule $p$ attains under prior $\mu$ given history $h_t$. Note that $V(\mu,h_t)\ge x_0>0$.

The {\it performance ratio} $R_p(x_0,\mathcal F)$ of a decision rule $p$ is defined as the lowest payoff ratio over all histories of alternatives and all priors consistent with those histories,
\[
R_p(x_0,\mathcal F)=\inf_{h\in\mathcal H(x_0)}\inf_{\mu\in \Delta(\mathcal F(h))} \frac{U_p(\mu,h)}{V(\mu,h)},
\]
where $\mathcal H(x_0)$ denotes the set of histories with outside option $x_0$.
So, the performance ratio captures the fraction of the optimal payoff that a rule guarantees in each round. 

The highest possible performance ratio is called {\it dynamically robust} and is given by
\[
R^*(x_0,\mathcal F)=\sup\nolimits_{p} R_{p}(x_0,\mathcal F).
\]
Note that $R^*(x_0,\mathcal F)$ depends only on the information available from the start:  the outside option $x_0$, the set of feasible environments $\mathcal F$, and, implicitly, the discount factor $\delta$. 

A decision rule $p^*$ is called {\it dynamically robust} if it attains the dynamically robust performance ratio, so $R_{p^*}(x_0,\mathcal F)=R^*(x_0,\mathcal F)$.

\subsection{Motivation}
Our performance criterion can be motivated by the concept of epsilon-optimality. 
In this paper we replace the objective of optimality against a given environment or prior by the objective of epsilon-optimality against all environments and priors. Such a rule is {\it robust} in the sense that its performance remains close to the optimum irrespective of which particular environment in $\Delta(\mathcal F)$ the individual faces.

An important aspect of economic models of search is their dynamic nature. Decisions are made in each round, and past search costs are sunk, hence irrelevant for today's choices. This dynamic nature is an integral part of our approach. We are interested in {\it dynamic consistency} of a rule, in the sense that its epsilon-optimality should hold not only ex-ante, but also in all subsequent rounds. 
This is why we use the term {\it dynamically robust}.

The dynamic robustness criterion does not require a decision maker to be too specific
about the environment. It is appropriate for a decision maker who is willing to sacrifice payoffs in favor of more general applicability and performance stability. Imagine an individual (e.g., a CEO of a company) who must convince a group of observers (e.g., a board of directors), each with a different prior, that her decision rule is good. Assume that these observers can monitor the performance of this decision rule over time, so they must remain convinced at all stages of the decision making. If the individual's rule is dynamically robust, then no observer will ever be able to accuse the individual of underperforming by more than a specified threshold. Moreover, being dynamically robust means that the threshold is the smallest among all rules with this property.

Finally, our performance criterion can be used to quantify the value of information about the environment. The dynamically robust performance ratio bounds the ratio of payoffs of two individuals: an ignorant one (who knows nothing about the environment) and an informed one (who knows everything about the environment). Thus, it defines the maximal payoff loss due to being uninformed about the environment.

\subsection{First Insights}\label{s:insights}

Before unveiling our results, we present three simple, but important insights.

\subsubsection{Irrelevance of Priors.} 

The greatest obstacle in Bayesian optimization is that the problem of finding an optimal rule is generally intractable and only solvable for extremely simple priors. Our approach does not have this drawback, as we do not need to consider general priors. 
Below we show that it is enough to restrict attention to pure environments. 

Note that optimal rules under pure environments are simple to find, as these are cutoff rules that require to search until a certain cutoff is exceeded. Specifically, by \citeasnoun{Weitzman79}, the optimal rule under any given environment $F$ prescribes to stop whenever the best-so-far alternative $y$ exceeds a {\it reservation value} $c_F$ given by
\begin{equation}\label{E:RV}
c_F=\delta\left(\int_{0}^{c_F} c_F \df F(x) +\int_{c_F}^\infty x\df F(x)\right).
\end{equation}
The optimal payoff, given a best-so-far alternative $y$ and an environment $F$, is
\begin{equation}\label{OptV}
V(F,y)=\max\big\{y,c_F\big\}.
\end{equation}

The proposition below shows that the performance ratio of a rule can be determined by looking only at the pure environments. Recall that $\mathcal F(h)$ and $\Delta(\mathcal F(h))$ denote the set of environments and priors, respectively, that are consistent with a history $h$.

\begin{proposition}\label{L:Mixed}
For each decision rule $p$ and each history $h$,
\[
\inf_{\mu\in \Delta(\mathcal F(h))} \frac{U_p(\mu,h)}{V(\mu,h)}=\inf_{F\in\mathcal F(h)}\frac{U_p(F,h)}{V(F,h)}.
\]
\end{proposition}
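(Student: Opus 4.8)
The plan is to prove two inequalities. The inequality ``$\le$'' is immediate: by the convention that each environment $F$ is identified with its degenerate prior, we have $\mathcal F(h)\subseteq \Delta(\mathcal F(h))$, so the infimum over the larger set $\Delta(\mathcal F(h))$ cannot exceed the infimum over pure environments. The substance is the reverse inequality ``$\ge$'', which I would establish by showing that for every consistent mixed environment $\mu$ the payoff ratio $U_p(\mu,h)/V(\mu,h)$ is bounded below by $\rho:=\inf_{F\in\mathcal F(h)}U_p(F,h)/V(F,h)$.

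Fix $\mu\in\Delta(\mathcal F(h))$ with finite support, say $\mu=\sum_i \lambda_i F_i$, and let $\lambda_i^h\propto \lambda_i\Pr(h\mid F_i)$ denote the Bayesian posterior weights given $h$; these are well defined because $\mu$ is consistent with $h$, and their support lies in $\mathcal F(h)$. The first step is to record the linearity of the rule's payoff in the posterior. Conditioning on which environment was drawn and using that the rule $p$ depends only on the history (and future draws are i.i.d.\ under each $F_i$), iterated expectations give
\[
U_p(\mu,h)=\sum_i \lambda_i^h\, U_p(F_i,h).
\]
The second step is the key inequality
\[
V(\mu,h)\le \sum_i \lambda_i^h\, V(F_i,h),
\]
which follows from the same linearity applied to an arbitrary rule $q$ together with subadditivity of the supremum: $V(\mu,h)=\sup_q\sum_i\lambda_i^h U_q(F_i,h)\le \sum_i\lambda_i^h\sup_q U_q(F_i,h)=\sum_i\lambda_i^h V(F_i,h)$. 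Economically, this says that information about the environment has nonnegative value, so committing to a single rule across the $F_i$ cannot beat the environment-contingent optima.

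Combining the two steps finishes the argument. Since each $F_i$ in the posterior support is consistent with $h$, the definition of $\rho$ gives $U_p(F_i,h)\ge \rho\, V(F_i,h)$; averaging with the posterior weights and using the linearity of $U_p$ yields $U_p(\mu,h)\ge \rho\sum_i\lambda_i^h V(F_i,h)\ge \rho\, V(\mu,h)$, where the last step uses the key inequality together with $\rho\le 1$ and $V(\mu,h)\ge x_0>0$. Dividing by $V(\mu,h)>0$ gives $U_p(\mu,h)/V(\mu,h)\ge\rho$, and taking the infimum over $\mu$ completes the reverse inequality.

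I expect the main obstacle to lie in the two bookkeeping points underlying the first two displays rather than in the final combination. Specifically, one must justify the linearity of $U_p(\mu,h)$ in the posterior directly from the recursive definition \eqref{U-0} (tracking how the mixing weights update after each observation and checking consistency of the recursion), and one must handle environments in the support of $\mu$ that are inconsistent with $h$ by passing to the posterior, whose support is contained in $\mathcal F(h)$. Finiteness of the support of $\mu$ keeps all sums finite and avoids integrability issues, so once linearity and the subadditivity-of-supremum inequality are in place, the result follows.
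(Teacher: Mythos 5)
Your proposal is correct and takes essentially the same route as the paper's own proof: linearity of $U_p(\mu,h)$ in the Bayesian posterior, the sup-subadditivity inequality $V(\mu,h)\le\sum_{F}V(F,h)\mu(F|h)$ (which the paper phrases as the weights $\eta(F)=V(F,h)\mu(F|h)/V(\mu,h)$ summing to at least one), and the trivial reverse direction from $\mathcal F(h)\subset\Delta(\mathcal F(h))$. The only cosmetic quibble is your final step: it needs only $\rho\ge 0$ and the key inequality, not $\rho\le 1$, exactly as in the paper where $r_p(F)\ge 0$ suffices.
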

Note that the ratio $U_p(\mu,h)/V(\mu,h)$ is nonlinear in $\mu$, so the result does not immediately follow from the fact that each prior $\mu$ is a linear combination of points in $\mathcal F$. The proof is in Appendix \ref{s:p1}.

\subsubsection{Irrelevance of Histories.} 
How should the individual condition her decisions on past observations? For instance, what does the individual learn after having observed a history $(x_0, x_1,...,x_n)$? All environments are still possible, except for the degenerate ones that assign zero probability to the values of $x_1,...,x_n$. When the set of feasible environments is convex, exclusion of these degenerate environments does not change the  infimum of the payoff ratios. Intuitively, this is because our performance measure involves evaluating the payoff ratio for each environment under which a given history occurs with a positive probability. How likely this history occurs does not influence the payoff ratio. If the history contains observations that cannot be generated by some environment $F$, other environments arbitrarily close to $F$ can generate this history with a positive, albeit arbitrarily small probability, and $F$ is a limit of a sequence of such environments.

\begin{proposition}\label{P:Hist}
Let $\mathcal F$ be convex. For each decision rule $p$ and each history $h$,
\[
\inf_{F\in\mathcal F(h)} \frac{U_p(F,h)}{V(F,h)}=\inf_{F\in\mathcal F} \frac{U_p(F,h)}{V(F,h)}.
\]
\end{proposition}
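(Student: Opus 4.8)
The inclusion $\inf_{F\in\mathcal F(h)}\ge\inf_{F\in\mathcal F}$ is immediate, since $\mathcal F(h)\subseteq\mathcal F$ and taking the infimum over a larger set can only decrease it. So the whole content is the reverse inequality: I must show that dropping down from $\mathcal F(h)$ to all of $\mathcal F$ cannot produce a strictly smaller payoff ratio. The plan is to take an arbitrary $F\in\mathcal F$ (in particular one inconsistent with $h$, since consistent $F$ already lie in $\mathcal F(h)$) and construct a sequence $F_n\in\mathcal F(h)$ with $F_n\to F$ along which both $U_p(F_n,h)\to U_p(F,h)$ and $V(F_n,h)\to V(F,h)$. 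This would give $U_p(F,h)/V(F,h)$ as a limit of ratios over $\mathcal F(h)$, hence $\inf_{\mathcal F(h)}\le U_p(F,h)/V(F,h)$, and taking the infimum over $F\in\mathcal F$ yields the claim.

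The construction of $F_n$ is where convexity enters. Given a history $h=(x_0,x_1,\dots,x_t)$, fix some fully-supported benchmark environment $G\in\mathcal F$ that assigns positive probability to each observed value $x_1,\dots,x_t$ (e.g.\ a distribution with an atom at each $x_i$; such a $G$ can be taken in $\mathcal F$ by the convexity assumption, being a finite mixture of point masses if those lie in $\mathcal F$, or otherwise by mixing $F$ with a nearby consistent environment). Then set $F_n=(1-1/n)F+\tfrac1n G$. By convexity $F_n\in\mathcal F$, and since $G$ makes the sequence $x_1,\dots,x_t$ occur with positive probability, so does $F_n$; hence $F_n\in\mathcal F(h)$ for every $n$. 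As $n\to\infty$ we have $F_n\to F$ weakly, and the mixing weight on $G$ vanishes. The key point, already flagged informally in the paragraph preceding the statement, is that the payoff ratio conditional on $h$ depends on $F$ only through the conditional behaviour of the search from round $t$ onward—not through the (now irrelevant) probability that $h$ itself occurred—so the vanishing perturbation $\tfrac1n G$ washes out in the limit.

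The main obstacle is establishing joint continuity of $U_p(\cdot,h)$ and $V(\cdot,h)$ along this sequence. For $V$, formula \eqref{OptV} gives $V(F,h)=\max\{y_t,c_F\}$, so it suffices to show the reservation value $c_{F_n}\to c_F$; this follows from the defining equation \eqref{E:RV} together with the finite-mean condition controlling the tail integral $\int_{c_F}^\infty x\,\df F(x)$ under the mixture, using that $G$ has finite mean so the perturbation contributes at most $O(1/n)$ to every integral. For $U_p$, the recursion \eqref{U-0} expresses the continuation value as an integral against $F_{n}$; I would argue that replacing $F$ by $F_n$ changes the relevant expectations by $O(1/n)$, appealing to the uniform boundedness of discounted continuation payoffs (each $\delta^s y_s$ is controlled by the finite mean of $G$, and the geometric factor $\delta^s$ makes the series converge). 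The delicate step is that $U_p$ involves the \emph{entire future} decision rule $p$, so I must ensure the perturbation's effect is summable across all rounds rather than merely vanishing round-by-round; dominated convergence with the geometric envelope $\sum_s\delta^s<\infty$ is the natural tool, and verifying the dominating bound is integrable uniformly in $n$ is the part requiring genuine care.
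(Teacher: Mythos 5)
Your proposal takes essentially the same route as the paper: fix a consistent environment (the paper picks an arbitrary $F\in\mathcal F(h)$, which is nonempty), form the vanishing convex mixtures $\bigl(1-\tfrac1k\bigr)G+\tfrac1k F\in\mathcal F(h)$, and conclude that every $G\in\mathcal F$ is a limit of environments consistent with $h$, so the two infima coincide. The only difference is emphasis: you explicitly identify the continuity of $U_p(\cdot,h)$ and $V(\cdot,h)$ along the mixture path as the step needing care (and sketch the dominated-convergence argument for it), whereas the paper's proof simply establishes $\mathit{Closure}(\mathcal F(h))=\mathcal F$ and leaves that continuity implicit.
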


Proposition \ref{P:Hist} states that, when evaluating the infimum of the payoff ratio, one should take into account the set of all environments, regardless of whether or not they are consistent with the observed history. The proof is in Appendix \ref{s:phist}.

\subsubsection{Necessity to Randomize.} \label{s-ntr}
We now show that dynamically robust rules necessarily involve randomization. Stopping with certainty in any round is bad, because one might miss out a high realization in the next round. Yet, continuing forever with certainty is bad too, because this destroys the value of the outside option. We show that no deterministic rule can guarantee a better performance ratio than the rule that stops in round zero.

Specifically, if one stops and obtains $x_0$, the maximum possible foregone payoff is $\sup_{F\in \mathcal F} V(F,x_0)$. Thus, a performance ratio of $x_0/(\sup_{F\in \mathcal F} V(F,x_0))$ is trivially obtained by stopping in round zero, that is, by not searching at all.   

A decision rule $p$ is called {\it deterministic} if $p(h)\in\{0,1\}$ for each history $h $. Let $F_0$ denote the Dirac environment that almost surely generates an alternative that has value 0. The next proposition shows that deterministic decision rules cannot perform better than not searching at all, as long as the environment $F_0$ is feasible.

\begin{proposition}\label{P:Det}
Let $p$ be a deterministic decision rule. Suppose that $F_0\in \mathcal F$. Then
\[
R_{p}(x_0,\mathcal F)\le \frac{x_0}{\sup_{F\in \mathcal F} V(F,x_0)}.
\]
\end{proposition}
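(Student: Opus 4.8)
The plan is to locate, for an arbitrary deterministic rule $p$, a single history at which the payoff ratio is already at most $x_0/\sup_{F\in\mathcal F}V(F,x_0)$; since $R_p$ is an infimum over histories, this bounds $R_p$. By Proposition~\ref{L:Mixed} I may restrict to pure environments and write
\[
R_p(x_0,\mathcal F)=\inf_{h}\ \inf_{F\in\mathcal F(h)}\frac{U_p(F,h)}{V(F,h)}.
\]
Write $S=\sup_{F\in\mathcal F}V(F,x_0)$; since $V(F,x_0)=\max\{x_0,c_F\}\ge x_0$, we have $S\ge x_0>0$ and $x_0/S\in(0,1]$. The key observation is that at \emph{any} history $h$ with best-so-far $y_h=x_0$ at which $p$ stops, the stopping payoff is $U_p(F,h)=x_0$ for every $F$, while $V(F,h)=\max\{x_0,c_F\}=V(F,x_0)$; hence the ratio at $h$ is $x_0/V(F,x_0)$, and by Proposition~\ref{P:Hist} its infimum over consistent environments equals $\inf_{F\in\mathcal F}x_0/V(F,x_0)=x_0/S$.

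With this in hand I would split on $p(h_0)$. If $p(h_0)=1$, then $h_0=(x_0)$ is itself such a stopping history (its best-so-far is $x_0$ and $\mathcal F(h_0)=\mathcal F$), so the ratio there is $x_0/S$ and $R_p\le x_0/S$. If $p(h_0)=0$, I would walk down the all-zero path $h_t=(x_0,0,\dots,0)$, each term of which is consistent with $F_0$ because $F_0\in\mathcal F$, and along which the best-so-far stays equal to $x_0$. Either $p$ stops at some first round $t^*\ge1$, so that $h_{t^*}$ is a stopping history with best-so-far $x_0$ and the observation above yields ratio $x_0/S$ at $h_{t^*}$; or $p$ never stops on this path, in which case the recursion gives $U_p(F_0,h_0)\le\lim_{t\to\infty}\delta^{t}x_0=0$ while $V(F_0,h_0)=x_0$ (as $c_{F_0}=0$), so the ratio at $h_0$ under $F_0$ is $0\le x_0/S$. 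In all cases $R_p\le x_0/S$.

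The step carrying the real content is the passage from $\inf_{F\in\mathcal F(h)}$ to $\inf_{F\in\mathcal F}$ at the stopping history $h_{t^*}$. Consistency with $(x_0,0,\dots,0)$ forces positive mass on $0$, and a priori this could lower $\sup_{F\in\mathcal F(h_{t^*})}V(F,x_0)$ strictly below $S$; Proposition~\ref{P:Hist} guarantees it does not, because any $F$ with $V(F,x_0)$ near $S$ can be nudged toward $F_0$ via the mixture $(1-\eps)F+\eps F_0$, which is consistent with the history and has $V(\cdot,x_0)\to V(F,x_0)$ as $\eps\to0$. This is exactly where the convexity of $\mathcal F$ underlying Proposition~\ref{P:Hist} is used, and it is what rules out a deterministic rule ``gambling'' on continuing and thereby beating the stop-now ratio. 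The ancillary facts, that $c_{F_0}=0$ so $V(F_0,x_0)=x_0$ and that discounting sends the never-stopping payoff to $0$, are routine.
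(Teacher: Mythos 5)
Your proof is correct and follows essentially the same route as the paper's: split according to whether the deterministic rule ever stops along the all-zero path (which is consistent with $F_0\in\mathcal F$); at the first stopping history the payoff ratio is $x_0/V(F,x_0)$, whose infimum gives the bound, and if the rule never stops, $F_0$ forces the ratio to $0$. If anything, you are more careful than the paper on the one substantive step: the paper's proof silently replaces $\inf_{F\in\mathcal F(h)}$ by $\inf_{F\in\mathcal F}$ at a stopping history containing $k\ge 1$ zeros, whereas you correctly flag that this passage rests on Proposition \ref{P:Hist} and hence on convexity of $\mathcal F$.
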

In particular, if the set of alternatives $X$ is unbounded and all distributions are feasible, so $\mathcal F=\mathcal F_X$, then $R_{p}(x_0,\mathcal F)=0$. The proof is in Appendix \ref{s:det}. 

\begin{remark}
Proposition \ref{P:Det} sheds light on the performance of decision rules used by Bayesian decision makers. By definition, any such rule is optimal for some prior. It stops the search if the best-so-far alternative is better than the expected continuation payoff under this prior, and continues otherwise. So, it is generically deterministic.\footnote{Indifference between stopping and continuing under a given prior is nongeneric, in the sense that it does not hold under an open set of priors in the neighborhood of that prior.}$^,$\footnote{This genericity follows from our assumption that the distribution of alternatives is exogenous. In Janssen et al.~(\citeyear*{Janssen2017}) the distribution is endogenous, and the equilibrium Bayesian search rule is nondeterministic.} Thus, by Proposition \ref{P:Det}, for some priors, this rule is never better than not searching at all. 
\end{remark}

\section{Binary Environments}\label{s:binary}
Consider the simple case in which feasible environments can have at most one value above the outside option. We call such environments {\it binary}. This case is relevant for applications where the individual knows what she is looking for, she just does not know whether she will find it and, if so, how valuable it will be. 

An environment is called {\it binary}, denoted by $F_{(z,\sigma)}$, if it is a lottery over two values, $0$ and $z$, with probabilities $1-\sigma$ and $\sigma$, respectively. The assumption that the low alternative has value 0 is for convenience: the results do not change, as long as at most one alternative above the outside option realizes with positive probability. Even if the individual does not know the value of this alternative at the outset, she immediately knows it after it has realized, and stops the search. In particular, the assumption of free recall plays no role for these environments.

Given a set $X$ of feasible alternatives, we denote by ${\mathcal B}_X$ the set of all binary environments over $X$, so
\[
{\mathcal B}_X=\{F_{(z,\sigma)}: z\in X,\,\sigma\in[0,1]\}.
\] 
A special case of only two feasible alternatives, $X=\{0,z\}$, captures the situation where the individual knows the value of the high alternative. In this case, the only unknown parameter is how likely the high alternative emerges in each next round.

When facing a set ${\mathcal B}_X$ of binary environments, any decision rule is fully described by a sequence of probabilities
\[
{q}=(q_0,q_1,q_2,...),
\]
where $q_t$ is the probability to stop in round $t$ conditional on only alternative 0 realized in rounds $1,...,t$. 

A decision rule $\bar q$ is {\it stationary} if its stopping probability is constant, so $q_0=q_1=q_2=...$. We will show that a particular stationary rule is dynamically robust in binary environments. 

We now present our result for binary environments.

\begin{theorem}\label{T:HalfP}
The stationary decision rule with the stopping probability $\ul q^*=\frac{1-\delta}{2-\delta}$

(a) attains the performance ratio $1/2$;

(b) is dynamically robust if $\sup X=\infty$.
\end{theorem}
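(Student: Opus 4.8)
The plan is to reduce the problem to pure binary environments $F_{(z,\sigma)}$ via Proposition~\ref{L:Mixed}, and then to isolate the histories that bind. As long as only values $\le x_0$ (in particular only $0$'s) have been drawn, the best-so-far stays at $x_0$; and the moment a draw $z>x_0$ occurs, stopping is optimal for both the rule and the benchmark, so the payoff ratio there is $1$. Hence the only binding histories are the all-zero histories $h_t^0$ at which the best-so-far equals $x_0$. For $F_{(z,\sigma)}$, Weitzman's formula~\eqref{E:RV} gives the reservation value $c_F=\delta\sigma z/(1-\delta+\delta\sigma)$, so $V(F_{(z,\sigma)},h_t^0)=\max\{x_0,c_F\}$ by~\eqref{OptV}. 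Writing $U$ for the stationary rule's payoff at an all-zero history, the recursion \eqref{U-0} reads $U=\ul q\,x_0+(1-\ul q)\delta[\sigma z+(1-\sigma)U]$, which I solve for $U$ in closed form.

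For part (a) I substitute $\ul q=\ul q^*=\tfrac{1-\delta}{2-\delta}$ (so $1-\ul q^*=\tfrac{1}{2-\delta}$), obtaining $U=\frac{(1-\delta)x_0+\delta\sigma z}{2(1-\delta)+\delta\sigma}$, and I verify $U/V\ge\tfrac12$ in the two regimes. When $V=x_0$ the inequality collapses to $2z\ge x_0$, true since $z>x_0$; when $V=c_F$ one checks that $U/V$ is decreasing in $z$ with limit $\tfrac{a+b}{2a+b}$ where $a=1-\delta,\ b=\delta\sigma$, so $U/V\ge\tfrac{a+b}{2a+b}\ge\tfrac12$. To see that $\tfrac12$ is attained I exhibit the worst case, the Dirac environment $F_0$ that always returns $0$: there the best-so-far is always $x_0$, $V=x_0$, and the rule gives $U=\frac{\ul q^* x_0}{1-(1-\ul q^*)\delta}=\tfrac{x_0}{2}$, a ratio of exactly $\tfrac12$. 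Thus $R_{\ul q^*}(x_0,\mathcal B_X)=\tfrac12$ for every $X$.

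For part (b) I take an arbitrary rule, described on the all-zero path by $(q_t)$, and confront it with two adversaries at $h_0$; since $R_p$ is an infimum, each yields an upper bound. Against $F_0$ the ratio is $\Psi_0$, where $\Psi_t=q_t+(1-q_t)\delta\Psi_{t+1}$. Against a vanishing-probability, diverging-value environment (first $z\to\infty$, then $\sigma\to0$, which is exactly where $\sup X=\infty$ is used) the limiting ratio is $\psi_0=(1-\delta)\Phi_0$, where $\Phi_t=(1-q_t)(1+\delta\Phi_{t+1})$. Both are limits of genuine payoff ratios, so $R_p\le\Psi_0$ and $R_p\le\psi_0$. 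The crux is the identity $\Psi_t+(1-\delta)\Phi_t=1$ for every $t$: setting $D_t=\Psi_t+(1-\delta)\Phi_t-1$ and combining the two recursions gives $D_t=(1-q_t)\delta\,D_{t+1}$, and since $(D_t)$ is bounded while $\delta<1$, iteration forces $D_t=0$. Hence $\Psi_0+\psi_0=1$, so $\min\{\Psi_0,\psi_0\}\le\tfrac12$ and $R_p\le\tfrac12$ for every rule. With part (a) this gives $R^*(x_0,\mathcal B_X)=\tfrac12=R_{\ul q^*}(x_0,\mathcal B_X)$, so $\ul q^*$ is dynamically robust.

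The main obstacle is the analysis behind part (b): correctly computing the high-value limiting ratio as $(1-\delta)\Phi_t$—in particular producing the factor $1-\delta$ from the order of the double limit—and then recognising the exact complementarity $\Psi_t+(1-\delta)\Phi_t=1$ that balances the two opposing worst cases at $\tfrac12$. This is also precisely where $\sup X=\infty$ is indispensable: with bounded support the high-value adversary cannot push its ratio down to $(1-\delta)\Phi_t$, the identity weakens to a strict inequality in the rule's favour, and a guarantee above $\tfrac12$ becomes attainable.
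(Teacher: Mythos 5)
Your proof is correct. Part (a) is essentially the paper's own verification: your two-regime computation is the same as \eqref{e:bin-expl1}--\eqref{e:bin-expl2}, with the worst case pinned down at $\sigma=0$ (the Dirac environment $F_0$). Part (b), however, takes a genuinely different route. The paper first proves Proposition \ref{P:StatB} --- any rule $(q_t)$ can be replaced by a stationary rule that does weakly better against every binary environment --- via the long variational/induction argument of Lemma \ref{L:Seq}, and then solves the single-variable maximin problem \eqref{e-R30} in closed form, obtaining \eqref{Rule:Binary} and \eqref{E:R-Bin}; Theorem \ref{T:HalfP} is then read off as the limit $x_0/\bar x\to 0$. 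You bypass the stationarity reduction entirely: you bound an \emph{arbitrary} rule by just two adversaries, $F_0$ (ratio $\Psi_0$) and the diverging-value, vanishing-probability family (limiting ratio $(1-\delta)\Phi_0$, where taking $z\to\infty$ before $\sigma\to 0$ is indeed what produces the factor $1-\delta$ and what uses $\sup X=\infty$), and close with the conservation identity $\Psi_0+(1-\delta)\Phi_0=1$, proved cleanly from $D_t=(1-q_t)\delta D_{t+1}$ plus boundedness. All steps check out; it is worth noting that your ingredients secretly live in the paper's Appendix A.4: your $\Phi_t$ is the paper's $\psi^0_t(q)$ from \eqref{E:A-TR2}, and your identity is precisely \eqref{E:A-TR0} specialized to $\sigma=0$, but the paper deploys these objects to prove stationarity rather than to kill the bound at $t=0$ directly. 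The trade-off is clear: the paper's heavier machinery also delivers Theorem \ref{T:Half} --- the exact dynamically robust rule and ratio for bounded $X$ --- whereas your argument yields only the unbounded-case cap of $1/2$ (as you note, the identity degrades to an inequality favouring the rule when $\bar x<\infty$); in exchange you get a short, self-contained proof of Theorem \ref{T:HalfP}(b) that makes transparent \emph{why} the answer is $1/2$: the two worst cases are exactly complementary. Two small points of hygiene: your first-regime reduction ``$2z\ge x_0$'' presupposes $z>x_0$ (for $z\le x_0$ use $\max\{z,x_0\}$ as in \eqref{e-U}, which gives the ratio $\frac{(1-\delta)+\delta\sigma}{2(1-\delta)+\delta\sigma}\ge \frac 1 2$), and in part (b) the restriction to rules that stop upon drawing $z$ is the paper's convention for binary environments and is harmless in any case, since continuing after $z$ only lowers the rule's payoff and so preserves the bound $R_p\le\psi_0$.
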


Theorem \ref{T:HalfP} shows that one can always guarantee at least $1/2$ of the optimal payoff against binary environments. Moreover, this bound is tight if the set of feasible alternatives is unbounded. 

We now sketch the argument why this bound is attained. The probability of stopping $\ul q^*=\frac{1-\delta}{2-\delta}$ balances the payoff ratio between environments where it is optimal to stop and where it is optimal to keep searching until the high alternative realizes. Let us fix an outside option $x_0$. To simplify notation, we write $v^*_{(z,\sigma)}$ for the optimal payoff in a binary environment $F_{(z,\sigma)}$, and $u^*_{(z,\sigma)}$ for the individual's payoff from the rule that stops with constant probability $\ul q^*$ in that environment.
Observe that 
\begin{equation*}
u^*_{(z,\sigma)}=\ul q^* x_0+(1-\ul q^*)\delta (\sigma \max\{z,x_0\}+(1-\sigma)u^*_{(z,\sigma)}).
\end{equation*}
Substituting  $\ul q^*=\frac{1-\delta}{2-\delta}$ and solving for $u^*_{(z,\sigma)}$ yields
\[
u^*_{(z,\sigma)}=\frac{(1-\delta)x_0+\delta\sigma  \max\{z,x_0\}}{2(1-\delta)+\delta \sigma}.
\]
First, consider an environment where it is optimal to stop immediately, so $v^*_{(z,\sigma)}=x_0$. The payoff ratio is
\begin{equation}\label{e:bin-expl1}
\frac{u^*_{(z,\sigma)}}{v^*_{(z,\sigma)}}=\frac{(1-\delta)x_0+\delta\sigma  \max\{z,x_0\}}{2(1-\delta)+\delta \sigma}\cdot\frac 1 {x_0}\ge \frac{(1-\delta)x_0+\delta\sigma  x_0}{(2(1-\delta)+\delta \sigma)x_0}\ge \frac 1 2,
\end{equation}
where the first inequality by $\max\{z,x_0\}\ge x_0$, and the second inequality is because the ratio is increasing in $\sigma\in[0,1]$.

Second, consider an environment where $z>x_0$ and, moreover, it is optimal to search until $z$ realizes, so the optimal payoff is
\[
v^*_{(z,\sigma)}=\delta(\sigma z+(1-\sigma)v^*_{(z,\sigma)}).
\]
Solving for $v^*_{(z,\sigma)}$ yields $v^*_{(z,\sigma)}=\delta\sigma z/(1-\delta+\delta\sigma)$. The payoff ratio is
\begin{equation}\label{e:bin-expl2}
\frac{u^*_{(z,\sigma)}}{v^*_{(z,\sigma)}}=\frac{(1-\delta)x_0+\delta\sigma  z}{2(1-\delta)+\delta \sigma}\cdot\frac{1-\delta+\delta\sigma}{\delta\sigma z}\ge \frac{\delta\sigma  z}{2(1-\delta)+\delta \sigma}\cdot\frac{1-\delta+\delta\sigma}{\delta\sigma z}\ge \frac 1 2,
\end{equation}
where the first inequality is by $(1-\delta)x_0\ge 0$ and the second inequality is because the ratio is increasing $\sigma\in[0,1]$. Finally, notice that inequality \eqref{e:bin-expl1} holds as equality when $\sigma=0$ and would be violated for any stopping probability smaller than $\ul q^*$; and inequality \eqref{e:bin-expl2} holds as equality when $\sigma z\to\infty$ and would be violated for any stopping probability greater than $\ul q^*$. So, the performance ratio cannot be improved upon when $\sup X=\infty$. The formal proof is in Section \ref{s-s}. 

We achieve a better performance when environments are bounded. For each $x\in[0,1]$ define
\begin{equation}\label{E:PS}
q^*(x)=\frac{2(1-\delta)}{4-2\delta+ x-\sqrt{ x\left( x+8\right)}}\quad\text{and}\quad
\rho\left( x\right)=\frac 1 2+\frac 1 8\left( x+\sqrt{ x\left( x+8\right)}\right).
\end{equation}

\renewcommand{\thetheorem}{\arabic{theorem}$'$}
\addtocounter{theorem}{-1}
\begin{theorem}\label{T:Half}
Let $\bar x=\sup X<\infty$ and let $0<x_0\le \bar x$. The stationary decision rule with the stopping probability $q^*(x_0/\bar x)$

(a) attains the performance ratio $\rho(x_0/\bar x)>1/2$;

(b) is dynamically robust if $x_0/\bar x \le \delta^2/(2-\delta)$.
\end{theorem}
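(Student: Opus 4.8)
My plan is to follow the template of the sketch for Theorem~\ref{T:HalfP}, adapting each step to the bounded case $z\le\bar x$ and to the larger stopping probability $q^*:=q^*(x_0/\bar x)$ from \eqref{E:PS}. First I would reduce the performance ratio to a single scalar optimisation. By Proposition~\ref{L:Mixed} it suffices to consider pure binary environments $F_{(z,\sigma)}$. Because the candidate rule is stationary, the continuation problem after any history whose best-so-far alternative is still $x_0$ is identical to the one at round $0$; and after any history in which a value $z>x_0$ has been observed the rule stops and obtains $z=V(F_{(z,\sigma)},z)$, so the payoff ratio there is $1$. Hence $R_{p^*}(x_0,\mathcal B_X)$ equals the infimum of the payoff ratio $u_{(z,\sigma)}/v^*_{(z,\sigma)}$ of the stationary rule with probability $q^*$, evaluated at the initial history, exactly as in the unbounded sketch. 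I would then split the environments into the stop region ($v^*_{(z,\sigma)}=x_0$) and the search region ($v^*_{(z,\sigma)}=\delta\sigma z/(1-\delta+\delta\sigma)$).

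For part~(a) I would show that both regional infima equal $\rho(x_0/\bar x)$. In the stop region, using $\max\{z,x_0\}\ge x_0$ and the same monotonicity in $\sigma$ as in \eqref{e:bin-expl1}, the infimum is attained at $\sigma=0$ and equals $q^*/(1-(1-q^*)\delta)$; substituting the closed form of $q^*(x)$, a short computation using the identity $(4+x)^2-x(x+8)=16$ shows that this equals $\rho(x_0/\bar x)$. In the search region the payoff ratio is decreasing in $z$ for each fixed $\sigma$ (its only $z$-dependence is the term $q^*x_0/(\delta\sigma z)$), so the worst case sets $z=\bar x$; minimising the resulting one-variable expression over $\sigma$ leads to a quadratic whose interior root produces the term $\sqrt{x(x+8)}$ and whose value is again $\rho(x_0/\bar x)$. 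Since the stop-region bound is attained with equality at the Dirac environment $\sigma=0$ and the search region is bounded below by the same number, this yields $R_{p^*}=\rho(x_0/\bar x)$, and the explicit form of $\rho$ makes $\rho>1/2$ immediate.

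For part~(b) I would prove the matching upper bound $R^*(x_0,\mathcal B_X)\le\rho(x_0/\bar x)$. The cleanest route is to argue that it suffices to optimise over stationary rules. Dynamic consistency requires the payoff ratio to exceed $R^*$ after every history; by the reasoning behind Proposition~\ref{P:Hist}, after every all-$0$ history the closure of the feasible set of binary environments is the same as at round $0$, so the individual repeatedly faces an identical decision problem, and no rule can beat the best stationary one. Maximising over the single stopping probability $q$ then reduces to balancing the stop-region value $q/(1-(1-q)\delta)$ against the search-region value at $z=\bar x$; this balance is solved precisely by $q=q^*(x_0/\bar x)$ and gives the common value $\rho(x_0/\bar x)$, provided the search-region minimiser $\sigma^*$ is feasible. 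The hypothesis $x_0/\bar x\le\delta^2/(2-\delta)$ is exactly the condition $\sigma^*\le1$: for larger outside options the search-region worst case would require $\sigma>1$, the boundary $\sigma=1$ binds instead, a larger stopping probability becomes admissible, and $q^*(x_0/\bar x)$ need no longer be optimal.

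The main obstacle is making the reduction to stationary rules rigorous. A direct first-period argument — at each all-$0$ history the rule must stop often enough to survive the Dirac environment $F_{(z,0)}$, yet seldom enough to survive a high-value environment $F_{(\bar x,\sigma)}$ — yields only a weaker bound, because crudely replacing the continuation ratio by $1$ shifts the binding frequency to $\sigma=1$, whereas the genuine worst case against $F_{(\bar x,\sigma^*)}$ materialises along an entire all-$0$ branch rather than in a single round. I would therefore set up the two coupled recursions for the payoff ratios against $F_{(z,0)}$ and $F_{(\bar x,\sigma^*)}$ along the all-$0$ branch, take infima over histories, and show that the resulting system admits a value above $\rho$ only when $\sigma^*>1$, i.e. only when the hypothesis fails. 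The remaining work is the algebra of the search-region minimisation and the verification of the closed forms in \eqref{E:PS}, which stays tractable thanks to the identity $(4+x)^2-x(x+8)=16$.
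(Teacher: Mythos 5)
Part (a) of your proposal is correct and essentially reproduces the paper's own computation: Proposition \ref{L:Mixed} reduces the problem to pure binary environments, stationarity of the candidate rule reduces the performance ratio to the round-zero payoff ratio, the Dirac case $\sigma=0$ gives $q^*/(1-\delta(1-q^*))=\rho(x_0/\bar x)$ via the identity $(4+x)^2-x(x+8)=16$, and the search region has worst case $z=\bar x$, exactly as in \eqref{e-r0}--\eqref{e-r1}. Your reading of the hypothesis in part (b) — that $x_0/\bar x\le\delta^2/(2-\delta)$ is precisely feasibility $\bar\sigma\le 1$ of the interior worst-case probability — also matches the case structure of \eqref{Rule:Binary}.

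The genuine gap is in part (b), and it is exactly the step you defer: ruling out non-stationary rules. Your first argument — that by the reasoning behind Proposition \ref{P:Hist} the individual ``repeatedly faces an identical decision problem, and no rule can beat the best stationary one'' — is not a proof. Proposition \ref{P:Hist} only says that the feasible set of environments does not shrink along all-$0$ histories; since the performance criterion is an infimum over \emph{all} rounds of forward-looking ratios, the ratio at round $t$ depends on the rule's behaviour at every later round, and an unchanging feasible set does not by itself exclude that intertemporal variation of the stopping probabilities raises the worst-case ratio. This is why the paper proves a separate result, Proposition \ref{P:StatB}, whose proof (Appendix \ref{s:statb}, via Lemma \ref{L:Seq}) is a multi-page variational argument: it fixes an arbitrary sequence $q'$, defines the constant rule matching its round-$0$ payoff against $\sigma=0$, and shows by a finite-horizon induction, perturbing $(q_{k-1},q_k)$ while holding $\psi^{\sigma}_{k-1}$ fixed, that the constant rule's payoff against \emph{every} $F_{(z,\sigma)}$ is sandwiched between $\inf_t$ and $\sup_t$ of the original rule's payoffs.

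Your fallback plan — the two coupled recursions against $F_{(z,0)}$ and $F_{(\bar x,\sigma^*)}$ only, together with the claim that the system ``admits a value above $\rho$ only when $\sigma^*>1$'' — is a statement of what must be proven, not an argument for it. Worse, it is a \emph{stronger} claim than Proposition \ref{P:StatB} delivers: restricting the adversary to two environments can only help the rule, so the identity $\sup_p\inf_{F\in\mathcal B_X}(\cdot)=\rho$ does not imply that the supremum over rules of the minimum over your two environments is also $\rho$; that has to be established independently. The claim is plausibly true (one can check, for instance, that alternating-hazard perturbations around the constant rule $q^*$ worsen both ratios to first order), but the difficulty you yourself identify — that the worst round for the $\sigma=0$ ratio and the worst round for the $\sigma^*$ ratio need not coincide, so no single-round or fixed-point inequality closes the argument — is precisely what forces a Lemma \ref{L:Seq}-style global analysis of the coupled system. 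Until that lemma (or Proposition \ref{P:StatB} itself) is actually proven, your part (b) remains an outline of the paper's second stepping stone rather than a proof of it.
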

\renewcommand{\thetheorem}{\arabic{theorem}}

\begin{remark} If $x_0/\bar x >\delta^2/(2-\delta)$, then the rule $q^*(x_0/\bar x)$  it is not dynamically robust (so, a higher performance ratio can be attained). Yet $q^*(x_0/\bar x)$ attains the performance ratio $\rho(x_0/\bar x)$ which is already very good in this case:
\[
\rho(x_0/\bar x)>\rho\left(\frac{\delta^2}{2-\delta}\right)=\frac 1 {2-\delta}>\delta \ \ \text{for all}  \ x_0/\bar x>\frac{\delta^2}{2-\delta}.
\]
The dynamically robust rule and its performance ratio for all $x_0/\bar x\in (0,1]$ are derived in Section \ref{s-s} below (see \eqref{Rule:Binary} and \eqref{E:R-Bin}). 
\end{remark}

Theorem \ref{T:Half} shows that one can guarantee more than $1/2$ if the set of alternatives is bounded. How much more one can guarantee depends on how large the outside option $x_0$ is relative to the highest feasible alternative $\bar x$. In fact, if $x_0$ is extremely small, then the performance ratio is close to $1/2$. Yet one can guarantee at least $2/3$ and $3/4$ of the optimal payoff if $x_0/\bar x$ exceeds, respectively, $1/6$ and $1/3$. Table 1 illustrates the performance ratio $\rho(x_0/\bar x)$ for a few values of $x_0/\bar x$.

\begin{table}[!htb]
\[
\begin{array}{r|cccccccc}
\text{$x_0/\bar x$} & 1/89 & 1/20 & 1/10 & 1/6 & 1/5 & 1/4 & 1/3 & 1/2 \\
\hline
\text{$\rho(x_0/\bar x)$} & 0.538 & 0.552 & 0.625 & 0.666 & 0.685 & 0.71 & 0.75 & 0.82
\end{array}
\]
\caption{\small Some values of the performance ratio of rule ${q}^*$}\label{T:1}
\end{table}


\begin{figure}[!h]
\begin{center}
\includegraphics[width=0.5\textwidth]{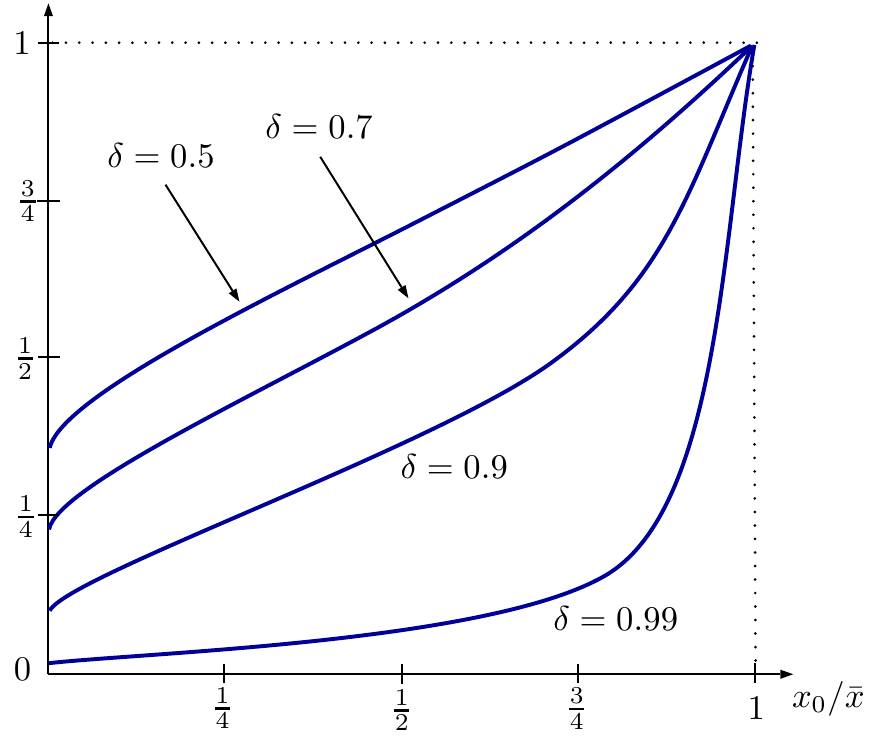}
\caption{\small Stopping probability $q^*(x_0/\bar x)$ with values of the discount factor $\delta=0.5$, $0.7$, $0.9$, and $0.99$.}
\label{F:0}
\end{center}
\end{figure}

Notice that the stopping probability $q^*(x_0/\bar x)$ of the stationary rule is increasing in $x_0/\bar x$. Figure~\ref{F:0} illustrates this stopping probability for some values of the discount factor.

Curiously, the performance ratios identified in Theorems \ref{T:HalfP} and \ref{T:Half} are independent of the discount factor $\delta$. Intuitively, this is because both the individual's payoff from the rule given by \eqref{E:PS} and the optimal payoff $V$ are evaluated using the same discount factor. So, when following a dynamically robust rule, a more patient individual simply waits longer in expectation. 

\subsection{Proof of Theorems \ref{T:HalfP} and \ref{T:Half}.} \label{s-s}
Part (a) in each of the theorems can be proven by a simple verification that the specified stationary decision rule yields at least the claimed performance ratio. The proof of part (b) is more involved as it requires to show that there is no other rule that attains a higher performance ratio. 

Specifically, there are two main stepping stones to the proof of Theorems \ref{T:HalfP} and \ref{T:Half}. The first stepping stone was provided in  Proposition \ref{L:Mixed}, where we showed that the performance ratio can be determined by looking only at the pure environments, so we do not need to worry about mixed environments and priors. 

The second stepping stone, which we now establish, is that we can restrict attention to stationary decision rules without loss of generality. This simplifies the problem tremendously, as any stationary rule is described by a single parameter: the constant stopping probability. So, it becomes a single-variable optimization problem.

Clearly, the individual should stop the search after observing any alternative other than zero, as she then knows that such an alternative is the best possible. However, as long as only zero-valued alternatives have realized, this history is irrelevant for the evaluation of the performance ratio, as shown by Proposition \ref{P:Hist}. That is, the individual faces the exact same problem again and again, as long as she draws zero-valued alternatives. Given this unchanging problem, there are neither fundamental nor strategic reasons to condition decisions on the history. We now show that this intuition is correct, so we can search for a dynamically robust rule among stationary rules.

\begin{proposition}\label{P:StatB}
For each decision rule ${q}$ there exists a stationary decision rule $\bar {q}$ such that $R_{\bar{q}}(x_0,\mathcal B_X)\ge R_{q}(x_0,\mathcal B_X)$.
\end{proposition}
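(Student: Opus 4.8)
The plan is to reduce the evaluation of $R_q(x_0,\mathcal B_X)$ to the all-zero histories, show that the decision problem faced after each such history is literally the same, and only then replace $q$ by a constant rule. First I would dispose of the histories containing a positive draw. In $\mathcal B_X$ every environment has a single admissible positive value, so after observing any $w>0$ the best-so-far $\max\{x_0,w\}$ already equals the largest value attainable under any environment consistent with that history; stopping is therefore optimal there and yields payoff ratio $1$. Hence no such history can bind the infimum defining $R_q$, and it suffices to track the stopping probabilities $q=(q_0,q_1,\dots)$ along the all-zero histories.

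Next I would make the ``identical subproblem'' claim precise. At an all-zero history $h_t$ one has $y_t=x_0$ and $V(F,x_0)=\max\{x_0,c_F\}$, which does not depend on $t$; moreover the set of environments consistent with an all-zero history of any length $t\ge 1$ is $\{F_{(z,\sigma)}:z\in X,\ \sigma<1\}$, again independent of $t$, and its closure is all of $\mathcal B_X$. Since the payoff ratio $U_{q^{(t)}}(F,x_0)/V(F,x_0)$ is continuous in $\sigma$, the per-history infimum equals $\Phi(q^{(t)}):=\inf_{F\in\mathcal B_X}U_{q^{(t)}}(F,x_0)/V(F,x_0)$, where $q^{(t)}=(q_t,q_{t+1},\dots)$ is the tail rule. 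This is the $\mathcal B_X$-analogue of Proposition \ref{P:Hist}; because $\mathcal B_X$ is \emph{not} convex one cannot quote that proposition verbatim, but the underlying continuity-and-density argument applies. The upshot is the clean identity $R_q(x_0,\mathcal B_X)=\inf_{t\ge 0}\Phi(q^{(t)})$, in which $t$ enters only through the tail.

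The heart of the argument is stationarization. I would first record that the infimum defining $\Phi$ is delivered, in the limit, by two extremal families: the degenerate environment $F_0$ (where stopping is optimal) and the arbitrarily valuable environments with $z\uparrow\sup X$ and $\sigma\downarrow 0$ (where searching is optimal); monotonicity in $\sigma$, exactly as in the sketch of Theorem \ref{T:HalfP}, collapses the continuum to these endpoints. Writing $a_t$ and $\rho_t$ for the two associated ratio-sequences gives, in the unbounded case, the one-step recursions $a_t=q_t+(1-q_t)\delta a_{t+1}$ and $\rho_t=(1-q_t)\bigl[(1-\delta)+\delta\rho_{t+1}\bigr]$, so that $w_t:=1-(a_t+\rho_t)$ satisfies $w_t=(1-q_t)\delta\,w_{t+1}$. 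As $a_t,\rho_t\in[0,1]$ the sequence $w$ is bounded and $|w_t|\le\delta^{n}\to 0$, forcing $a_t+\rho_t\equiv 1$; hence $\Phi(q^{(t)})\le\min\{a_t,\rho_t\}\le\tfrac12$ for every rule and every date. Since the stationary rule $\ul q^*=\tfrac{1-\delta}{2-\delta}$ attains exactly $\tfrac12$, taking $\bar q=\ul q^*$ gives $R_{\bar q}\ge R_q$. For bounded $X$ the high-value family uses $z=\bar x$ in place of $z\to\infty$; the same telescoping produces the sharper two-sided envelope, and the maximizer of the stationary performance $\Psi(\beta)=\min\{\alpha(\beta),\gamma(\beta)\}$ (the crossing point of an increasing and a decreasing envelope, which exists by continuity of $\Psi$ on the compact interval $[0,1]$) serves as $\bar q$.

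The main obstacle is the stationarization step, and within it the claim that for an \emph{arbitrary} non-stationary tail $q^{(t)}$ the worst-case environments are still the two extremal families. For stationary rules this is the monotonicity-in-$\sigma$ check already used for the theorems, but for a general tail the map $U_{q^{(t)}}(F_{(z,\sigma)},x_0)$ has a date-dependent shape, so one must verify that the infimum over the two-parameter family $(z,\sigma)$ is nonetheless attained (in the limit) at these endpoints at every date; this is where the real work lies. A secondary, purely technical point is the convexity caveat noted above: Proposition \ref{P:Hist} is stated for convex $\mathcal F$, whereas $\mathcal B_X$ is not convex, so the history-irrelevance step must be re-derived from continuity in $\sigma$ rather than invoked directly.
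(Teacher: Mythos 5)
Your argument for the case $\sup X=\infty$ is correct and genuinely different from the paper's proof: the conservation law $a_t+\rho_t\equiv 1$ (forced by $w_t=(1-q_t)\delta w_{t+1}$ and boundedness) yields the universal bound $R_q(x_0,\mathcal B_X)\le 1/2$ for \emph{every} rule, after which you invoke the stationary rule $\ul q^*=\tfrac{1-\delta}{2-\delta}$, whose ratio $1/2$ is a direct computation. The paper never computes a value at all: for an arbitrary benchmark $q'$ it constructs a rule-specific constant $\bar q$ matching $U_{q'}(F_{(z,0)},0)$, and proves domination environment-by-environment using the normalized quantities $\psi^\sigma_t$ and a variational lemma (Lemma \ref{L:Seq}) showing that, subject to keeping each $\psi^\sigma_s$ between $\inf_t\psi^\sigma_t(q')$ and $\sup_t\psi^\sigma_t(q')$, the extreme values of $\psi^0_0$ are attained by constant sequences. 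Your route is shorter where it works, but it only works where the worst case collapses to the two families you name.

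That is exactly where the gap lies: the bounded case, which is the case Theorem \ref{T:Half} (and also Part 1 of the proof of Proposition \ref{P:Stat}, which applies Proposition \ref{P:StatB} with $X=\{0,W(z)\}$) actually needs. When $\bar x=\sup X<\infty$, the binding high-value environment for the stationary optimum has $z=\bar x$ and an \emph{interior} worst-case probability $\bar\sigma>0$ (see \eqref{Rule:Binary}), not $\sigma\downarrow 0$. For fixed $\sigma>0$ the ratio recursion against $F_{(\bar x,\sigma)}$ runs at discount rate $\delta(1-\sigma)$ while the recursion for $a_t$ runs at rate $\delta$, so no linear combination of the two telescopes; ``the same telescoping'' produces no conservation law, and you exhibit none. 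What you would need is precisely the statement that no (possibly non-stationary) sequence can keep \emph{all} the relevant ratios above the stationary maximin value $\rho(x_0/\bar x)$ --- but that is the content of the proposition itself, so as written the bounded case is assumed rather than proved. Note also that your stated ``main obstacle'' (verifying that worst-case environments remain extremal for non-stationary tails) is not the real one: in the upper-bound direction an infimum is trivially bounded by the ratio at any chosen environment; the missing step is the incompatibility argument, which the paper supplies with its benchmark construction and Lemma \ref{L:Seq}, and which your proposal would have to replace.
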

The proof is in Appendix \ref{s:statb}.

We are now ready to prove Theorem \ref{T:Half}. Theorem \ref{T:HalfP} will follow by taking the limit of $\bar x\to\infty$ for a fixed $x_0$, so that $x_0/\bar x\to 0$.

Let $X$ be a set of feasible alternatives with $\bar x=\sup X<\infty$, and let $x_0\in(0,\bar x]$ be an outside option.
By Proposition \ref{L:Mixed}, we restrict attention to the set of pure environments, $\mathcal B_X$. By Proposition \ref{P:StatB}, we consider stationary decision rules. Any such rule is identified with its constant probability of stopping, $q\in[0,1]$. 

For each binary environment $F_{(z,\sigma)}\in {\mathcal B}_X$, let $U_q(F_{(z,\sigma)},x_0)$ be the individual's expected payoff from a stationary rule $q$:
\begin{equation}\label{e-U0}
U_q(F_{(z,\sigma)},x_0)=q x_0+(1-q)\delta ((1-\sigma)U_q(F_{(z,\sigma)},x_0)+\sigma z).
\end{equation}
By \eqref{E:RV}, the reservation value $c_{F_{(z,\sigma)}}$ satisfies $c_{F_{(z,\sigma)}}=\delta(\sigma z+(1-\sigma)c_{F_{(z,\sigma)}})$. By  \eqref{OptV}, the optimal payoff is given by $V(F_{(z,\sigma)},x_0)=\max\big\{x_0,c_{F_{(z,\sigma)}}\big\}$. We thus obtain
\begin{equation}\label{E:RVS}
c_{F_{(z,\sigma)}}=\frac{\delta\sigma z}{1-\delta(1-\sigma)} \quad \text{and} \quad V(F_{(z,\sigma)},y)=\max\left\{y,\frac{\delta\sigma z}{1-\delta(1-\sigma)}\right\}.
\end{equation}
The performance ratio of rule $q$ is
\begin{align}
R_q(x_0, \mathcal B_X)&=\inf_{F\in \mathcal B_X} \frac{U_q(F_{(z,\sigma)},x_0)}{V(F_{(z,\sigma)},x_0)}=\inf_{F\in \mathcal B_X} \min\left\{\frac{U_q(F_{(z,\sigma)},x_0)}{x_0},\frac{U_q(F_{(z,\sigma)},x_0)}{c_{F_{(z,\sigma)}}}\right\}. \notag\\
&=\min\left\{\inf_{F\in \mathcal B_X} \frac{U_q(F_{(z,\sigma)},x_0)}{x_0},\inf_{F\in \mathcal B_X} \frac{U_q(F_{(z,\sigma)},x_0)}{c_{F_{(z,\sigma)}}}\right\}.\label{e-R0}
\end{align}
In words, the individual worries about two scenarios: $c_{F_{(z,\sigma)}}<x_0$, in which case it is optimal to stop immediately, and $c_{F_{(z,\sigma)}}\ge x_0$, in which case a high value of $z$ is sufficiently likely, and it is optimal to wait for it. The optimal stopping probability $q$ should be large in the first scenario and small in the second scenario, thus it should balance this tradeoff.
To find the optimal $q$, we evaluate the worst-case ratios for each of the two scenarios. 

Consider the first expression under the minimum in \eqref{e-R0}. Solving \eqref{e-U0} for $U_q(F_{(z,\sigma)},x_0)$ yields
\begin{equation}\label{e-U}
U_q(F_{(z,\sigma)},x_0)=\frac{q x_0+(1-q)\delta \sigma \max\{z,x_0\}}{1-\delta(1-\sigma)(1-q)}.
\end{equation}
We thus have
\begin{align}
\inf_{F_{(z,\sigma)}\in \mathcal B_X} \frac{U_q(F_{(z,\sigma)},x_0)}{x_0}=\inf_{z\in X,\sigma\in[0,1]} \frac{q x_0+(1-q)\delta \sigma \max\{z,x_0\}}{(1-\delta(1-\sigma)(1-q))x_0}= \frac{q}{1-\delta(1-q)},\label{e-r0}
\end{align}
where the last equality is because $\max\{z,x_0\}\ge x_0$ and the ratio is increasing in $\sigma$, and thus achieves the minimum at $\sigma=0$. So, the worst-case environments in the first scenario are those environments $F_{(z,\sigma)}$ in which $\sigma=0$, so the high alternative $z$ never occurs.

Next, consider the second expression under the minimum in \eqref{e-R0}. 
By \eqref{E:RVS} and \eqref{e-U},
\begin{align}
\inf_{F_{(z,\sigma)}\in \mathcal B_X}\frac{U_q(F_{(z,\sigma)},x_0)}{c_{F_{(z,\sigma)}}}&=\inf_{z\in X,\sigma\in[0,1]} \frac{q x_0+(1-q)\delta \sigma z}{1-\delta(1-\sigma)(1-q)}\cdot\frac{1-\delta(1-\sigma)}{\delta\sigma z}\notag\\
&= \inf_{\sigma\in[0,1]}\left( \inf_{z\in X}\frac{q \frac{x_0}{z}+(1-q)\delta \sigma}{(1-\delta(1-\sigma)(1-q))\frac{\delta\sigma}{1-\delta(1-\sigma)}}\right)\notag\\
&= \inf_{\sigma\in[0,1]}\frac{q \frac{x_0}{\bar x}+(1-q)\delta \sigma}{(1-\delta(1-\sigma)(1-q))\frac{\delta\sigma}{1-\delta(1-\sigma)}},\label{e-r1}
\end{align}
where the last equality is by $\inf_{z\in X} x_0/z=x_0/\bar x$. So, worst-case environments in the second scenario are those environments $F_{(z,\sigma)}$ in which $z=\bar x$, so $z$ is the highest possible alternative.

Thus, from \eqref{e-r0} and \eqref{e-r1}, we need to solve
\begin{align}
\max_{q\in[0,1]}\min_{\sigma\in[0,1]}\left(\min\left\{\frac{q}{1-\delta(1-q)},\frac{q \frac{x_0}{\bar x}+(1-q)\delta \sigma}{(1-\delta(1-\sigma)(1-q))\frac{\delta\sigma}{1-\delta(1-\sigma)}}\right\}\right).\label{e-R30}
\end{align}
Denote $\hat x=x_0/\bar x$. It is straightforward to verify that the unique solution $(\bar q,\bar \sigma)$ of the maximin problem \eqref{e-R30} is
\begin{align}
\bar q&=\begin{cases}
\frac{2(1-\delta)}{%
4-2\delta+\hat x-\sqrt{\hat x\left( \hat x+8\right) }}, & 
\text{if} \ 0<\hat x\leq \frac{\delta ^{2}}{2-\delta }, \\ 
\frac{\sqrt{(1-\delta)((2\delta-\hat x)^2-\delta \hat x^2)}-(1-\delta)(2\delta-\hat x)}{2\delta(\delta-\hat x)}, & \text{if} \ \frac{%
\delta ^{2}}{2-\delta }<\hat x<\delta,\\
1, & \text{if} \ \delta\le \hat x\le 1,
\end{cases}\label{Rule:Binary}\\
\bar \sigma&=\begin{cases}
\frac{(1-\delta)(3\hat x+\sqrt{\hat x\left( \hat x+8\right) })}{2\delta \left( 1-\hat x\right) }, &\text{if} \ 0<\hat x\leq \frac{\delta ^{2}}{2-\delta },\\
1,& \text{if} \ \frac{\delta ^{2}}{2-\delta }<\hat x\le 1.
\end{cases}\notag
\end{align}
We thus derived a dynamically robust decision rule $\bar q$. Substituting $\bar q$ into \eqref{e-R0} yields the dynamically robust performance ratio
\begin{equation}\label{E:R-Bin}
R^*(x_0, \mathcal B_X)=\begin{cases}
\frac 1 2+\frac 1 8\left(\hat x+\sqrt{\hat x\left(\hat x+8\right)}\right), & \text{if} \ 0<\hat x\leq \frac{\delta ^{2}}{2-\delta }, \\ 
\frac{2\delta-(1-\delta)\hat x-\sqrt{(1-\delta)((2\delta-\hat x)^2-\delta \hat x^2)}}{2\delta^2}, & \text{if} \ \frac{\delta ^{2}}{2-\delta }<\hat x<\delta,\\
1, & \text{if} \ \delta\le \hat x\le 1.
\end{cases}
\end{equation}
Finally, observe that decision rule $q^*(\hat x)$ given by \eqref{E:PS} coincides with the dynamically robust rule $\bar q$ for $\hat x=\frac{x_0}{\bar x}\le \frac{\delta ^{2}}{2-\delta }$. For all $0<\hat x\le 1$ it yields the performance ratio
\[
R_{q^*}(x_0, \mathcal B_X)=\frac{q^*(\hat x)}{1-\delta(1-q^*(\hat x))}=\frac 1 2+\frac 1 8\left(\hat x+\sqrt{\hat x\left(\hat x+8\right)}\right)=\rho(\hat x)> \frac 1 2.
\]
This completes the proof of Theorem \ref{T:Half}.

\section{General Environments}

Consider now more general environments that potentially generate multiple alternatives above the outside option. To keep the exposition simple, we fix a set of alternatives $X$ and allow for all distributions over $X$ that have finite support. So $\mathcal F=\mathcal F_X$. 

In contrast to binary environments, here the first alternative above the outside option need not be the best, so sometimes the individual may wish to search for even better alternatives. This makes decision making more complex. We deal with this complexity by building on and extending our insights obtained for the binary setting. Once again, we can restrict attention to simple decision rules, which here means that they are stationary and have some monotonicity properties. We can also restrict attention to binary environments, as only these determine the worst-case payoff ratio.

\subsection{Simplicity of Decision Rules}\label{s-stat}
By Proposition \ref{P:Hist}, histories are irrelevant for the evaluation of the performance ratio. The only payoff-relevant variable is the best-so-far alternative. Intuitively, the individual has no reason to condition decisions on anything other than the best-so-far alternative. This suggests that we can restrict attention to stationary decision rules, in which the probability of stopping in each round depends only on the best-so-far alternative. 

Formally, a decision rule $p$ is {\it stationary} if the stopping probability is the same for any pair histories $h'$ and $h''$ with same best-so-far alternative, so
\[
\max \{x:x\in h'\}=\max \{x:x\in h''\} \implies p(h')=p(h'') \quad \text{for all $h',h''\in\mathcal H(x_0)$.}
\]
With stationary decision rules, we simplify notation by replacing each history $h_t$ with the best-so-far alternative $y=\max\{x_0,x_1,...,x_t\}$. So, a stationary rule $p:\R_+\to [0,1]$ prescribes for each best-so-far alternative $y$ to stop with probability $p(y)$. For each environment $F$ and each best-so-far alternative $y$, the optimal payoff is given by
\[
V(F,y)=\max_{q\in[0,1]} \left(q y+(1-q)\delta \int_0^\infty V(F,\max\{y,x\})\df F(x)\right),
\]
and the payoff of rule $p$ is given by
\[
U_p(F,y)=p(y) y+(1-p(y))\delta \int_0^\infty U_p(F,\max\{y,x\})\df F(x).
\]

We introduce two intuitive properties of a stationary decision rule.

A stationary decision rule $p$ is {\it monotone} if $p(y)$ is weakly increasing. It is natural that the individual is more likely to accept a greater best-so-far alternative.

A stationary decision rule $p$ has the {\it monotone ratio property} if 
\begin{equation}\label{E:r-p}
r_{p}(y):=\inf_{F\in {\mathcal B}_X}\frac{U_p(F,y)}{V(F,y)} \ \ \text{is weakly increasing}.
\end{equation}
This is a ``free-disposal'' property. Suppose that the best-so-far alternative has increased from $y$ to $y'$, but the payoff ratio has decreased. Then the individual could be better off by destroying some part of the value of the best-so-far alternative and decreasing it back to $y$. 

Decision rules in general environments can be very complex. The next proposition shows we can restrict attention to much simpler decision rules, namely, those that are stationary, monotone, and have the monotone ratio property.

\begin{proposition}\label{P:Stat}
For each decision rule $p$, there exists a stationary monotone decision rule $\tilde p$ with the monotone ratio property such that $R_{\tilde p}(x_0,\mathcal F_X)\ge R_{p}(x_0,\mathcal F_X)$.
\end{proposition}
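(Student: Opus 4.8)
The plan is to derive $\tilde p$ from $p$ by three successive modifications, each of which weakly raises the performance ratio: first make the rule stationary, then monotone, and finally arrange the monotone ratio property. Throughout I lean on Proposition~\ref{P:Hist}, which (as $\mathcal F_X$ is convex) lets me evaluate the ratio over all of $\mathcal F_X$ and, crucially, tells me that the only payoff-relevant feature of a history $h$ is its best-so-far value $y=\max\{x:x\in h\}$. In particular $V(F,h)=V(F,y)$, so $R_p=\inf_h\inf_{F\in\mathcal F_X}U_p(F,h)/V(F,\max h)$. Write $R:=R_p$.

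\emph{Stationarization (the main obstacle).} For every history $h$ the continuation rule $p|_h$ (the restriction of $p$ to extensions of $h$) is itself a decision rule starting from best-so-far $\max h$, and by definition of $R$ it certifies $U_{p|_h}(F,\max h)\ge R\,V(F,\max h)$ for all $F$ and after all of its own continuations. Thus at each level $y$ I hold an entire family of continuation rules, indexed by the histories $h$ with $\max h=y$, each guaranteeing $R$ from $y$ upward. The difficulty is that these rules disagree, whereas a stationary rule must fix a single stopping probability at each $y$ and be internally consistent: the choice at $y$ feeds, via an upward draw $x>y$, into the choice at $\max\{y,x\}$. I would resolve this by a selection/fixed-point argument on the value function. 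Because the per-level problem is history-invariant (Proposition~\ref{P:Hist}), one can iterate the one-level ``reset'' behind Proposition~\ref{P:StatB} across best-so-far levels: when $\bar x=\sup X<\infty$ this is a top-down recursion, since the continuation at $y$ involves only levels $y'\ge y$ and is trivial at $y=\bar x$, and one shows by a monotone (Knaster--Tarski-type) iteration that the pointwise envelope $W(F,y)=\inf\{U_{p|_h}(F,y):\max h=y\}$ is attained by an honest stationary rule $p_1$ with $U_{p_1}(F,y)\ge R\,V(F,y)$; the case $\bar x=\infty$ follows by a limiting argument. This yields $R_{p_1}\ge R$. I expect the delicate points to be verifying that the selected stopping probabilities are mutually consistent across levels and that $W$ is realized by a genuine rule rather than merely an infimum.

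\emph{Monotonization.} Given the stationary $p_1$, I replace it by a monotone rule $p_2$ through an exchange argument that raises the stopping probability on every level where $p_1$ is locally decreasing. The point to check is that shifting mass toward stopping at \emph{high} $y$ does not lower any ratio $U_p(F,y)/V(F,y)$: when $y$ is large, stopping is weakly more attractive (the immediate payoff $y$ is large relative to the discounted continuation, and $V(F,y)=\max\{y,c_F\}$ also grows with $y$), and such changes enter the recursion at lower levels only through a nonnegative continuation term, so they cannot decrease $U_p$ at those lower levels either. Carrying this out level by level gives a stationary monotone $p_2$ with $R_{p_2}\ge R_{p_1}$.

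\emph{Monotone ratio property.} Finally I enforce that $r_{p_2}(y)=\inf_{F\in\mathcal B_X}U_{p_2}(F,y)/V(F,y)$ is weakly increasing, via the free-disposal idea indicated after \eqref{E:r-p}. If $r_{p_2}$ dips, so that $r_{p_2}(y_2)<r_{p_2}(y_1)$ for some $y_1<y_2$, then at best-so-far $y_2$ the individual may mimic her $y_1$-behavior while, by free recall, retaining the option of collecting the true best-so-far $y_2\ge y_1$ upon stopping; comparing the resulting numerator and the benchmark bound $V(F,y_2)\le (y_2/y_1)\,V(F,y_1)$ shows this substitution does not sacrifice the overall guarantee. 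Performing this disposal simultaneously at all levels replaces $r_{p_2}$ by a weakly increasing profile (its tail infimum $y\mapsto\inf_{y'\ge y}r_{p_2}(y')$) while leaving $\inf_y r$—hence the performance ratio—unchanged. The resulting rule $\tilde p$ is stationary, monotone, and has the monotone ratio property with $R_{\tilde p}\ge R_p$, as claimed.
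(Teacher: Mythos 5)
Your three-step architecture (stationarize, monotonize, enforce the monotone ratio property) matches the paper's, but two of the steps have genuine gaps. The first and largest is the stationarization step, which is precisely the hard technical content of the paper's proof, and which you reduce to an unsubstantiated assertion: you define the envelope $W(F,y)=\inf\{U_{p|_h}(F,y):\max h=y\}$ and claim a Knaster--Tarski-type iteration produces a single stationary rule attaining it. But $W(\cdot,y)$ is a pointwise infimum over continuation rules, taken separately for each $F\in\mathcal F_X$; different histories' continuations do well against different environments, and there is no reason a single rule's payoff function should coincide with (or dominate) this envelope simultaneously for all $F$ --- you flag this as a ``delicate point'' but never resolve it, and resolving it is the proof. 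The paper proceeds quite differently: it works throughout with the \emph{binary-restricted} performance $\hat R_p(x_0)=\inf_h\inf_{F\in\mathcal B_X}U_p(F,h)/V(F,h)$, defines the stationary stopping probability at each $y$ from the scalar $W(y)=\sup_h\sup_{z\le y,\sigma}U_p(F_{(z,\sigma)},h)$ (the best payoff against environments with no alternative above $y$), and then invokes Proposition~\ref{P:StatB} --- itself a long optimization argument via Lemma~\ref{L:Seq} --- to show the resulting stationary rule dominates $\inf_h U_p(F_{(z,\sigma)},h)$ for every binary environment. Your sketch gestures at ``the one-level reset behind Proposition~\ref{P:StatB}'' but never actually deploys it.

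The second gap is that you run the transformations against all of $\mathcal F_X$ rather than against $\mathcal B_X$, and this makes your monotonization claim false as stated: raising the stopping probability at high levels is \emph{not} harmless against general environments (if $F$ still has better draws above $z$, stopping surely at $z$ can crash the ratio $U_p(F,z)/V(F,z)$, since $V(F,z)=\max\{z,c_F\}$ may far exceed $z$). The argument only works for binary environments, where reaching the high value ends all uncertainty --- which is exactly why the paper's Parts 2 and 3 are phrased in terms of $r_p(y)=\inf_{F\in\mathcal B_X}U_p(F,y)/V(F,y)$. Relatedly, your final step silently equates $\inf_y r(y)$ (a binary-environment quantity) with the performance ratio over $\mathcal F_X$; that equation is Proposition~\ref{P:Binary}, which applies only to rules that are already stationary, monotone, and have the monotone ratio property. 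The paper's proof is structured specifically to avoid this circularity: it proves $R_p(x_0,\mathcal F_X)\le\hat R_p(x_0)\le\hat R_{\tilde p}(x_0)$ through the three transformations and only then applies Proposition~\ref{P:Binary} to the terminal rule $\tilde p$ to conclude $\hat R_{\tilde p}(x_0)=R_{\tilde p}(x_0,\mathcal F_X)$. Also note that Proposition~\ref{P:Hist} does not say that the best-so-far value is the only payoff-relevant feature of a history (for a non-stationary rule, $U_p(F,h)$ depends on all of $h$); it only says consistency of environments with the history is irrelevant for the infimum. What is true, and what the paper uses, is that $V(F,h)=V(F,\max h)$.
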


The proof is in Appendix \ref{s:pstat}.

\subsection{Simplicity of Worst-Case Environments}\label{s-wce}
Proposition \ref{P:Stat} shows that there exist simple dynamically robust rules. We now show that their simple nature causes worst-case environments to be very simple, too. Specifically, these environments are binary. 

\begin{proposition}\label{P:Binary} Let decision rule $p$ be stationary, monotone, and satisfy the monotone ratio property. Then
\[
R_p(x_0,\mathcal F_X)=\inf_{y\ge x_0}\inf_{F\in {\mathcal B}_X} \frac{U_p(F,y)}{V(F,y)}.
\]
\end{proposition}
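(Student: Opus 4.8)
The plan is to strip the statement down to a pointwise comparison between general and binary environments, and then to show that any finite-support environment can be replaced by a binary one without raising the payoff ratio. By Proposition~\ref{L:Mixed} we may ignore mixed priors, and since $\mathcal F_X$ is convex, Proposition~\ref{P:Hist} lets us drop the consistency-with-history restriction; because $p$ is stationary the only payoff-relevant state is the best-so-far $y$, so
\[
R_p(x_0,\mathcal F_X)=\inf_{y\ge x_0}\inf_{F\in\mathcal F_X}\frac{U_p(F,y)}{V(F,y)}.
\]
As $\mathcal B_X\subseteq\mathcal F_X$, the inequality ``$\le$'' against the binary infimum is immediate, and the monotone ratio property~\eqref{E:r-p} identifies the binary infimum as $\inf_{y\ge x_0}r_p(y)=r_p(x_0)$. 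Everything therefore reduces to the reverse inequality, for which it suffices to show that every finite-support $F$ at every $y\ge x_0$ is \emph{dominated} by some binary environment, i.e.\ there is $F'\in\mathcal B_X$ with $U_p(F',y)/V(F',y)\le U_p(F,y)/V(F,y)$.

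Two structural facts about $U_p$ as a functional of the environment do the work, and I would isolate them as lemmas. First, for a monotone rule $p$ the payoff $U_p(F,y)$ is monotone in $F$ with respect to first-order stochastic dominance (proved by coupling the two draw sequences and using that a higher best-so-far weakly raises the stopping probability). Second, restricted to draws exceeding the current best-so-far, $U_p(F,y)$ is convex in the realized value, so a mean-preserving contraction of the ``high'' part of $F$ does not raise it. Alongside these I would use the reservation-value identity~\eqref{E:RV} in the bookkeeping form $c_F=\delta\big(c_F\Pr_F(x<c_F)+\int_{x\ge c_F}x\,\df F\big)$, which shows that the exact location of any mass below $c_F$ is irrelevant to $c_F$, and hence to $V(F,y)=\max\{y,c_F\}$.

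With these in hand I would argue by cases on whether stopping is optimal at $y$, keeping $V(F,y)$ fixed through each manipulation so that ``$U_p$ does not increase'' yields ``the ratio does not increase''. When $c_F\le y$ (stopping optimal, $V=y$), every value below $y$ can be slid to $0$ without altering the best-so-far dynamics, leaving $V$ and $U_p$ unchanged, and the convexity fact then contracts the remaining high mass to a single point; this case is routine. The substantive case is $c_F>y$ (continuation optimal, $V=c_F$). Here I would first push all mass lying strictly below $c_F$ down to $0$: by the bookkeeping identity this leaves $c_F$, hence $V$, unchanged, while by FOSD-monotonicity it weakly lowers $U_p$, reducing $F$ to a distribution on $\{0\}\cup[c_F,\bar x]$. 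I would then contract the high mass to its conditional mean; since that mean is at least $c_F$, both $c_F$ and $V$ are again preserved, and by convexity $U_p$ does not rise. The outcome is a two-point distribution with low value $0$, i.e.\ an element of $\mathcal B_X$, dominating the original $F$.

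The main obstacle is precisely these monotonicity and convexity properties of $U_p$, since $U_p$ is defined only implicitly through its fixed-point recursion and is \emph{not} linear in $F$ — exactly the non-linearity flagged after Proposition~\ref{L:Mixed}. Establishing the FOSD-monotonicity and, more delicately, the convexity in the high draws (both leaning on the monotonicity of $p$ and on free recall, and recurring across rounds through the recursion) is where the real content lies; the reservation-value bookkeeping that pins $V$ during each step is the supporting ingredient guaranteeing the two reductions move the ratio in the intended direction. A minor technical point, handled by a closure/approximation argument, is that when $X$ is discrete the conditional mean used in the contraction need not belong to $X$, so the dominating binary environment may only be approached in the limit.
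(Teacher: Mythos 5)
Your reduction in the substantive case $c_F>y$ rests on the claimed FOSD-monotonicity of $U_p(\cdot,y)$ in the environment, and that claim is false, even for rules satisfying all three hypotheses of the proposition. Take $X=\R_+$, $\delta$ close to $1$, and the stationary monotone rule $p(v)=0$ for $v<5$, $p(v)=1$ for $v\ge 5$; it has the monotone ratio property trivially, since $r_p\equiv 0$ when $\sup X=\infty$. Let $y=1$ and let $G$ be the lottery giving $6$ or $100$ with probability $1/2$ each, so $c_G=\frac{50\delta}{1-\delta/2}>y$. Your first step slides the mass at $6$ (which lies below $c_G$) down to $0$, producing the binary lottery $F$ over $\{0,100\}$; this indeed preserves $c$ and hence $V$. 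But it does not lower $U_p$: under $G$ the moderate draw $6$ triggers stopping ($p(6)=1$) and forfeits the future chance at $100$, so $U_p(G,1)=\delta\bigl(\tfrac12\cdot 6+\tfrac12\cdot 100\bigr)=53\delta$, whereas under $F$ the rule waits for $100$ and $U_p(F,1)=\frac{50\delta}{1-\delta/2}\approx 100$. The payoff ratio thus jumps from about $0.53$ up to $1$: your manipulation moves the ratio in the wrong direction, and no coupling argument can repair this, because with free recall a higher draw that triggers earlier stopping can be strictly worse than a lower draw that lets the search continue. Your second structural lemma, convexity of $x\mapsto U_p(G,\max\{y,x\})$, fails in the same example: this function drops discontinuously from $53\delta$ to $5$ at $x=5$.

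This is precisely the difficulty the paper engineers around, and it does so by a different mechanism. Since $U_p$ is neither linear nor FOSD-monotone in $F$, the paper freezes the continuation payoffs at the original environment $G$ and considers a one-shot deviation: it minimizes $\int u(x)\,\df F(x)$ with $u(x)=U_p(G,\max\{y,x\})$ over $F$, subject to the reservation-value constraint \eqref{E:GammaC}. That objective \emph{is} linear in $F$, so the convexification method (the convex closure of $u$, as in \citeasnoun{KG}) yields a two-point minimizer $F_{(w,z,\sigma)}$ with no convexity of $u$ required. A separate $\eps$-argument (Cases 1 and 2 of Part 1) then shows that replacing $G$ by $F_{(w,z,\sigma)}$ in all future rounds lowers the payoff further, and only afterwards does Part 2 push the low atom $w$ down to $0$ --- by contradiction, and this is where the monotone ratio property is genuinely used (via $r_p(w)\ge r_p(y)$ for $w\ge y$). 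That your argument never invokes the monotone ratio property in the substantive reduction is itself a symptom of the gap: without it the low atom cannot, in general, be removed.
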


The proof is in Appendix \ref{s:pbin}.

To gain the intuition for Proposition \ref{P:Binary}, recall that the individual cares about two contingencies: stopping when she should have waited for a higher realization of the value, and continuing when there are no better alternatives in the future. The worst-case distributions for these contingencies need not be complex, they are binary valued. 

We hasten to point out that Proposition \ref{P:Binary} does not imply that the individual should act as if she faces binary environments, as otherwise she would stop after seeing any alternative above the outside option. Instead, Proposition \ref{P:Binary} implies that, when evaluating the payoff ratio after any history of realized alternatives, we only need to do so for {\it all} binary environments. The value of Proposition \ref{P:Binary} is that it drastically simplifies the calculation of the performance ratio. 

Note that binary environments are not consistent with histories that contain more than two values. However, by Proposition \ref{P:Hist}, we should not be worried about this inconsistency, as any binary distribution that is inconsistent with a history can be obtained as a limit of a sequence of distributions that are consistent with that history.

\subsection{Dynamically Robust Performance}
We are now ready to present our findings for general environments.
\begin{theorem}\label{T:Quarter}
The stationary decision rule $\bar p$ given for each $y$ by
\[
\bar p(y)=\frac{1-\delta}{2-\delta}
\]

(a) attains the performance ratio $R_{\bar p}(x_0,\mathcal F_X)\ge 1/4$;

(b) is dynamically robust if $\sup X=\infty$.
%
\end{theorem}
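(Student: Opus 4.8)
The plan is to leverage the machinery already assembled for the binary case. By Proposition \ref{P:Stat}, we may restrict attention to stationary, monotone decision rules with the monotone ratio property, and the proposed rule $\bar p(y)=\frac{1-\delta}{2-\delta}$ is trivially stationary and monotone. By Proposition \ref{P:Binary}, the performance ratio of any such rule equals $\inf_{y\ge x_0}\inf_{F\in\mathcal B_X} U_{\bar p}(F,y)/V(F,y)$, so the whole problem collapses onto binary environments evaluated at an arbitrary best-so-far alternative $y$. This is exactly the quantity I already understand from the proof of Theorem \ref{T:HalfP}, except that the outside option $x_0$ is now replaced by the running best-so-far value $y$, which can be as large as $\bar x$.

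For part (a), I would compute $U_{\bar p}(F_{(z,\sigma)},y)$ and $V(F_{(z,\sigma)},y)$ explicitly using the recursions in Section \ref{s-stat}, specialized to a binary environment. Since $\bar p$ is the constant probability $\ul q^*=\frac{1-\delta}{2-\delta}$, the payoff $U_{\bar p}(F_{(z,\sigma)},y)$ solves the same fixed-point equation as in the binary proof but with $x_0$ replaced by $y$; the reservation-value formula \eqref{E:RVS} is unchanged. Splitting into the two scenarios $c_{F_{(z,\sigma)}}<y$ (optimal to stop, so $V=y$) and $c_{F_{(z,\sigma)}}\ge y$ (optimal to wait, so $V=c_{F_{(z,\sigma)}}$), the algebra is identical to \eqref{e:bin-expl1}--\eqref{e:bin-expl2} with $x_0\rightsquigarrow y$, and each ratio is bounded below by $1/2$ \emph{against that fixed $y$}. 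The subtlety is that the relevant lower bound must hold uniformly as $y$ ranges over $[x_0,\bar x]$, and the worst case arises when $y$ is as large as possible relative to $z$; taking $\inf_{y}$ of the two scenario ratios and bounding the worst-case value of $y$ (for unbounded $X$, effectively $y\to\infty$) is what degrades the guarantee from $1/2$ down to $1/4$. I would verify that the product of the two "one-half" effects --- one from the stopping/waiting tradeoff and one from the free-recall value already locked in at $y$ --- yields precisely $1/4$ in the limit.

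For part (b), I need to show no rule beats $1/4$ when $\sup X=\infty$. By Propositions \ref{P:Stat} and \ref{P:Binary} it suffices to exhibit, for any candidate stationary monotone rule $p$ with the monotone ratio property, a sequence of binary environments and best-so-far values driving its payoff ratio to at most $1/4$. The idea is to combine the two extremal constructions from the binary tightness argument: first use an environment with $\sigma=0$ (pure outside-option continuation) to force the stopping probability at some $y$ to be small if the ratio is to stay high, and then use an environment with $z=\bar x\to\infty$ and $\sigma z\to\infty$ at that same $y$ to punish the now-small stopping probability. Because $y$ itself can be taken large, the locked-in value $y$ is simultaneously negligible relative to the foregone $c_{F_{(z,\sigma)}}$, compounding the loss. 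I expect the main obstacle to be this part (b): the binary tightness argument pinned down a single worst-case $\sigma$ for a fixed $x_0$, but here the adversary has an extra degree of freedom --- the choice of $y$ --- and one must argue that monotonicity of $p$ plus the monotone ratio property prevent the rule from simultaneously defending against both the $\sigma=0$ trap at small $y$ and the $z\to\infty$ trap at large $y$. Making the two loss factors multiply cleanly to $1/4$, rather than merely each being $1/2$ in isolation, is the delicate step and will likely require a careful choice of a joint sequence $(y_n, z_n, \sigma_n)$.
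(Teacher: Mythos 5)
Your reduction via Propositions \ref{P:Stat} and \ref{P:Binary} is the right opening move (though to invoke Proposition \ref{P:Binary} for $\bar p$ you must also verify the monotone ratio property, not just stationarity and monotonicity). The genuine gap is in part (a): the algebra is \emph{not} identical to \eqref{e:bin-expl1}--\eqref{e:bin-expl2} with $x_0\rightsquigarrow y$. In the binary analysis of Section \ref{s:binary} the searcher stops the moment $z$ realizes and collects $z$, because she knows it is the top of the support; in general environments the rule cannot know this, so after $z$ arrives $\bar p$ keeps randomizing and collects only $\frac{\bar p(z)z}{1-\delta(1-\bar p(z))}=z/2$ in discounted expectation (see \eqref{E:U-Y}--\eqref{E:PR-1}). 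This post-realization delay factor $\frac{\bar\pi}{1-\delta(1-\bar\pi)}=1/2$ --- not anything to do with the value locked in at $y$ --- is one of the two factors of one half; the other is the worst-case probability $1-\frac{\bar\pi}{1-\delta(1-\bar\pi)}=1/2$ of not stopping before $z$ first realizes (worst as $\sigma\to 0$). Consequently the waiting-scenario ratio is bounded below by $1/4$, not $1/2$, \emph{pointwise at every} $y$, and the worst case is $y$ \emph{small} relative to $\sigma z$: the term $\bar\pi y$ enters the numerator of \eqref{E:PR-1} positively, so a large $y$ only helps, and taking $y\to\infty$ plays no role in part (a). If you ran your plan literally, you would ``prove'' a bound of $1/2$ per scenario and then search for a degradation mechanism in $\inf_y$ that does not exist.

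In part (b) your two traps are stated backwards: the $\sigma=0$ environment punishes a \emph{small} stopping probability (its ratio $\frac{p(y)}{1-\delta(1-p(y))}$ is increasing in $p(y)$), while the $\sigma z\to\infty$ environment punishes a \emph{large} one. More importantly, no two-trap interplay is needed. The paper takes a single sequence of waiting-trap environments $F_{(z_k,\sigma_k)}$ evaluated at best-so-far values $y_k$ with $y_k\to\infty$, $z_k\gg y_k$, $\sigma_k\to 0$, $y_k/(\sigma_k z_k)\to 0$, and reservation value above $y_k$; the reason $y_k$ must diverge is exactly the point you gesture at but do not pin down --- monotonicity forces $p(y_k)$ and $p(z_k)$ to a common limit $\bar q$, whereupon the ratio converges to $\frac{\bar q}{1-\delta(1-\bar q)}\bigl(1-\frac{\bar q}{1-\delta(1-\bar q)}\bigr)\le 1/4$ for every $\bar q$, with no need to first constrain $p$ through a $\sigma=0$ environment. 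So part (b) of your plan is salvageable once the directions are corrected, but part (a) as written would fail.
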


The proof is in Appendix \ref{s:tquarter}.

Part (a) shows that the dynamically robust performance ratio $R^*(x_0,\mathcal F_X)$ is at least $1/4$ against general environments. Part (b) shows that this bound is tight as it is attained when the set of feasible alternatives is unbounded. 

We now sketch the argument why this bound is attained. For any  best-so-far alternative $y$, the relevant worst-case environments are those where there is an alternative $z$ that is very large relative to $y$. The stopping probability $\frac{1-\delta}{2-\delta}$ balances the payoff ratio between environments where $z$ is sufficiently unlikely (so it is optimal to stop) and environments where $z$ is likely enough and is worth waiting for.
Consider decision rules with a constant stopping probability, $q$, in each round, in particular, before and after $z$ realizes. A greater $q$ means a shorter delay of obtaining $z$ after it has realized, but also a greater probability of stopping before the first realization of $z$. In the limit, as $y/z$ tends to 0, the payoff ratio takes the form
\begin{equation}\label{e:PP}
\frac{q}{1-\delta (1-q)}\left(1-\frac{q}{1-\delta (1-q)}\right).
\end{equation}
The first factor in \eqref{e:PP}, 
\[
\frac{q}{1-\delta (1-q)}=q+\delta(1-q)q+\delta^2(1-q)^2q+... 
\]
is a reciprocal of the expected delay of obtaining $z$ after its realization. The second factor in \eqref{e:PP} is the probability of not stopping before $z$ realizes for the first time. Setting $\frac{q}{1-\delta (1-q)}$ equal to $1/2$ maximizes \eqref{e:PP}, leading to the solution $q=\frac{1-\delta}{2-\delta}$ and the guaranteed payoff ratio $1/4$.

Analogously to the binary setting, we achieve a better performance when feasible alternatives are bounded. 
For this result, recall the definition of $q^*$ and $\rho$ given by \eqref{E:PS} in Section \ref{s:binary}.

\renewcommand{\thetheorem}{\arabic{theorem}$'$}
\addtocounter{theorem}{-1}
\begin{theorem}\label{P:Rules}
Let $\bar x=\sup X<\infty$. Then there exists a constant $L\in(0,1)$ such that the dynamically robust performance ratio satisfies
\[
R^*(x_0,\mathcal F_X)\ge \rho(x_0/\bar x)>1/2 \ \ \text{if $x_0/\bar x\ge L$.}
\]

Moreover, if $x_0/\bar x \ge 1/6$, then the decision rule $p^*$ given by $p^*(y)=q^*(y/\bar x)$

(a) attains the performance ratio $\rho(x_0/\bar x)>1/2$;

(b) is dynamically robust if\,\footnote{If $x_0/\bar x$ is not in $[1/6, \delta^2/(2-\delta)]$, then the rule $p^*$  it is not dynamically robust. The dynamically robust rule and its performance ratio for each $x_0/\bar x\in [L,1]$ are derived in the proof of Theorem \ref{P:Rules} (Appendix \ref{s:prules}).} $1/6\le x_0/\bar x \le \delta^2/(2-\delta)$.
\end{theorem}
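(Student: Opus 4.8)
The plan is to collapse the general-environment problem onto the already-solved binary one via Proposition \ref{P:Binary}, and then to show that the boundedness $\bar x=\sup X<\infty$ together with $x_0/\bar x\ge 1/6$ forces the worst-case binary environment to place its high value at the boundary $z=\bar x$, where the computation reduces exactly to Theorem \ref{T:Half}. First I would check that $p^*(y)=q^*(y/\bar x)$ meets the hypotheses of Proposition \ref{P:Binary}: it is stationary by construction and monotone because $q^*$ is increasing in its argument (as noted after Figure \ref{F:0}); the monotone ratio property will follow once the key computation shows $r_{p^*}(y)=\rho(y/\bar x)$, which is increasing. Proposition \ref{P:Binary} then gives
\[
R_{p^*}(x_0,\mathcal F_X)=\inf_{y\ge x_0} r_{p^*}(y),\qquad r_{p^*}(y)=\inf_{F\in\mathcal B_X}\frac{U_{p^*}(F,y)}{V(F,y)},
\]
so everything reduces to evaluating the binary ratio $r_{p^*}(y)$ at an arbitrary best-so-far value $y\in[x_0,\bar x]$.

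The heart of the argument is this evaluation. Writing $\hat y=y/\bar x$ and $q_y=q^*(\hat y)$, I would split the environments $F_{(z,\sigma)}$ into two scenarios. When $z\le y$ the best-so-far never moves, the stopping probability stays $q_y$, and using \eqref{E:RVS} the ratio equals $q_y/(1-\delta(1-q_y))=\rho(\hat y)$ exactly. When $z>y$, the novel feature relative to Theorem \ref{T:Half} appears: once $z$ realizes the rule switches to $q^*(z/\bar x)$, so the continuation payoff at best-so-far $z$ is $U_{p^*}(F_{(z,\sigma)},z)=\rho(z/\bar x)\,z$ rather than $z$. Folding this back gives
\[
U_{p^*}(F_{(z,\sigma)},y)=\frac{q_y\,y+(1-q_y)\delta\sigma\,\rho(z/\bar x)\,z}{1-(1-q_y)\delta(1-\sigma)},
\]
to be divided by $V(F_{(z,\sigma)},y)=\max\{y,\delta\sigma z/(1-\delta(1-\sigma))\}$. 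The decisive observation is that at $z=\bar x$ we have $\rho(1)=1$, hence $\rho(z/\bar x)z=\bar x=\max\{z,y\}$, and the expression coincides term-by-term with the constant-rule binary payoff \eqref{e-U} for outside option $y$ and high value $\bar x$; the binary solution \eqref{E:R-Bin} then yields ratio exactly $\rho(\hat y)$ in this case. It remains to rule out interior high values $z\in(y,\bar x)$ doing strictly worse. Minimizing the displayed ratio over $(z,\sigma)$, the unconstrained minimizer in $z$ (the bounded analogue of the $y/z\to 0$ limit of the unbounded case) lies above $\bar x$ precisely when $\hat y\ge 1/6$; since $\rho(1/6)=2/3$, this is exactly the stated threshold. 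Thus for $\hat y\ge 1/6$ the constraint $z\le\bar x$ binds, the worst feasible high value is $z=\bar x$, and $r_{p^*}(y)=\rho(\hat y)$.

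With $r_{p^*}(y)=\rho(y/\bar x)$ established for every $y\ge x_0$ (all such $y$ satisfy $y/\bar x\ge x_0/\bar x\ge 1/6$), the monotone ratio property holds and, since $\rho$ is increasing, Proposition \ref{P:Binary} gives $R_{p^*}(x_0,\mathcal F_X)=\rho(x_0/\bar x)>1/2$, which is part (a). For part (b), I would use $\mathcal B_X\subset\mathcal F_X$, which enlarges the infimum defining $R^*$ and hence gives $R^*(x_0,\mathcal F_X)\le R^*(x_0,\mathcal B_X)$; by Theorem \ref{T:Half}(b) the right-hand side equals $\rho(x_0/\bar x)$ when $x_0/\bar x\le\delta^2/(2-\delta)$. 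Combined with the part-(a) lower bound, this yields $R^*(x_0,\mathcal F_X)=\rho(x_0/\bar x)$ and the dynamic robustness of $p^*$ on $1/6\le x_0/\bar x\le\delta^2/(2-\delta)$. The existence of $L$ then follows by taking $L=1/6$; the sharp $L<1/6$ and the corresponding dynamically robust rule on $[L,1]$ come from the same $(z,\sigma)$-optimization carried past the threshold, where the worst-case $z$ turns interior, paralleling the two-regime binary solution \eqref{Rule:Binary}.

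The main obstacle is the interior step in the second paragraph: proving that for $\hat y\ge 1/6$ no intermediate high value $z\in(y,\bar x)$ beats $\rho(\hat y)$. This is delicate because the variable rule replaces the continuation $z$ by the strictly smaller $\rho(z/\bar x)\,z$, so the monotone reduction to $z=\bar x$ that is automatic in Theorem \ref{T:Half} must here be re-established through the recursive identity $U_{p^*}(F_{(z,\sigma)},z)=\rho(z/\bar x)\,z$ and a careful sign analysis of the derivative of the ratio in $z$, which is what locates the threshold at $1/6$.
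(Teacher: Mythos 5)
Your overall architecture matches the paper's: part (b) is obtained exactly as in the paper, from the inclusion of binary environments in $\mathcal F_X$ plus Theorem \ref{T:Half}(b), giving $R_{p^*}(x_0,\mathcal F_X)\le R^*(x_0,\mathcal F_X)\le \rho(x_0/\bar x)$ when $x_0/\bar x\le\delta^2/(2-\delta)$; and part (a) is reduced, as in the paper, to showing that $p^*$ attains the ratio $\rho$ against every binary environment in $\mathcal B_X$, with Proposition \ref{P:Binary} serving as the bridge to general environments. Your diagnosis that the only new feature relative to Theorem \ref{T:Half} is the continuation value $U_{p^*}(F_{(z,\sigma)},z)=\rho(z/\bar x)\,z<z$ for interior $z$, so that everything hinges on ruling out worst cases with $z\in(y,\bar x)$ when $y/\bar x\ge 1/6$, is exactly right. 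Taking $L=1/6$ is also a legitimate way to discharge the first claim as literally stated: the paper instead constructs a different rule $p_{x_0}$ (pinned to the constant ratio $r^*(x_0)$ at every $y$) to push $L$ down to $\le 1/89$, but the theorem only asserts existence of some $L\in(0,1)$.

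The genuine gap is that the decisive step --- that the minimizer over $z$ of the payoff ratio lies at or beyond $\bar x$ precisely when $\hat y\ge 1/6$ --- is asserted, not proven, and the remedy you propose (a sign analysis of the derivative of the ratio in $z$) is never carried out. This is not a routine verification: the ratio involves $\rho(\hat z)\hat z$ with $\rho(\hat z)=\tfrac12+\tfrac18\bigl(\hat z+\sqrt{\hat z(\hat z+8)}\bigr)$, the minimization is joint in $(z,\sigma)$, and the claim to be established is that the resulting threshold equals exactly $1/6$ and is independent of $\delta$. It is worth knowing that the paper itself does not do this analytically: it defines $L'=\inf\{x_0\in(0,1]: R_{p^*}(x_0,\mathcal F_X)\ge\rho(x_0)\}$ and verifies $L'=1/6$ (and likewise $L\le 1/89$) numerically, with Maple, for $X=[0,1]$. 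So your proposal is incomplete at precisely the point where the content of the theorem lies; to finish it you must either supply the full algebraic minimization over $(z,\sigma)$ showing the boundary case $z=\bar x$ is worst if and only if $\hat y\ge 1/6$, or state openly, as the paper does, that this threshold is certified numerically rather than presenting it as an analytic derivative argument.
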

\renewcommand{\thetheorem}{\arabic{theorem}}

The proof is in Appendix \ref{s:prules}.

Theorem \ref{P:Rules} shows that, if the outside option is not too small relative to the highest possible alternative, in the sense that $x_0/\bar x\ge L$, then the dynamically robust performance ratio for the general environments is the same as that for the binary environments. That is, the expansion from the binary to general set of environments confers no reduction in the dynamically robust performance. Remarkably, the constant $L$ is very small. We numerically find an upper bound for $L$:
\[
L\le1/89, 
\]
which is independent of the discount factor $\delta$. Thus, as $x_0/\bar x$ increases from $0$ to a mere $1/89$, the dynamically robust performance ratio climbs from $1/4$ to above $1/2$. In particular, one can guarantee at least $2/3$ and $3/4$ of the optimum if the outside option exceeds, respectively, $1/6$ and $1/3$ of the highest possible alternative.  In Figure~\ref{F:1}, the dynamically robust performance ratio $R^*(x_0,\mathcal F_{X})$ is shown as a solid line for $x_0/\bar x\ge L$, and we hypothesize that it looks as depicted by the dotted line for $x_0/\bar x<L$.

\begin{figure}[!h]
\begin{center}
\includegraphics[width=0.5\textwidth]{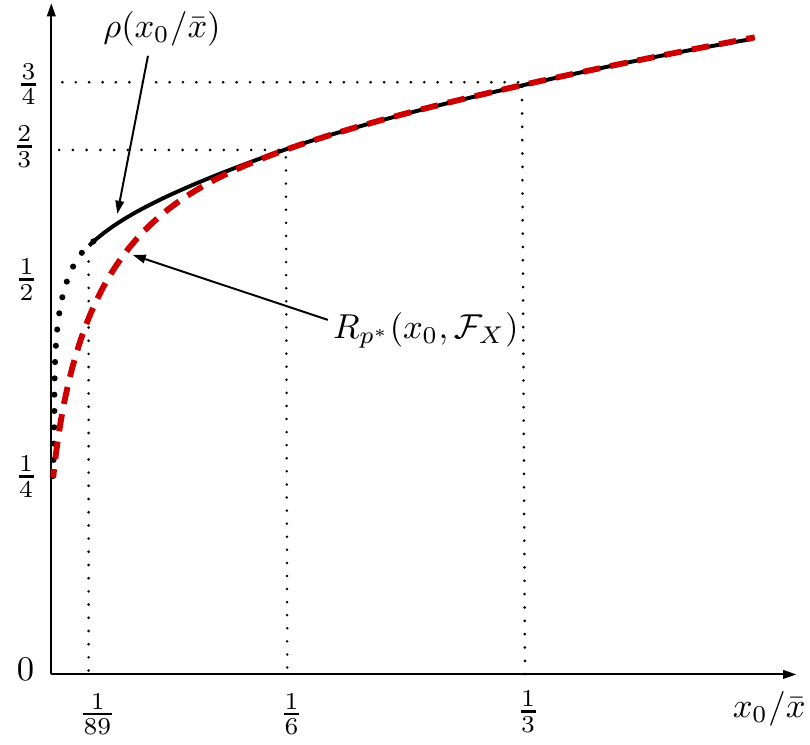}
\caption{\small Dashed line shows the performance ratio of rule $p^*$. Solid line shows the dynamically robust performance ratio $\rho(x_0/\bar x)$ for $x_0/\bar x\ge L$ when $\delta$ is sufficiently large, so $x_0/\bar x\le \delta^2/(2-\delta)$. Dotted line shows the hypothetical value of the dynamically robust performance ratio when $x_0/\bar x<L$.}
\label{F:1}
\end{center}
\end{figure}

In addition, Theorem \ref{P:Rules} shows that the dynamically robust rule identified for binary environments is also dynamically robust in general environments when $x_0/\bar x\ge 1/6$. This can be seen visually in Figure~\ref{F:1} by the fact that the dynamically robust performance ratio (solid line) coincides with the performance ratio of the rule $p^*$ (dashed line) when $x_0/\bar x\ge 1/6$.

For $x_0/\bar x\in[L,1/6)$, the rule $p^*$ is no longer dynamically robust. In Figure~\ref{F:1}, the dashed line showing $R_{p^*}(x_0, \mathcal F_X)$ is below the solid line showing $\rho(x_0)$. Nevertheless, the performance ratio of $\rho(x_0)$ can still be attained. In our proof of Theorem \ref{P:Rules} we present a rule that is dynamically robust in this case.

Two elements of Theorem \ref{P:Rules} prompt curiosity. 

First, why is the dynamically robust performance ratio the same as under binary environments for such a large interval of outside options? It turns out that when  $x_0/\bar x\ge L$, the environments that determine the worst-case ratio are lotteries over the extreme alternatives, $0$ and $\bar x$. Other alternatives do not play any role in this parameter region, but they do when $x_0/\bar x<L$. This stands in contrast to the setting of binary environments where the worst case environments only put weights on the extreme alternatives.

Second, why is Theorem \ref{P:Rules} silent about dynamic robustness when $0<x_0/\bar x<L$? 
This is because a closed form expression for the dynamically robust performance ratio is not available for this region. Yet, for each parameter value in this region, the dynamically robust performance ratio, together with an associated decision rule, can be derived using a recursive procedure that we describe in Appendix \ref{s:Deriv}.


\section{Conclusion}

It is difficult to search when the distribution of alternatives is not known. In fact, as outlined in the literature review in the introduction, the literature has not produced satisfactory insights into how to search in this setting. In this paper we identify that this difficulty is due to the desire to achieve the very highest payoff for the given beliefs. Namely, we find that it is easier to search if one reduces the target and replaces ``very highest'' by ``relatively high''. The ease refers to the ability to derive an optimal solution for a very general setting, the simplicity of our algorithm, and the minimality of assumptions one needs to impose on the environment. 

The methodology developed in this paper is general, applicable to a spectrum of dynamic decision making problems, and should spark future research. Its strength is that it allows for dynamically consistent decision making with multiple priors. It is as if our decision maker is surrounded by other individuals, each of whom has her own prior. At any point in time, each of these individuals wants to complain that our rule is not appropriate given their prior. According to our concept, a dynamically robust rule may not be optimal given their prior, but their complaints cannot be large. Our analysis reveals bounds on the size of any such complaint. 

Our results about the bounds on the size of possible complaints do not change if the searcher has more information about the environment, for example, if she can restrict the set of priors, or if the number of alternatives is finite and known. What does change is the tightness of these bounds. Of course, when more information is available, better decision rules can be found. 

The main insights (randomization is essential, dynamically robust rules are stationary, worst-case priors are simple) extend to search without recall and to search with exchangeable distributions. 

New avenues for research on dynamical robustness are opened, such as extending this agenda to matching and to other search environments that involve strategic interaction. The economic insights of models that include agents searching under known distributions can now be reevaluated using agents that employ dynamically robust search.

\section*{Appendix A. Binary Environments}
\renewcommand{\thesection}{A}

\subsection{Proof of Proposition \ref{L:Mixed}.}\label{s:p1}
Fix a history $h$ and a prior $\mu$ that is consistent with that history, so $\mu\in\Delta(\mathcal F(h))$. Let
\[
r_p( F)=\frac{U_p( F,h)}{V( F,h)} \quad\text{and}\quad \eta ( F)=\frac{V( F,h)\mu( F|{h})}{V(\mu,h)}.
\]
Note that the posterior $\mu(\cdot|h)$ must assign zero probability to the set of all environments that are inconsistent with $h$, thus
\[
U_p(\mu,h)=\sum_{F\in\mathcal F(h)} U_p( F,h) \mu( F|{h}).
\]
Using the above notations we obtain
\begin{align*}
\frac{U_p(\mu,h)}{V(\mu,h)}&=\frac{\sum_{ F\in\mathcal F(h)} U_p( F,h) \mu( F|{h})}{V(\mu,h)}=\frac{\sum_{ F\in\mathcal F(h)} r_p( F)V( F,h) \mu( F|{h})}{V(\mu,h)}\\
&=\sum_{ F\in\mathcal F(h)} r_p( F)\eta( F)\ge\inf_{ F\in\mathcal F(h)} r_p( F)=\inf_{ F\in\mathcal F(h)} \frac{U_p( F,h)}{V( F,h)},
\end{align*}
where the inequality follows from  $r_p( F)\ge 0$, $\eta(F)\ge 0$, and 
\begin{align*}
\sum_{ F\in \mathcal F}\eta( F)&=\frac{\sum_{ F\in \mathcal F}V(F,h)\mu(F|h)}{V(\mu,h)}=\frac{\sum_{ F\in \mathcal F}\sup_p U_p(F,h)\mu(F|h)}{V(\mu,h)}\\
&\ge \frac{\sup_p\sum_{ F\in \mathcal F} U_p(F,h)\mu(F|h)}{V(\mu,h)}=\frac{\sup_p U_p(\mu,h)}{V(\mu,h)}=\frac{V(\mu,h)}{V(\mu,h)}=1.
\end{align*}
Since the above holds for all $\mu\in\Delta(\mathcal F(h))$, we have 
\[
\inf_{\mu\in\Delta(\mathcal F(h))}\frac{U_p(\mu,h)}{V(\mu,h)}\ge \inf_{ F\in\mathcal F(h)}\frac{U_p( F,h)}{V( F,h)}.
\]
The proof of the reverse of the above inequality is trivial, since $\mathcal F(h)$ is a subset of $\Delta(\mathcal F(h))$.

\subsection{Proof of Proposition \ref{P:Hist}.}\label{s:phist}
Fix an outside option $x_0>0$ and a history $h\in\mathcal H(x_0)$. Recall that $\mathcal F(h)\subset \mathcal F$ denotes the set of environments that are consistent with a history $h$, and note that $\mathcal F(h)\neq\varnothing$.  Consider two environments, $F\in\mathcal F(h)$ and $G\in\mathcal F$. Let $(G_k)_{k=1}^\infty$ be a sequence of environments given by
\[
G_k=\tfrac 1 k F+ \left(1-\tfrac 1 k\right) G, \ \ k\in\N,
\]
so $\lim_{k\to\infty}G_k=G$. By convexity of $\mathcal F$, $G_k\in\mathcal F$ for all $k\in\N$. Consistency of $F\in\mathcal F(h)$ with history $h=(x_0,x_1,...,x_t)$ means that $supp(F)$ contains $\{x_1,...,x_t\}$. 
So, $\{x_1,...,x_t\}\subset supp(F)\subset supp(G_k)$, and thus $G_k\in\mathcal F(h_t)$ for all $k\in\N$. Since the above is true for all $G\in\mathcal F$, it follows that $Closure(\mathcal F(h))= \mathcal F$, which proves the proposition.

\subsection{Proof of Proposition \ref{P:Det}.}\label{s:det}
Let $p$ be deterministic. Suppose that there exists $k\in\{0,1,2,...\}$ such that $p$ stops searching after $k$ zero-valued alternatives. Formally, $p(x_0\oplus{\bf 0}^k)=1$, where ${\bf 0}^k$ denotes the sequence of $k$ zeros and `$\oplus$' denotes the vector concatenation operator. For any $F\in\mathcal F$, the individual's payoff in round $k$ is $U_{p}(F,x_0\oplus {\bf 0}^k)=x_0$, and the optimal payoff in round $k$ satisfies
\[
V(F,x_0\oplus{\bf 0}^k)\le \sup\nolimits_{F'\in\mathcal F} V(F',x_0), 
\]
because, by \eqref{OptV}, $V(F,x_0)=V(F,x_0\oplus{\bf 0}^k)$. Consequently,
\[
R_p(x_0,\mathcal F)\le\inf_{F\in\mathcal F}\frac{U_p(F,x_0\oplus{\bf 0}^k)}{V(F,x_0\oplus{\bf 0}^k)}= \frac{x_0}{\sup_{F\in\mathcal F} V(F,x_0)}.
\]

Now, consider the complementary case where $p$ never stops searching as long as only zeros occurred in the past. So, $p(x_0\oplus{\bf 0}^k)=0$ for all $k\in\{0,1,2,...\}$. Consider an environment $F_0$ in which all alternatives are equal to zero with certainty. In round 0, the optimal payoff under $F_{0}$ is $V(F_{0}, x_0)=x_0$. Since $p$ continues after each history with only zeros, it never stops under $F_{0}$, and hence its payoff is $U_p(F_{0}, x_0)=0$. Consequently,
\[
R_p(x_0,\mathcal F)\le \frac{U_p(F_{0}, x_0)}{V(F_{0}, x_0)}=\frac{0}{x_0}=0.
\]

\subsection{Proof of Proposition \ref{P:StatB}.}\label{s:statb}
In what follows, we denote by $\bar q^\infty$ a constant sequence, so $\bar q^\infty=(\bar q,\bar q,...)$ for $\bar q\in[0,1]$.

We show that each sequence of probabilities  ${q}'=( q'_0, q'_1,...)$ can be replaced by a constant sequence $\bar{q}^\infty$ that has a weakly higher performance ratio in binary environments. Note that we only need to compare the individual's payoffs $U_{ q'}$ and $U_{\bar{ q}^\infty}$, as the optimal payoff $V$ does not depend on the decision rule.

For consistency with notations in Appendix B, we use notation $y=x_0$. In the paper, $y$ denotes the current best-so-far alternative, and in binary environments this is always the outside option $x_0$. Also, note that in binary environments we only have to consider histories in which only zeros occur, and hence replace $h_t$ by the round number $t$.

The expected payoff of a rule $q$ in each round $t=0,1,2,...$ is given by
\begin{equation}\label{E:Phi}
U_{ q}(F_{(z,\sigma)},t)= q_t y+(1- q_t)\delta(\sigma z+(1-\sigma)U_{ q}(F_{(z,\sigma)},t+1)).
\end{equation}
For each $(z,\sigma)$, denote the worst expected payoff among all rounds by
\[
\ul U_{ q}(F_{(z,\sigma)})=\inf_{t=0,1,...}U_{ q}(F_{(z,\sigma)},t).
\]
Let ${ q}'$ be an arbitrary sequence of probabilities. This ${ q}'$ will be called a {\it benchmark} and will be fixed for the rest of the proof. We say that a sequence ${ q}$ is {\it better than ${ q'}$ for $(z,\sigma)$} if its worst expected payoff under environment $F_{(z,\sigma)}$ is at least as good as that of the benchmark ${ q}'$, so
\[
\ul U_{{ q}}(F_{(z,\sigma)})\ge \ul U_{ q'}(F_{(z,\sigma)}).
\]
Let $\bar q^\infty=(\bar q,\bar  q,...)$ be the constant sequence where $\bar q$ is a solution of the equation
\begin{equation}\label{E:Benchmark}
U_{ q'}(F_{(z,0)},0)=\bar  q y+(1-\bar  q)\delta U_{ q'}(F_{(z,0)},0),
\end{equation}
so
\[
\bar q=\frac{(1-\delta)U_{ q'}(F_{(z,0)},0)}{y-\delta U_{ q'}(F_{(z,0)},0)}.
\]
By \eqref{E:Phi}, $U_{ q'}(F_{(z,0)},0)\in[0,y]$ when $\sigma=0$, so $\bar q\in[0,1]$. We will show that $\bar{q}^\infty$ is better than ${ q'}$ for all $z\ge y$ and all $\sigma\in[0,1]$.

By \eqref{E:Phi}, observe that for any sequence ${ q}$ and any $t$,
\begin{align*}
U_{{ q}}(F_{(z,\sigma)},t)-y&=(1- q_t)(\delta\sigma z+\delta(1-\sigma)U_{{ q}}(F_{(z,\sigma)},t+1)-y)\\
&=(1- q_t)(\delta\sigma z-(1-\delta(1-\sigma))y+\delta(1-\sigma) (U_{{ q}}(F_{(z,\sigma)},t+1)-y)).\end{align*}
Iterating the above for $t+1, t+2,...$, we obtain
\begin{equation}\label{E:A-TR1}
U_{{ q}}(F_{(z,\sigma)},t)-y=(\delta\sigma z-(1-\delta(1-\sigma))y)\sum_{k=0}^\infty \left(\delta^k(1-\sigma)^k\prod_{s=t}^{t+k}(1-q_s)\right).
\end{equation}

First, assume that $\delta\sigma z-(1-\delta(1-\sigma))y=0$. Then $U_{{ q}}(F_{(z,\sigma)},t)-y=0$ for every ${ q}$ and every $t$.  In particular, $\ul U_{\bar { q}^\infty}(F_{(z,\sigma)})=\ul U_{{ q}'}(F_{(z,\sigma)})=y$. So, $\bar { q}^\infty$ is better than ${ q'}$.

Next, assume that $\delta\sigma z-(1-\delta(1-\sigma))y\ne 0$. Define 
\begin{equation}\label{E:A-TR0}
\psi_t^{\sigma}({ q})=\frac{U_{{ q}}(F_{(z,\sigma)},t)-y}{\delta\sigma z-(1-\delta(1-\sigma))y}.
\end{equation}
By \eqref{E:A-TR1},
\begin{equation}\label{E:A-TR2}
\psi_t^{\sigma}({ q})=(1- q_t)\big(1+\delta(1-\sigma)\psi_{t+1}^{\sigma}({ q})\big)=\sum_{k=0}^\infty \left(\delta^k(1-\sigma)^k\prod_{s=t}^{t+k}(1- q_s)\right).
\end{equation}
Note that for any constant sequence $q^\infty=(q,q,...)$,
\begin{equation}\label{E:A-TR3}
\psi^{\sigma}(q^\infty)=\sum_{k=0}^\infty \delta^k(1-\sigma)^k (1- q)^{k+1}=\frac{1- q}{1-\delta(1-\sigma)(1- q)},
\end{equation}
where we omit the subscript $t$ for notational simplicity. When $\delta\sigma z-(1-\delta(1-\sigma))y>0$, the constant sequence $\bar{q}^\infty$ is better than ${q'}$ if 
\[
\psi^{\sigma}(\bar {q}^\infty)\ge \inf_t \psi_t^{\sigma}({q}').
\]
When $\delta\sigma z-(1-\delta(1-\sigma))y<0$, the constant sequence $\bar{q}^\infty$ is better than ${ q'}$ if 
\[
-\psi^{\sigma}(\bar {q}^\infty)\ge \inf_t (-\psi_t^{\sigma}({ q}')).
\]
Therefore, to prove that $\bar{q}^\infty$ is better than ${q'}$ for all $z\ge y$ and all $\sigma\in[0,1]$, it remains to show that, for all $\sigma \in[0,1]$,
\begin{equation}\label{E:A-TR5}
\inf_t \psi_t^{\sigma}({ q}')\le \psi^{\sigma}(\bar { q}^\infty)\le \sup_t \psi_t^{\sigma}({ q}').
\end{equation}
Fix $\sigma\in[0,1]$. To prove the above inequalities, we first find the interval of values $\psi^0_0({ q})$ achievable by choosing a sequence ${ q}$ subject to the constraint 
\begin{equation}\label{E:Feasibility4}
\inf_t \psi^{\sigma}_t({ q}')\le \psi_s^{\sigma}({ q})\le \sup_t \psi_t^{\sigma}({ q}') \ \ \text{for all $s=0,1,2,...$.}
\end{equation}
To do this, we solve 
\begin{align}
&\min_{ q} \psi^0_0({ q}) \quad \text{subject to \eqref{E:Feasibility4}, and} \label{eredmin}\\
&\max_{ q} \psi^0_0({ q}) \quad \text{subject to \eqref{E:Feasibility4}.} \label{eredmax}
\end{align}

\begin{lemma}\label{L:Seq}
There exist a solution ${ q}^{\sigma}_{\min}$ of \eqref{eredmin} and a solution  ${ q}^{\sigma}_{\max}$ of \eqref{eredmax} that are constant sequences.
\end{lemma}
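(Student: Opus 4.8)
The plan is to analyze the two optimization problems \eqref{eredmin} and \eqref{eredmax} through the recursive structure of $\psi_t^\sigma$. Recall from \eqref{E:A-TR2} that $\psi_t^\sigma(q)=(1-q_t)(1+\delta(1-\sigma)\psi_{t+1}^\sigma(q))$, so each $\psi_t^\sigma$ depends on the tail of the sequence only through $\psi_{t+1}^\sigma$. The key observation is that fixing the value $\psi_1^\sigma(q)=w$ completely determines the achievable value of $\psi_0^\sigma(q)$ once we choose $q_0$, via $\psi_0^\sigma(q)=(1-q_0)(1+\delta(1-\sigma)w)$, which is affine and monotone in both $q_0$ and $w$. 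The objective $\psi_0^0(q)$ (the $\sigma=0$ version) is governed by the same $q$ but with $\sigma$ set to $0$; I would therefore track the pair $(\psi_0^0,\psi_0^\sigma)$ and exploit that both coordinates respond monotonically to the same controls.

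\textbf{Main steps.} First I would reformulate \eqref{eredmin}--\eqref{eredmax} as an infinite-horizon problem with state variable the continuation value $\psi_{t+1}^\sigma$ and per-stage control $q_t\in[0,1]$, subject to the constraint \eqref{E:Feasibility4} that every $\psi_s^\sigma(q)$ lie in the fixed interval $[\underline w,\overline w]:=[\inf_t\psi_t^\sigma(q'),\sup_t\psi_t^\sigma(q')]$. Because the recursion is stationary (the map $w\mapsto(1-q)(1+\delta(1-\sigma)w)$ has no explicit time dependence) and the feasibility interval is the same in every period, the problem is time-invariant; this is the structural fact that makes a constant optimizer plausible. Second, I would argue monotonicity: since $\psi_0^0$ is increasing in $\psi_1^0$ and $\psi_1^0$ is in turn increasing in $\psi_2^0$, and so on, pushing every continuation value to one extreme of its feasible range is optimal. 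The delicate point is that the objective uses $\sigma=0$ while the constraint is stated for the given $\sigma$; I would handle this by noting that the feasibility region in \eqref{E:Feasibility4} constrains the sequence $q$ directly, and then checking that the monotone dependence of $\psi^0$ on the tail aligns with the monotone dependence of $\psi^\sigma$, so that a coordinatewise extremal choice of the $q_t$ is simultaneously feasibility-binding and objective-optimal.

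\textbf{Constructing the constant solution.} For \eqref{eredmin}, to minimize $\psi_0^0$ one wants each factor $(1-q_t)$ small, i.e. each $q_t$ large, but the lower constraint $\psi_s^\sigma\ge\underline w$ caps how large $q_t$ can be. I would show the binding choice is the constant sequence for which $\psi^\sigma(q^\infty)=\underline w$; by \eqref{E:A-TR3} this pins down a unique $q\in[0,1]$, and a stationary-policy/fixed-point argument (the feasibility interval is forward-invariant under the optimal stationary control) shows this constant sequence both satisfies \eqref{E:Feasibility4} in every period and attains the minimum. Symmetrically, \eqref{eredmax} is solved by the constant sequence with $\psi^\sigma(q^\infty)=\overline w$. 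The cleanest way to close the argument is a one-shot-deviation / exchange argument: given any feasible $q$, replacing its continuation from period $1$ onward by the constant tail that achieves the extremal continuation value weakly improves $\psi_0^0$ while preserving feasibility, and iterating (or passing to the limit) yields a constant optimizer.

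\textbf{Expected main obstacle.} The hard part will be the mismatch between the objective index ($\sigma=0$) and the constraint index (general $\sigma$): I must verify that the extremal feasible continuation value for the $\sigma$-constrained recursion is also the one that extremizes the $\sigma=0$ objective, rather than these pulling in opposite directions. I expect this to follow because both recursions share the same control $q_t$ and the map $w\mapsto(1-q)(1+\delta(1-\sigma)w)$ is increasing in $w$ for every $\sigma\in[0,1]$, so "make the tail as large (resp.\ small) as feasibility allows" is unambiguous across both indices; but pinning down that the constant policy is exactly feasibility-binding, and that no nonstationary sequence can beat it, is where the real care is needed.
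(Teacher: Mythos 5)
Your identification of the candidate optimizers is right (for \eqref{eredmax}, the constant sequence $\tilde q^\infty$ with $\psi^\sigma(\tilde q^\infty)=\ol\psi^{\sigma}(q')$; for \eqref{eredmin}, the one hitting $\ul\psi^{\sigma}(q')$), and you correctly flag the $\sigma$-mismatch between objective and constraint as the crux. But your proposed resolution does not close the gap, and it fails in two concrete ways. First, the one-shot deviation does \emph{not} preserve feasibility. Take the maximization problem and a feasible $q$ with $q_0<\tilde q$ (such $q$ exist: a small $q_0$ can be compensated by a tail with $\psi_1^\sigma(q)$ well below $\ol\psi^{\sigma}(q')$). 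Replacing the tail by $\tilde q^\infty$ raises $\psi_1^\sigma$ to $\ol\psi^{\sigma}(q')$, hence raises $\psi_0^\sigma=(1-q_0)\bigl(1+\delta(1-\sigma)\psi_1^\sigma\bigr)$ above $(1-\tilde q)\bigl(1+\delta(1-\sigma)\ol\psi^{\sigma}(q')\bigr)=\ol\psi^{\sigma}(q')$, violating \eqref{E:Feasibility4} at $s=0$. Any valid exchange argument must therefore perturb \emph{two} coordinates at once, adjusting the earlier stopping probability to keep the constraint functional fixed. Second, your claim that the swap ``weakly improves $\psi^0_0$'' is unsubstantiated and essentially circular: it presupposes that, among feasible tails, the constant tail with extremal $\psi^\sigma$-value also maximizes $\psi^0$ --- which is the lemma itself. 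Monotonicity of $w\mapsto(1-q)(1+\delta(1-\sigma)w)$ in $w$ cannot deliver this, because the objective is not a function of the constraint state: two tails with the same $\psi_1^\sigma$ generally have different $\psi_1^0$ (the two functionals weight horizons differently). Indeed, if one drops the per-period constraints at $s\ge 1$ and only fixes $\psi_1^\sigma$, the $\psi^0$-maximizing tail is a bang-bang (cutoff) sequence, not a constant one; so any correct proof must use the full constraint set quantitatively, not just the direction of monotonicity.

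This is exactly what the paper's proof supplies and your sketch lacks. The paper truncates to a finite horizon (fixing $q_t=\tilde q$ for $t>T$, with approximation error at most $\delta^T$), then runs backward induction: at each step $k$ it shows feasibility forces $q_k\ge\tilde q$, and it considers the \emph{compensated} perturbation that lowers $q_k$ while adjusting $q_{k-1}$ so that $\psi_{k-1}^{\sigma}(q)$ (and hence every earlier constraint) stays fixed. An explicit computation then shows
\[
\frac{d\psi^0_0(q)}{dq_k}\;=\;-\frac{\delta^{k}}{1-\delta(1-\tilde q)}\left(\prod_{s=0}^{k-1}(1-q_s)\right)\frac{\sigma}{1-\delta(1-\sigma)(q_k-\tilde q)}\;\le\;0,
\]
so pushing $q_k$ down to $\tilde q$ weakly raises the objective. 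Note the factor $\sigma$: the improvement is strict for $\sigma>0$ and exactly neutral for $\sigma=0$. This is the quantitative resolution of the index mismatch you worried about --- it comes out of a calculation along a feasibility-preserving direction, not out of a general ``push the tail to its extreme'' principle. To repair your proof you would need to replace the one-shot tail swap by this kind of two-coordinate compensated variation (or prove the monotonicity of the Pareto frontier between $\psi^\sigma$ and $\psi^0$ under the full constraint set, which amounts to the same work).
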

We postpone the proof of this lemma to the end of this section
and first complete the proof of Proposition \ref{P:StatB}.

By  \eqref{E:A-TR0} and the definition of $\bar q^\infty$ (see \eqref{E:Benchmark}), $\psi^0(\bar { q}^\infty)= \psi^0_0({ q}')$ (recall that we omit the subscript $t$ for constant sequences). Because ${ q}'$ satisfies constraint \eqref{E:Feasibility4} by definition, we have
\begin{equation*}\label{psi0}
\psi^0({ q}^{\sigma}_{\min})\le \psi^0(\bar {q}^\infty)=\psi^0_0({q}')\le \psi^0({ q}^{\sigma}_{\max}).
\end{equation*}
By Lemma \ref{L:Seq}, ${ q}^{\sigma}_{\min}$ and ${ q}^{\sigma}_{\max}$ are constant sequences. By \eqref{E:A-TR3}, for any constant sequence $\tilde q^\infty=(\tilde q,\tilde q,...)$, $\psi^0(\tilde q^\infty)$ is strictly increasing in $\tilde q$. Thus, we have
\begin{equation}\label{E:opop}
{ q}_{\min}^{\sigma}\le \bar { q}^\infty\le { q}_{\max}^{\sigma}.
\end{equation}
Again by \eqref{E:A-TR3},  for any constant sequence $\tilde q^\infty$ and any $\sigma$, $\psi^\sigma(\tilde q^\infty)$ is strictly increasing in $\tilde q$. Since ${ q}_{\min}^{\sigma}$ and ${ q}_{\max}^{\sigma}$ satisfy the constraint \eqref{E:Feasibility4}, we have
\[
\inf_t \psi_t^{\sigma}({ q}')\le \psi^{\sigma}({ q}^{\sigma}_{\min})\le \psi^{\sigma}(\bar { q}^\infty)\le \psi^{\sigma}({ q}^{\sigma}_{\max})\le \sup_t \psi_t^{\sigma}({ q}').
\]
So, \eqref {E:A-TR5} holds. This completes the proof.

\begin{proof}[Proof of Lemma \ref{L:Seq}]
We prove that a solution of the maximization problem \eqref{eredmax} is a constant sequence. The proof of this statement for the minimization problem \eqref{eredmin}  is analogous.

Fix $\sigma\in[0,1]$. We use the notation
\[
\ul\psi^{\sigma}({ q}')=\inf_t \psi_t^{\sigma}({ q}') \quad \text{and} \quad \ol\psi^{\sigma}({ q}')=\sup_t \psi_t^{\sigma}({ q}').
\]
Let $\tilde q$ be the solution of the equation
\begin{equation}\label{E:A-TR6}
\ol\psi^{\sigma}({ q}')=(1-\tilde q)(1+\delta(1-\sigma)\ol\psi^{\sigma}({ q}')).
\end{equation}
We now show that the constant sequence $\tilde { q}^\infty=(\tilde q,\tilde q,...)$ is a solution of the maximization problem \eqref{eredmax}. To prove this, we solve a finite-horizon problem described below. We assume that the individual makes decisions in rounds $t=0,1,...,T$, after which the individual's behavior is fixed by $ q_{t}=\tilde q$ for all $t>T$. Because the maximal value of $\psi^0_0({ q})$ in the problem \eqref{eredmax} can differ from that in the problem with horizon $T$ by at most $\delta^T$, we find the solution to the infinite-horizon problem  \eqref{eredmax} as the limit of the solutions to the finite-horizon problem as $T\to\infty$.

For each $T=1,2,...$ consider the following problem:
\begin{equation}\label{eredmax1}
\begin{split}
&\max_{{ q}} \psi^0_0({ q}) \ \ \text{subject to}\\ 
&\ul\psi^{\sigma}({ q}')\le \psi_t^{\sigma}({ q})\le \ol\psi^{\sigma}({ q}') \ \ \text{for all $t$},\\
& q_t=\tilde q \ \ \text{for all $t=T+1,T+2,...$}.
\end{split}
\end{equation}
We now show that $\tilde {q}^\infty$ is a solution of \eqref{eredmax1}. We proceed by induction, starting from round $k=T$, and then continue to rounds $k=T-1,T-2,...,1,0$. 

Let $k\in\{0,1,...,T\}$ and suppose $ q_t= \tilde q$ for each $t>k$. Observe that, by \eqref{E:A-TR3}, for all $t>k$,
\begin{equation}\label{E:A-TR7}
\psi_t^{0}({ q})=\frac{1-\tilde  q}{1-\delta(1-\tilde  q)} \quad\text{and}\quad
\psi_t^{\sigma}({ q})=\frac{1-\tilde  q}{1-\delta(1-\sigma)(1-\tilde  q)}=\ol\psi^\sigma({ q}'),
\end{equation}
where the last equality is by the definition of $\tilde  q$ in \eqref{E:A-TR6}. 
Next, ${ q}$ must satisfy the constraint in \eqref{eredmax1}, so $\psi_k^{\sigma}({ q})\le \ol\psi^\sigma({ q}')$. Using \eqref{E:A-TR2} and \eqref{E:A-TR7}, we obtain that
\begin{equation}\label{E:A-TR9}
\psi_k^{\sigma}({ q})=(1- q_k)(1+\delta(1-\sigma)\psi_{k+1}^{\sigma}({ q}))=(1- q_k)(1+\delta(1-\sigma)\ol\psi^\sigma({ q}'))\le \ol\psi^\sigma({ q}')
\end{equation}
implies by \eqref{E:A-TR6}
\begin{equation}\label{E:A-TR8}
 q_k\ge \tilde  q.
\end{equation}

Let us first deal with the case of $k\ge 1$. 
We show that if $ q_k>\tilde  q$, then $\psi^0_0({ q})$ can be increased by reducing $ q_k$. Specifically, we keep $ q_t$ fixed for all $t$ different from $k-1$ and $k$, and vary $ q_{k-1}$ and $ q_k$ such that $\psi_{k-1}^{\sigma}({ q})$ remains constant, that is,
\[
d\psi_{k-1}^{\sigma}({ q})=-(1+\delta(1-\sigma)\psi_{k}^{\sigma}({ q}))d q_{k-1}+(1- q_{k-1})\delta(1-\sigma)\frac{\partial \psi_{k}^{\sigma}({ q})}{\partial  q_k}d q_k=0.
\]
By \eqref{E:A-TR7} and \eqref{E:A-TR9} we have
\begin{align*}
\psi_k^{\sigma}({ q})&=(1- q_k)(1+\delta(1-\sigma)\psi_{k+1}^{\sigma}({ q}))=(1- q_k)(1+\delta(1-\sigma)\ol\psi^\sigma({ q}'))\\
&=(1- q_k)\left(1+\delta(1-\sigma)\frac{1-\tilde  q}{1-\delta(1-\sigma)(1-\tilde  q)}\right)=\frac{1- q_k}{1-\delta(1-\sigma)(1-\tilde  q)}
\end{align*}
and
\[
\frac{\partial \psi_k^\sigma({ q})}{\partial  q_k}=-\frac{1}{1-\delta(1-\sigma)(1-\tilde q)}.
\]
Thus,
\[
\frac{d q_{k-1}}{d q_k}=-\frac{\delta(1-\sigma)(1- q_{k-1})}{1-\delta(1-\sigma)( q_{k}-\tilde q)}.
\]
Inserting $\sigma=0$ into \eqref{E:A-TR2} and \eqref{E:A-TR7}, by the induction assumption that $q_{k+1}=\tilde q^\sigma$,
\[
\psi^0_{k}({ q})=(1- q_k)(1+\delta\psi^0_{k+1}({ q}))=(1- q_k)\left(1+\frac{\delta(1-\tilde  q)}{1-\delta(1-\tilde  q)}\right) =\frac{1- q_k}{1-\delta(1-\tilde  q)},
\]
and
\[
\psi^0_{k-1}({ q})=(1- q_{k-1})(1+\delta\psi^0_{k}({ q}))=(1- q_{k-1})\frac{1-\delta( q_k-\tilde  q)}{1-\delta(1-\tilde  q)}.
\]
Thus, by \eqref{E:A-TR2} with $\sigma=0$,
\begin{align*}
\frac{\partial  \psi^0_0({ q})}{\partial  q_{k}}&=\delta^{k}\left(\prod_{s=0}^{k-1}(1- q_s)\right)\frac{\partial  \psi^0_k({ q})}{\partial  q_{k}}=-\delta^{k}\left(\prod_{s=0}^{k-1}(1- q_s)\right)\frac{1}{1-\delta(1-\tilde  q)}
\end{align*}
and
\begin{align*}
\frac{\partial  \psi^0_0({ q})}{\partial  q_{k-1}}&=\delta^{k-1}\left(\prod_{s=0}^{k-2}(1- q_s)\right)\frac{\partial  \psi^0_{k-1}({ q})}{\partial  q_{k-1}}=-\delta^{k-1}\left(\prod_{s=0}^{k-2}(1- q_s)\right)\frac{1-\delta( q_k-\tilde q)}{1-\delta(1-\tilde  q)}\\
&=-\delta^{k}\left(\prod_{s=0}^{k-1}(1- q_s)\right)\frac{1-\delta( q_k-\tilde q)}{(1-\delta(1-\tilde  q))\delta(1- q_{k-1})}.
\end{align*}
Therefore, if $ q_{k-1}<1$, then
\begin{align*}
\frac{d \psi^0_0({ q})}{d  q_{k}}&=\frac{\partial  \psi^0_0({ q})}{\partial  q_{k}}+\frac{\partial  \psi^0_0({ q})}{\partial  q_{k-1}}\frac{d q_{k-1}}{d  q_k}\\
&=-\delta^{k}\left(\prod_{s=0}^{k-1}(1- q_s)\right)\left(\frac{1}{1-\delta(1-\tilde q)}+\frac{1-\delta( q_k-\tilde q)}{(1-\delta(1-\tilde  q))\delta(1- q_{k-1})}\frac{d q_{k-1}}{d  q_k}\right)\\
&=-\frac{\delta^{k}}{1-\delta(1-\tilde  q)}\left(\prod_{s=0}^{k-1}(1- q_s)\right)\left(1-\frac{(1-\delta( q_k-\tilde q))(1-\sigma)}{1-\delta(1-\sigma)( q_{k}-\tilde q)}\right)\\
&=-\frac{\delta^{k}}{1-\delta(1-\tilde  q)}\left(\prod_{s=0}^{k-1}(1- q_s)\right)\frac{\sigma}{1-\delta(1-\sigma)( q_{k}-\tilde q)}\le 0.
\end{align*}
Alternatively, if $ q_{k-1}=1$, then $\psi^0_{0}({ q})$ is independent of $ q_k$, so $d \psi^0_0({ q})/d  q_{k}=0$. Thus, if $ q_k>\tilde  q$, then decreasing $ q_k$ increases $\psi_0^0({ q})$ without violating the constraint in \eqref{eredmax1}, as long as $ q_k\ge\tilde q$. 

Next, we deal with the case of $k=0$. By \eqref{E:A-TR2} and \eqref{E:A-TR7} we have
\[
\frac{d \psi_0^0({ q})}{d  q_0}=-1-\delta\psi_1^0({ q})<0.
\]
So, again, if $ q_0>\tilde  q$, then decreasing $ q_0$ increases $\psi_0^0({ q})$ without violating the constraint in \eqref{eredmax1}, as long as $ q_0\ge\tilde q$. 

We thus proved that if ${ q}$ is a solution of \eqref{eredmax1} with $ q_k>\tilde q$ and $ q_t=\tilde q$ for all $t>k$, then there exists a solution with  $ q_t=\tilde q$ for all $t\ge k$. As this is true for each $k=T,T-1,...,1,0$ by induction, we obtain that $\tilde { q}^\infty$ is a solution of \eqref{eredmax1}, so ${ q}_{\max}^{\sigma}=\tilde { q}^\infty$.
\end{proof}

\section*{Appendix B. General Environments}
\renewcommand{\thesection}{B}
\setcounter{subsection}{0}
\subsection{Proof of Proposition \ref{P:Stat}}\label{s:pstat}
We begin the proof with a lemma that will be useful here and in further proofs in Appendix B. 
\begin{lemma} Let $p$ be stationary. For each $y\ge x_0$ and each $F_{(z,\sigma)}\in\mathcal B_X$,
\begin{align}
&U_p(F_{(z,\sigma)},y)=\frac{p(y) y}{1-\delta(1-p(y))} &\text{if $z\le y$,}\label{E:U-Z}\\
&U_p(F_{(z,\sigma)},y)=\frac{p(y)y+(1-p(y))\delta \sigma \frac{p(z) z}{1-\delta(1-p(z))}}{1-\delta(1-p(y))(1-\sigma)} &\text{if $z>y$,}\label{E:U-Y}\\
&\frac{U_p(F_{(z,\sigma)},y)}{V(F_{(z,\sigma)},y)}=\frac{p(y)y+(1-p(y))\delta\sigma \frac{p(z)z}{1-\delta(1-p(z))}}{(1-\delta(1-p(y))(1-\sigma))\frac{\delta\sigma z}{1-\delta(1-\sigma)}} &\text{if $c_{F_{(z,\sigma)}}\ge y$.}\label{E:PR-1}
\end{align}
Moreover, if $p(y)$ is monotone, then
\begin{align}
&\frac{U_p(F_{(z,\sigma)},y)}{V(F_{(z,\sigma)},y)}\ge\frac{U_p(F_{(z,0)},y)}{V(F_{(z,0)},y)}=\frac{p(y)}{1-\delta(1-p(y))} &\text{if $c_{F_{(z,\sigma)}}\le y$.} \label{E:PR-0}
\end{align}
\end{lemma}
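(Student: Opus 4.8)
The plan is to exploit the stationarity recursion for $U_p$ together with the reservation-value formula \eqref{E:RVS}, treating the four assertions in order since each later one rests on the earlier. I would first establish \eqref{E:U-Z}: for a binary environment $F_{(z,\sigma)}$ with $z\le y$, every draw leaves the best-so-far alternative equal to $y$, because $\max\{y,0\}=\max\{y,z\}=y$. Hence the recursion collapses to the scalar equation $U_p(F_{(z,\sigma)},y)=p(y)y+(1-p(y))\delta\,U_p(F_{(z,\sigma)},y)$, and solving it gives \eqref{E:U-Z}.

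Next, for \eqref{E:U-Y} with $z>y$, the low draw $0$ keeps the best-so-far at $y$ while the high draw $z$ raises it to $z$, so the recursion becomes $U_p(F_{(z,\sigma)},y)=p(y)y+(1-p(y))\delta\big(\sigma U_p(F_{(z,\sigma)},z)+(1-\sigma)U_p(F_{(z,\sigma)},y)\big)$. The key observation is that once the best-so-far is $z$ we are back in the situation $z\le z$ of \eqref{E:U-Z} (with $y$ replaced by $z$), so $U_p(F_{(z,\sigma)},z)=\frac{p(z)z}{1-\delta(1-p(z))}$; substituting this and solving for $U_p(F_{(z,\sigma)},y)$ yields \eqref{E:U-Y}. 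For \eqref{E:PR-1} I would note that $c_{F_{(z,\sigma)}}\ge y$ forces $V(F_{(z,\sigma)},y)=\max\{y,c_{F_{(z,\sigma)}}\}=c_{F_{(z,\sigma)}}=\frac{\delta\sigma z}{1-\delta(1-\sigma)}$ by \eqref{E:RVS}. Since the factor $\frac{\delta\sigma}{1-\delta(1-\sigma)}$ is strictly below $1$ for $\delta<1$ and $\sigma>0$, we get $c_{F_{(z,\sigma)}}<z$; combined with $c_{F_{(z,\sigma)}}\ge y$ this yields $z>y$, so \eqref{E:U-Y} applies, and dividing it by $c_{F_{(z,\sigma)}}$ gives \eqref{E:PR-1}.

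The main obstacle is \eqref{E:PR-0}, where monotonicity is used for the first time. Here $c_{F_{(z,\sigma)}}\le y$ gives $V(F_{(z,\sigma)},y)=y$, and the equality at $\sigma=0$ is immediate from \eqref{E:U-Z} since no high draw ever occurs. For the inequality I would split on $z$: if $z\le y$ then \eqref{E:U-Z} gives equality, so the interesting case is $z>y$. There I would form the difference $U_p(F_{(z,\sigma)},y)-\frac{p(y)y}{1-\delta(1-p(y))}$ using \eqref{E:U-Y}, place it over the common denominator $\big(1-\delta(1-p(y))(1-\sigma)\big)\big(1-\delta(1-p(y))\big)$, and simplify the numerator to $\delta\sigma(1-p(y))\big[W(1-\delta(1-p(y)))-p(y)y\big]$, where $W=\frac{p(z)z}{1-\delta(1-p(z))}$.

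All prefactors are nonnegative, so the claim reduces to showing $W(1-\delta(1-p(y)))\ge p(y)y$, and this is the one step where the hypotheses genuinely bite. The map $a\mapsto\frac{a}{1-\delta(1-a)}$ is increasing for $\delta<1$ (its derivative has numerator $1-\delta>0$), so monotonicity $p(z)\ge p(y)$ gives $\frac{p(z)}{1-\delta(1-p(z))}\ge\frac{p(y)}{1-\delta(1-p(y))}$, whence $W\ge z\cdot\frac{p(y)}{1-\delta(1-p(y))}>y\cdot\frac{p(y)}{1-\delta(1-p(y))}$, i.e. $W(1-\delta(1-p(y)))>p(y)y$. This closes the inequality in \eqref{E:PR-0} and shows it is strict whenever $\sigma>0$, so equality holds exactly at $\sigma=0$, consistent with the stated equality for $F_{(z,0)}$. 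I expect the only delicate points to be the bookkeeping in the numerator simplification and the careful use of $\delta<1$ to guarantee both the strict inequality $c_{F_{(z,\sigma)}}<z$ and the strict monotonicity of the auxiliary map.
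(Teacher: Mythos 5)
Your proposal is correct and takes essentially the same route as the paper's proof: the two recursions yielding \eqref{E:U-Z} and \eqref{E:U-Y}, the observation that $c_{F_{(z,\sigma)}}\ge y$ forces $z>y$ so that \eqref{E:U-Y} can be divided by $V=c_{F_{(z,\sigma)}}$, and, for \eqref{E:PR-0}, the key bound $\frac{p(z)z}{1-\delta(1-p(z))}\ge\frac{p(y)y}{1-\delta(1-p(y))}$ obtained from monotonicity of $p$ and of $a\mapsto a/(1-\delta(1-a))$ together with $z>y$ (you package it as a difference-and-factor computation, the paper substitutes the bound directly, but the content is identical). The only quibble is your side-remark that the inequality in \eqref{E:PR-0} is strict whenever $\sigma>0$: this fails in the degenerate case $p(y)=0$ and $p(z)=0$ (both sides vanish), though the weak inequality asserted in the lemma holds in all cases, so the proof itself is unaffected.
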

\begin{proof}
If $z\le y$, then the best-so-far alternative never changes under $F_{(z,\sigma)}$. The payoff is $U_p(F_{(z,\sigma)},y)=p(y)y+(1-p(y))\delta U_p(F_{(z,\sigma)},y)$. Solving this equation for $U_p(F_{(z,\sigma)},y)$ yields \eqref{E:U-Z}. 
Alternatively, if $z>y$, then the payoff is
\begin{equation}\label{FundiIneq}
U_p(F_{(z,\sigma)},y)=p(y)y+(1-p(y))\delta(\sigma U_p(F_{(z,\sigma)},z)+(1-\sigma) U_p(F_{(z,\sigma)},y)).
\end{equation}
Inserting $y=z$ into \eqref{E:U-Z} yields $U_p(F_{(z,\sigma)},z)=\frac{p(z) z}{1-\delta(1-p(z))}$. Inserting this into \eqref{FundiIneq} and solving for $U_p(F_{(z,\sigma)},y)$ yields \eqref{E:U-Y}.
To prove \eqref{E:PR-1}, suppose that $c_{F_{(z,\sigma)}}=\frac{\delta\sigma z}{1-\delta(1-\sigma)}\ge y$. Note that $z\ge y/\delta>y$, since $\sigma\in[0,1]$. So, $U_p(F_{(z,\sigma)},y)$ is given by \eqref{E:U-Y} and $V(F_{(z,\sigma)},y)=\frac{\delta\sigma z}{1-\delta(1-\sigma)}$ by \eqref{E:RVS}, and  \eqref{E:PR-1} follows immediately.

Finally, to prove \eqref{E:PR-0}, suppose that $c_{F_{(z,\sigma)}}\le y$. Observe that $V(F_{(z,\sigma)},y)=y$ by \eqref{E:RVS}. If $z\le y$, then by \eqref{E:U-Z} the payoff ratio is
\[
\frac{U_p(F_{(z,\sigma)},y)}{V(F_{(z,\sigma)},y)}=\frac{p(y)}{1-\delta(1-p(y))}.
\]
Instead, if $z>y$, then, using \eqref{E:U-Y} and $p(z)\ge p(y)$ by the monotonicity of $p$, we have
\begin{align*}
\frac{U_p(F_{(z,\sigma)},y)}{V(F_{(z,\sigma)},y)}&=\frac{p(y)y+(1-p(y))\delta \sigma \frac{p(z) z}{1-\delta(1-p(z))}}{(1-\delta(1-p(y))(1-\sigma))y}\\ &\ge\frac{p(y)y+(1-p(y))\delta \sigma \frac{p(y) y}{1-\delta(1-p(y))}}{(1-\delta(1-p(y))(1-\sigma))y}=\frac{p(y)}{1-\delta(1-p(y))}.
\end{align*}
\end{proof}

We now prove Proposition \ref{P:Stat}. Fix $X\subset \R$. Let $\hat R_p(x_0)$ denote the smallest payoff ratio when facing an environment in $\mathcal B_X$, so
\[
\hat R_p(x_0)=\inf_{h\in \mathcal H(x_0)}\inf_{F\in {\mathcal B}_X} \frac{U_p(F,h)}{V(F,h)}.
\]
Note that $\hat R_p(x_0)$ is not that same value as the performance ratio $R_p(x_0,\mathcal B_X)$ calculated for rule $p$ in the binary setting. This is because in the binary setting the individual's choice is trivial whenever she observes an alternative above the outside option. 

To prove Proposition \ref{P:Stat}, we show that for each rule $p$ there exists a rule $q$ that is stationary, monotone, and has the monotone ratio property, such that
\[
R_p(x_0,\mathcal F_X)\le \hat R_p(x_0)\le \hat R_{q}(x_0)=R_{q}(x_0, \mathcal F_X).
\]
The first inequality trivially follows from ${\mathcal B}_X\subset \mathcal F_X$ and the definitions of $R_p$ and $\hat R_p$. Proposition \ref{P:Binary} proves the equality, $\hat R_{q}(x_0)=R_{q}(x_0, \mathcal F_X)$. We hasten to point out that the proof of Proposition \ref{P:Binary} does not depend on Proposition \ref{P:Stat}. It remains to prove that
\[
\hat R_p(x_0)\le \hat R_{q}(x_0).
\]

We divide the proof into three parts. In each part, we consider a decision rule $p$ that satisfies the restrictions imposed in the previous parts, and construct a different rule whose performance ratio over the set of binary environments is weakly better than that of $p$. 

\noindent{\it Part 1. Stationarity.} Let $p$ be a decision rule. We now construct a stationary rule $q$ whose performance ratio against environments in $\mathcal B_X$ is at least as high as that of $p$.

Let $\bar {\mathcal H}(y)$ denote the set of histories whose best-so-far alternative is $y$. Let $W(y)$ be the maximal payoff of rule $p$ among all these histories, against all binary environments in which no alternatives better than $y$ will ever emerge, so
\begin{equation}\label{E:W-Z}
W(y)=\sup_{h\in \bar {\mathcal H}(y)}\left(\sup_{F_{(z,\sigma)}\in {\mathcal B}_X: z\le y} U_p(F_{(z,\sigma)},h)\right).
\end{equation}
Define a stationary rule $q$ as follows. For each $y\ge x_0$, let $q(y)$ be the solution of
\[
W(y)=q(y)y+(1-q(y))\delta W(y).
\]
Note there exists a unique solution $q(y)\in[0,1]$, because $y\ge x_0>0$ and, by \eqref{E:RVS},
\begin{equation}\label{E:y0}
0\le  U_p(F_{(z,\sigma)},h)\le  V(F_{(z,\sigma)},h)=y \ \ \text{if $z\le y$ and $h\in\bar{\mathcal H}(y)$,}
\end{equation}
so, in particular, $0\le W(y)\le y$. 
We now prove that the change from $p$ to $q$ does not decrease the performance ratio, so $\hat R_p(x_0)\le \hat R_q(x_0)$.

Fix $y'\ge x_0$ and $F_{(z,\sigma)}\in\mathcal B_X$. 
Denote by $q|_{y'}p$ a decision rule in which the stopping probability is $q(y)$ whenever the best-so-far alternative is $y\ne y'$, and it is given by the original rule  $p(h)$ whenever the best-so-far alternative is $y'$, that is, $h\in\bar{\mathcal H}(y')$. 
We now prove that
\begin{equation}\label{E:yy-1}
\inf_{h'\in \bar{\mathcal H}(y')} U_p(F_{(z,\sigma)},h')\le \inf_{h\in\bar{\mathcal H}(y')} U_{q|_{y'}p}(F_{(z,\sigma)},h)\le U_{q}(F_{(z,\sigma)},y'). 
\end{equation}

To prove the first inequality in \eqref{E:yy-1}, we fix an arbitrary $h'\in\bar{\mathcal H}(y')$ and show that $U_{q|_{y'}p}(F_{(z,\sigma)},h')\ge U_p(F_{(z,\sigma)},h')$. We have
\[
 U_{q|_{y'}p}(F_{(z,\sigma)},h')=p(h')y'+(1-p(h'))\delta ((1-\sigma)U_{q|_{y'}p}(F_{(z,\sigma)},h'\oplus 0)+\sigma  U_{q|_{y'}p}(F_{(z,\sigma)},h'\oplus z)),
\]
where `$\oplus$' denotes the vector concatenation operator, so $h'\oplus 0$ is the vector $h'$ with $0$ appended at the end. 
If $z\le y'$, then $U_{q|_{y'}p}(F_{(z,\sigma)},h')$ is independent of $q$ (because the best-so-far alternative remains $y'$), so
\[
U_{q|_{y'}p}(F_{(z,\sigma)},h')=U_{p}(F_{(z,\sigma)},h').
\]
Otherwise, if $z>y'$, then, by the definitions of $W(z)$ and $q$, for each $k=0,1,...$,
\begin{equation}\label{E:y3}
U_{q|_{y'}p}(F_{(z,\sigma)},h'\oplus {\bf 0}^k\oplus z)=U_q(F_{(z,\sigma)},z)=W(z)\ge U_p(F_{(z,\sigma)},h'\oplus {\bf 0}^k\oplus z),
\end{equation}
where ${\bf 0}^k$ is the vector of $k$ zeros. So,
\begin{eqnarray*}
U_{q|_{y'}p}(F_{(z,\sigma)},h')&=&\sum_{k=0}^\infty \bigg[\big(p(h'\oplus {\bf 0}^k)y'+(1-p(h'\oplus {\bf 0}^k))\delta\sigma W(z)\big)\\
&&\times \delta^k(1-\sigma)^k\prod_{s=0}^{k-1}(1-p(h'\oplus {\bf 0}^s))\bigg]\\
&\ge &\sum_{k=0}^\infty \bigg[\big(p(h'\oplus {\bf 0}^k)y'+(1-p(h'\oplus {\bf 0}^k))\delta\sigma U_p(F_{(z,\sigma)},h'\oplus {\bf 0}^k\oplus z)\big)\\
&&\times \delta^k(1-\sigma)^k\prod_{s=0}^{k-1}(1-p(h'\oplus {\bf 0}^s))\bigg]=U_{p}(F_{(z,\sigma)},h').
\end{eqnarray*}
Summing up the above, we obtain $U_{q|_{y'}p}(F_{(z,\sigma)},h')\ge U_p(F_{(z,\sigma)},h')$ for each $h'\in\bar{\mathcal H}(y')$, thus proving the first inequality in \eqref{E:yy-1}.

Let us prove the second inequality in \eqref{E:yy-1}. If $z\le y'$, then $U_q(F_{(z,\sigma)},y')=W(y')\ge \inf_{h\in\bar{\mathcal H}(y')} U_{q|_{y'}p}(F_{(z,\sigma)},h)$ by the definitions of $W(y')$ and $q(y')$.

Alternatively, let $z>y'$. So, for each $k=0,1,2,...$, as long as $z$ has not been realized, the only possible history is $h'\oplus {\bf 0}^k$. Define
\[
 q_k'=p(h' \oplus {\bf 0}^k), \ \ \text{$k=0,1,2,...$}.
\]
This is the problem with binary environments analyzed in Section \ref{s:binary}, where $x_0=y'$ and $X=\{0,W(z)\}$, so the value of the high alternative is $W(z)$.  By Proposition \ref{P:StatB}, we can replace the sequence of probabilities $(q_0',q_1',...)$ by a constant sequence $\bar q^\infty=(\bar q,\bar q,...)$. Moreover, $\bar q=q(y')$ by \eqref{E:Benchmark} and the definitions of $W(y')$ and $q(y')$. We thus proved the second inequality in \eqref{E:yy-1}.

By \eqref{E:RVS}, $V(F_{(z,\sigma)},h')$ depends on $h'$ only through the best-so-far alternative $y'$, so $V(F_{(z,\sigma)},h')=V(F_{(z,\sigma)},y')$. It follows from \eqref{E:yy-1} that
\[
\inf_{h'\in \bar{\mathcal H}(y')} \frac{U_{p}(F_{(z,\sigma)},h')}{V(F_{(z,\sigma)},h')}=\frac{ \inf\limits_{h'\in \bar{\mathcal H}(y')} U_p(F_{(z,\sigma)},h')}{V(F_{(z,\sigma)},y')}\le \frac{ \inf\limits_{h'\in \bar{\mathcal H}(y')} U_{q|_{y'}p}(F_{(z,\sigma)},h')}{V(F_{(z,\sigma)},y')}\le \frac{U_{q}(F_{(z,\sigma)},y')}{V(F_{(z,\sigma)},y')}.
\]
The above holds for each $y'\ge x_0$ and each $F_{(z,\sigma)}\in\mathcal B_X$, thus proving $\hat R_p(x_0)\le \hat R_q(x_0)$.

\medskip
\noindent{\it Part 2. Monotonicity.} Consider a stationary rule $p$. 
Suppose that $p$ is nonmonotone, so there exist $y',y''$ such that $x_0\le y'<y''$ and $p(y')>p(y'')$. Define $q$ by
\begin{equation}\label{E:px}
q(y)=\sup_{y'\in[x_0,y]} p(y'), \ \ y\ge x_0.
\end{equation}
Note that, for all $y\ge x_0$,
\begin{equation}\label{E:px-1}
q(y)\ge p(y)\quad\text{and}\quad\frac{q(y)}{1-\delta(1-q(y))}\ge \frac{p(y)}{1-\delta(1-p(y))}.
\end{equation}
We now show that this change from $p$ to $q$ does not decrease the performance ratio. 
Consider any $y\ge x_0$. For each $F_{(z,\sigma)}$ such that $c_{F_{(z,\sigma)}}\le y$, 
\[
\frac{U_{q}(F_{(z,\sigma)},y)}{V(F_{(z,\sigma)},y)}\ge \frac{q(y)}{1-\delta(1- q(y))}\ge \frac{p(y)}{1-\delta(1- p(y))}=\frac{U_{p}(F_{(z,0)},y)}{V(F_{(z,0)},y)}\ge r_p(y),
\]
The first inequality is by \eqref{E:PR-0}, where we use the monotonicity of $q$ (by construction). The second inequality is by \eqref{E:px-1}. The equality is by \eqref{E:U-Z} and $V(F_{(z,0)},y)=y$. The last inequality is by the definition of $r_p(y)$ in \eqref{E:r-p}.

Next, for each $F_{(z,\sigma)}$ such that $c_{F_{(z,\sigma)}}>y$, by \eqref{E:PR-1}, 
\[
\frac{U_{q}(F_{(z,\sigma)},y)}{V(F_{(z,\sigma)},y)}=\frac{q(y)y+(1-q(y))\delta \sigma \frac{q(z) z}{1-\delta(1-q(z))}}{\left(1-\delta(1- q(y))(1-\sigma)\right)\frac{\delta\sigma z}{1-\delta(1-\sigma)}}.
\]
By the definition of $q(y)$, there exists $y'\in[x_0,y]$ such that $q(y)=p(y')$. Therefore,
\[
\frac{U_{q}(F_{(z,\sigma)},y)}{V(F_{(z,\sigma)},y)}\ge \frac{p(y')y'+(1-p(y'))\delta \sigma \frac{p(z) z}{1-\delta(1-p(z))}}{\left(1-\delta(1- p(y'))(1-\sigma)\right)\frac{\delta\sigma z}{1-\delta(1-\sigma)}}=\frac{U_{p}(F_{(z,\sigma)},y')}{V(F_{(z,\sigma)},y')}\ge r_p(y').
\]
We thus obtain that, for each $y\ge x_0$, $r_{q}(y)\ge r_{ p}(y')$ for some $y'\in[x_0,y]$. It follows that $\hat R_{q}(x_0)=\inf_{y\ge x_0} r_{q}(y)\ge \hat R_p(x_0)=\inf_{y\ge x_0} r_{p}(y)$. We conclude that, without loss of generality, we can restrict attention to monotone rules.


\medskip
\noindent{\it Part 3. Monotone Ratio Property.} 
Consider a monotone stationary rule $p$. Suppose that $r_{p}(y)$ defined by \eqref{E:r-p} is nonmonotone. 

First, we show that if $\bar x=\sup X<\infty$, then, without loss of generality, we can assume 
\begin{equation}\label{E:rpp}
\begin{split}
&r_{p}(y)=1 \quad \quad \quad \, \text{for all $y\ge \delta \bar x$, and}\\
&r_{p}(y)\ge \frac{y}{\delta \bar x}\ \ \text{for all $y\in[x_0,\delta\bar x]$.}
\end{split}
\end{equation}
The first line is trivial, as when $y\ge \delta\bar x$, one can trivially get the ratio of $1$ by stopping and getting the best-so-far alternative $y$. To show the second line, suppose that $r_{p}(y')<\frac{y'}{\delta \bar x}$ for some $y'$. Then define $q(y)=1$ for all $y\ge y'$ and $q(y)=p(y)$ for all $y<y'$. For each $y\ge y'$,
\[
r_{q}(y)=\frac{y}{\delta \bar x}\ge \frac{y'}{\delta \bar x}>r_{p}(y').
\]
For each $y\in[x_0,y')$, using \eqref{E:PR-1}, the definition of $r_{p}(y)$, and $\frac{q(z)}{1-\delta(1-q(z))}\ge \frac{p(z)}{1-\delta(1-p(z))}$, we obtain $r_{q}(y)\ge r_{p}(y)$. It follows that $\hat R_{q}(x_0)=\inf_{y\ge x_0} r_{q}(y)\ge \hat R_p(x_0)=\inf_{y\ge x_0} r_{p}(y)$. 


As $r_{p}(y)$ is nonmonotone, there exists $y'$ and $y''$ such that $\delta y''\le y'<y''$ and $r_{p}(y')>r_p(y'')=\inf_{y\ge y' }r_p(y)$. We now construct a monotone stationary rule $q(y)$ that differs from $p(y)$ only on the interval $[y',y'')$ and has the following properties: $r_{q}(y)$ is constant on $[y',y'')$, continuous at $y''$, and satisfies $\hat R_q(x_0)\ge \hat R_p(x_0)$. Let
\[
D(y,g)=\min\left\{\frac{g}{1-\delta(1-g)},\inf_{\substack{z>y'',\\ \sigma\in[0,1]}}\frac{g y+(1-g) \delta\sigma\frac{p(z) z}{1-\delta(1-p(z))}}{(1-\delta(1-g)(1-\sigma))\frac{\delta\sigma z}{1-\delta(1-\sigma)}}\right\}.
\]
Note that  $D(y,p(y))=r_{p}(y)$ for each $y\in[y',y'')$. This is by \eqref{E:PR-1} and \eqref{E:PR-0}, and the fact that $c_{F_{(z,\sigma)}}>y$ implies $z> y/\delta\ge y''$. Since it is assumed that $r_{p}(y)>r_p(y'')$ for each $y\in[y',y'')$, we have 
\[
D(y,p(y))>D(y'',p(y'')) \ \ \text{for each $y\in[y',y'')$.}
\]
Next, we have
\[
\frac{d}{dg} \left(\frac{g y+(1-g) \delta\sigma\frac{p(z) z}{1-\delta(1-p(z))}}{(1-\delta(1-g)(1-\sigma))\frac{\delta\sigma z}{1-\delta(1-\sigma)}}\right)=\frac{y(1-\delta(1-\sigma))-\delta\sigma \frac{p(z)z}{1-\delta(1-p(z))}}{\left((1-\delta(1-g)(1-\sigma))\frac{\delta\sigma z}{1-\delta(1-\sigma)}\right)^2},
\]
which has a sign that does not depend on $g$. So $\frac{g y+(1-g) \delta\sigma\frac{p(z) z}{1-\delta(1-p(z))}}{(1-\delta(1-g)(1-\sigma))\frac{\delta\sigma z}{1-\delta(1-\sigma)}}$ is monotone in $g$ for each $z$, $\sigma$ and $y$. Thus, $D(y,g)$ is a lower envelope of monotone functions, so it is quasiconcave in $g$ for each $y$. Moreover, 
\[
D(y,1)=\frac{y}{\delta \bar x}<D(y'',p(y''))=r_p(y'') \ \ \text{for each $y\in[y',y'')$,}
\]
because, by \eqref{E:rpp}, $r_p(y'')\ge \frac{y}{\delta \bar x}$. To sum up,
\[
D(y,1)<D(y'',p(y''))<D(y,p(y))  \ \ \text{for each $y\in[y',y'')$.}
\]
Since $D(y,g)$ is continuous and quasiconcave in $g$, for each $y\in[y',y'')$ there exists $g^*(y)\ge p(y)$ such that $D(y,g^*(y))=D(y'',p(y''))$.
Moreover, since $D(y,g)$ is increasing in $y$ for all $g$, by the monotone comparative statics theorem (\citeauthor{Milgrom} 1994, Theorem 4$'$), $g^*(y)$ is increasing.

Define a stationary rule $q$ as follows. For each $y\in[y',y'')$, let $q(y)=g^*(y)$, and for each $y\not\in[y',y'')$, let $q(y)=p(y)$. We thus obtain
\begin{equation}\label{E:Ineqpq}
q(y)\ge p(y) \quad \text{and} \quad \frac{q(y)}{1-\delta(1-q(y))}\ge \frac{p(y)}{1-\delta(1-p(y))} \ \ \text{for each $y\ge x_0$},
\end{equation}
and
\begin{equation}\label{E:Ineqpq1}
r_q(y)=r_p(y'')  \ \ \text{for each $y\in[y',y'']$.}
\end{equation}
Therefore, $r_q(y)$ is monotone on $[y',y'']$. Moreover, for each $y<y'$, by \eqref{E:PR-1} and \eqref{E:Ineqpq}, $r_{q}(y)\ge r_{p}(y)$. For each $y\in[y',y'']$, by \eqref{E:Ineqpq1}, $r_{q}(y)=r_{p}(y'')$. For each $y>y''$, by $q(y)=p(y)$, $r_{q}(y)=r_{p}(y)$.  We thus obtain that, for each $y\ge x_0$, $r_{q}(y)\ge \min\{r_{ p}(y),r_{ p}(y'')\}$. It follows that $\hat R_{q}(x_0)=\inf_{y\ge x_0} r_{q}(y)\ge \hat R_p(x_0)=\inf_{y\ge x_0} r_{p}(y)$. 
 
We thus conclude that, without loss of generality, we can restrict attention to rules $p$ such that $r_{p}(y)$ is weakly increasing in $y$.
This completes the proof.

\subsection{Proof of Proposition \ref{P:Binary}}\label{s:pbin}
Let $p$ be stationary, monotone, and satisfy the monotone ratio property. We now prove that
\[
R_{p}(x_0,\mathcal F_X)\ge\hat R_{p}(x_0)=\inf_{y\ge x_0}\inf_{F\in {\mathcal B}_X} \frac{U_p(F,y)}{V(F,y)}.
\] 
Let $\bar x=\sup X$. Note that $\bar x=\infty$ if $X$ is unbounded. Fix a best-so-far alternative $y$ such that $y\ge x_0$. Consider an arbitrary environment $G\in\mathcal F_X$, and denote its reservation value by $c$, so, by \eqref{E:RV},
\begin{equation}\label{E:RV2}
\int_{0}^{c} c \df G(x)+\int_{c}^{\bar x} x\df G(x)=\frac{c}{\delta}.
\end{equation}
We now find a binary environment $F_{(z,\sigma)}$ such that
\[
\frac{U_p(F_{(z,\sigma)},y)}{V(F_{(z,\sigma)},y)}\le \frac{U_p(G,y)}{V(G,y)}.
\]

Let us denote by $F_{(w,z,\sigma)}$ the lottery between $w$ and $z$ with probabilities $1-\sigma$ and $\sigma$, respectively. 
The construction of  $F_{(z,\sigma)}$ consists of two parts. In Part 1, we find $F_{(w,z,\sigma)}$ such that $U_p(F_{(w,z,\sigma)},y)\le U_p(G,y)$ and $V(F_{(w,z,\sigma)},y)\ge V(G,y)$. In Part 2, we show that $w=0$ is without loss of generality.

\medskip\noindent {\it Part 1.} Consider a stationary rule $p$. Fix a best-so-far alternative $y$ such that $y\ge x_0$. Consider an arbitrary environment $G\in\mathcal F_X$, and denote its reservation value by $c$, so, by \eqref{E:RV},
\begin{equation}\label{E:RV2}
\int_{0}^{c} c \df G(x)+\int_{c}^{\bar x} x\df G(x)=\frac{c}{\delta}.
\end{equation}
We now find an environment $F_{(w,z,\sigma)}$ such that
\[
\frac{U_p(F_{(w,z,\sigma)},y)}{V(F_{(w,z,\sigma)},y)}\le \frac{U_p(G,y)}{V(G,y)}.
\]
To find such $F_{(w,z,\sigma)}$, we first consider a one-shot deviation to some environment $F$ under the constraint $V(F,y)\ge V(G,y)$. The ``one-shot deviation'' means that the individual will face $F$ in the next round, and $G$ in all subsequent rounds. We will show that there exists $F=F_{(w,z,\sigma)}$ such that the individual's expected payoff against the sequence of environments $(F_{(w,z,\sigma)},G,G,...)$ is weakly lower than against the original i.i.d.~sequence $(G,G,G,...)$. We then show that this expected payoff is even lower if we replace $(F_{(w,z,\sigma)},G,G,...)$ by $(F_{(w,z,\sigma)},F_{(w,z,\sigma)},F_{(w,z,\sigma)},...)$, thus proving $U_p(F_{(w,z,\sigma)},y)\le U_p(G,y)$.

 Recall that
\begin{align}
U_p(G,y) &=p(y) y+(1-p(y))\delta \int_0^{\bar x} U_p(G,\max\{y,x\}) \df G(x).\label{L3-0}
\end{align}
Let us find a distribution $F$ that minimizes the individual's expected payoff against all one-shot deviation sequences $(F,G,G,...)$, subject to the constraint that the reservation value of $F$ is at least $c$ (which implies $V(F,y)\ge V(G,y)$):
\begin{align}
&\inf_{F\in \mathcal F_{X}} p(y) y+(1-p(y))\delta \int_0^{\bar x} U_p(G,\max\{y,x\}) \df F(x) \label{E:Gamma}\\
&\text{s.t.} \ \int_{0}^{c} c \df F(x)+\int_{c}^{\bar x} x\df F(x)\ge\frac{c}{\delta}.\label{E:GammaC}
\end{align}
Observe that the constraint \eqref{E:GammaC} does not impose any restrictions on how a mass $F(c)$ is assigned to the interval $[0,c]$. Thus, any positive mass $F(c)$ should be assigned to a point $x'$ that minimizes $U_p(F,\max\{y,x'\})$ on $[0,c]\cap X$. So, we can simplify the problem \eqref{E:Gamma}--\eqref{E:GammaC} by using the notation
\[
u(x)=\begin{cases}
\inf_{x'\in [0,c]\cap X} U_p(G,\max\{y,x'\}), & x=c,\\ 
U_p(G,\max\{y,x\}), &x>c, \, x\in X,
\end{cases}
\]
where $u(x)$ is linearly extended to $(c,\bar x)\backslash X$, so $u(x)$ is defined on $[c,\bar x]$.
The problem \eqref{E:Gamma}--\eqref{E:GammaC}  reduces to
\begin{align}
&\inf_{F\in \mathcal F_{X}} \int_{c}^{{\bar x}}u(x)\df F(x) 
\ \ \text{s.t.} \ \int_{c}^{\bar x} x\df F(x)\ge\frac{c}{\delta}.\label{E:Gamma1}
\end{align}

\begin{figure}
\begin{tabular}{ccc}
\hspace*{-0.3in}\includegraphics[width=150pt]{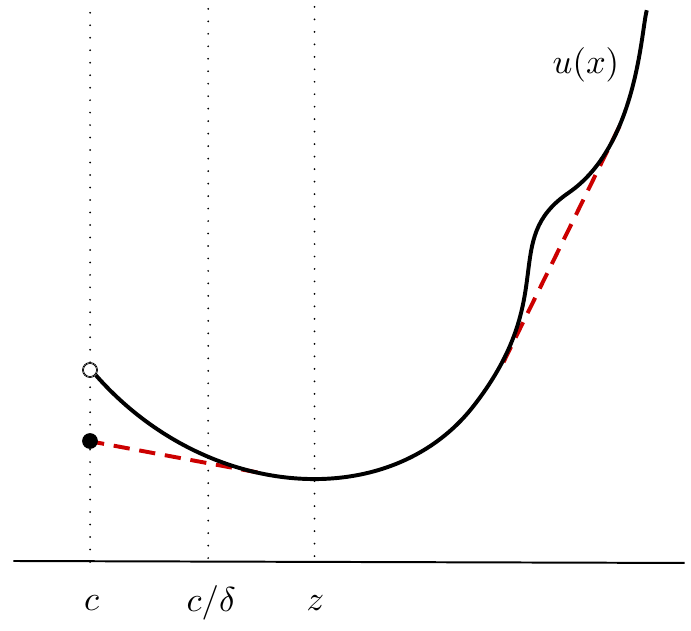} & \includegraphics[width=150pt]{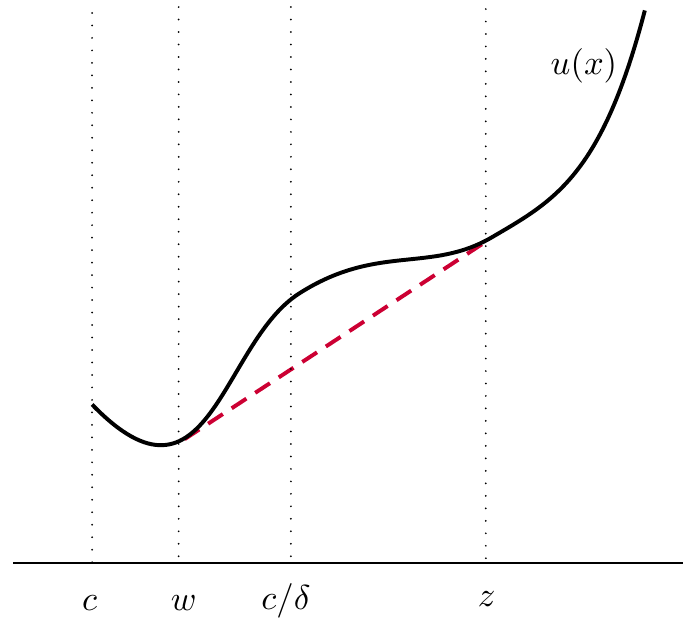} & \includegraphics[width=150pt]{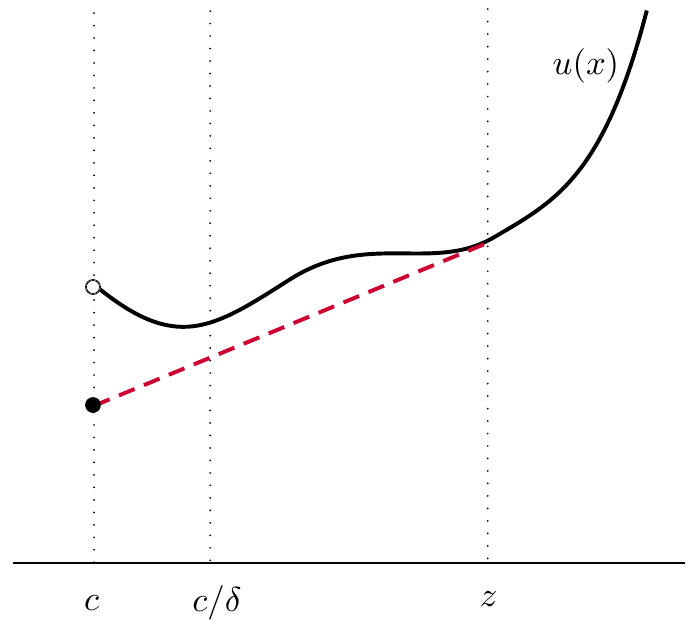} \\
\footnotesize{(a)}&\footnotesize{(b)} &\footnotesize{(c)}
\end{tabular}
\caption{\small Illustration of three cases that arise when solving \eqref{E:Gamma1}.}\label{f:1}
\end{figure}

This problem is solved by the convexification method as in \citeasnoun{KG}, by minimizing the convex closure of $u$ (i.e., the supremum among all continuous and convex functions that do not exceed $u$) on the set $[c,\bar x]$, and thus yielding a solution with a support on at most two points, $w$ and $z$. Figure~\ref{f:1} illustrates how such a solution is found for three different shapes of $u(x)$. The solid curve is $u(x)$, which can be discontinuous at $c$, and the dashed line is the convex closure of $u(x)$ where it is different from $u(x)$. In Figure~\ref{f:1}(a) the minimum of $u(x)$ is attained at $z\ge c/\delta$, so the solution puts the unit mass on the single point $z$. In Figs.~\ref{f:1}(b) and \ref{f:1}(c) the minimum of $u(x)$ is below $c/\delta$, so the solution minimizes the convex closure of $u(x)$ at $x=c/\delta$. In Figure~\ref{f:1}(b) it is obtained by a convex combination of two points, $w$ and $z$, and in Figure~\ref{f:1}(c) it is obtained by a convex combination of $c$ and $z$ as shown on the picture. Note that in the last case, to solve the problem \eqref{E:Gamma}--\eqref{E:GammaC}, one must replace $c$ with a point $w\le c$ where the value $u(c)=\inf_{x'\in [0,c]} U_p(G,\max\{y,x'\})$ is achieved.

Let us formalize the above. For every $\eps>0$ there exists $(w,z,\sigma)$ such that 
\begin{align}
\int_c^\infty u(x)\df F(x)&\ge \int_c^\infty u(x)\df F_{(w,z,\sigma)}(x)-\eps\notag\\
&=(1-\sigma)u(\max\{w,c\})+\sigma u(z)-\eps,\label{L3-1}
\end{align}
and $F_{(w,z,\sigma)}$ satisfies the constraint in \eqref{E:Gamma1}, so
\begin{equation}\label{L3-10}
(1-\sigma)\max\{w,c\}+\sigma z\ge c/\delta.
\end{equation}
Therefore, $U_p(G,y)+\eps$ is weakly greater than the individual's expected payoff against the sequence of environments  $(F_{(w,z,\sigma)},G,G,...)$, where $F_{(w,z,\sigma)}$ satisfies \eqref{L3-1} and \eqref{L3-10}.

We now show that the individual's expected payoff is even lower if we replace $(F_{(w,z,\sigma)}$, $G,G,...)$ by $(F_{(w,z,\sigma)},F_{(w,z,\sigma)},F_{(w,z,\sigma)},...)$.

\medskip
\noindent {\it Case 1.} Suppose that $u(x)$ attains its infimum at or above $c/\delta$, that is,  $\inf_{x\in[c,c/\delta)} u(x)\ge \inf_{x\in[c/\delta,\bar x]}  u(x)$, as shown in Figure~\ref{f:1}(a). Then the constraint in \eqref{E:Gamma1} is not binding. So, for every $\eps>0$ we can find $z_\eps\ge c/\delta$ such that $F_{(0,z_\eps,1)}$ that assigns the unit mass on $z_\eps$ satisfies \eqref{L3-1}. 

By \eqref{L3-0}, \eqref{L3-1}, and the definition of $u(x)$ we have
\begin{align*}
u(z_\eps)\ge p(z_\eps)z_\eps+(1-p(z_\eps))\delta (u(z_\eps)-\eps).
\end{align*}
Solving the inequality for $u(z_\eps)-\eps$, we have $u(z_\eps)-\eps\ge \frac{p(z_\eps) z_\eps-\eps}{1-\delta(1-p(z_\eps))}.$
Therefore,
\begin{align*}
U_p(G,y)&\ge u(y)\ge p(y)y+(1-p(y))\delta(u(z_\eps)-\eps)\ge p(y)y+(1-p(y))\delta \frac{p(z_\eps) z_\eps-\eps}{1-\delta(1-p(z_\eps))}\\
&=U_p(F_{(0,z_\eps,1)},y)-(1-p(y))\delta\frac{ \eps}{1-\delta(1-p(z))}\ge U_p(F_{(0,z_\eps,1)},y)-\frac{\delta \eps}{1-\delta}.
\end{align*}
Also, observe that, by \eqref{OptV} and $z_\eps\ge c/\delta$,
\[
V(G,y)=\max\{y,c\}\le \max\{y,\delta z_\eps\}=V(F_{(0,z_\eps,1)},y).
\]
Thus, for every $\eps>0$ we find $z_\eps$ such that
\begin{equation}\label{E:Conc1}
\frac{U_p(G,y)}{V(G,y)}\ge \frac{U_p(F_{(0,z_\eps,1)},y)-\frac{\delta \eps}{1-\delta}}{V(F_{(0,z_\eps,1)},y)}.
\end{equation}
In particular, $F_{(0,z_\eps,1)}$ satisfies  \eqref{L3-1} and \eqref{L3-10} if one replaces $\eps$ by $\delta\eps/(1-\delta)$.

\medskip
\noindent{\it Case 2.} Suppose that $u(x)$ does not attain its infimum on $[c/\delta,\infty)$, that is, 
\begin{equation}\label{L3-2}
\inf_{x\in[c,c/\delta)} u(x)<\inf_{x\in[c/\delta,\bar x]}  u(x),
\end{equation}
as shown in Figs.~\ref{f:1}(b) and \ref{f:1}(c). Then the constraint in \eqref{E:Gamma1} is binding. So, for every $\eps>0$ there exists $(w_\eps,z_\eps,\sigma_\eps)$ with $w_\eps\le c/\delta\le z_\eps$ such that $F_{(w_\eps,z_\eps,\sigma_\eps)}$ satisfies \eqref{L3-1}, and satisfies  \eqref{L3-10} with equality.

As the solution lies on the convex closure of $u(x)$, the straight line through points $(w_\eps, u(w_\eps)-\eps)$ and $(z_\eps,u(z_\eps)-\eps)$ is weakly below the graph of $u$. Moreover, by  \eqref{L3-2}, the slope of this straight line is nonnegative, so $u(w_\eps)\le u(z_\eps)$. Thus, we obtain
\begin{equation}\label{E:wzineq}
\begin{split}
&u(w_\eps)-\eps\le u(x) \ \ \text{for all $x\ge w_\eps$,}\\
&u(z_\eps)-\eps\le u(x) \ \ \text{for all $x\ge z_\eps$}.
\end{split}
\end{equation}
As in Case 1, it follows that 
\[
u(z_\eps)-\eps\ge \frac{p(z_\eps) z_\eps-\eps}{1-\delta(1-p(z_\eps))}\ge U_p(F_{(w_\eps,z_\eps,\sigma_\eps)},z_\eps)-\frac{\eps}{1-\delta}.
\]
Also,
\begin{align*}
u(w_\eps)&\ge p(w_\eps)w_\eps+(1-p(w_\eps))\delta ((1-\sigma_\eps)(u(w_\eps)-\eps)+\sigma_\eps (u(z_\eps)-\eps))\\
&\ge p(w_\eps)w_\eps+(1-p(w_\eps))\delta ((1-\sigma_\eps)\left(u(w_\eps)-\eps)+\sigma_\eps  \frac{p(z_\eps) z_\eps-\eps}{1-\delta(1-p(z_\eps))}\right).
\end{align*}
Solving for $u(w_\eps)-\eps$, we obtain
\[
u(w_\eps)-\eps\ge \frac{p(w_\eps)w_\eps-\eps+(1-p(w_\eps))\delta\sigma_\eps  \frac{p(z_\eps) z_\eps-\eps}{1-\delta(1-p(z_\eps))}}{1-\delta(1-\sigma_{\eps})(1-p(w_\eps)}\ge U_p(F_{(w_\eps,z_\eps,\sigma_\eps)},\max\{y,w_\eps\})-\frac{\eps}{(1-\delta)^2}.
\]
We thus obtain
\begin{align*}
&U_p(G,y)\ge u(y)=p(y) y+(1-p(y))\delta \int_{0}^\infty u(x)\df G(x)\\
&\ge p(y) y+(1-p(y))\delta\left((1-\sigma_\eps)(u(w_\eps)-\eps)+\sigma_\eps  (u(z_\eps)-\eps)\right)\\ 
&\ge p(y) y+(1-p(y))\delta\left((1-\sigma_\eps)U_p(F_{(w_\eps,z_\eps,\sigma_\eps)},\max\{y,w_\eps\})+\sigma_\eps  U_p(F_{(w_\eps,z_\eps,\sigma_\eps)},z_\eps)\right)-\frac{\delta\eps}{(1-\delta)^2}\\ &=U_p(F_{(w_\eps,z_\eps,\sigma_\eps)},y)-\frac{\delta\eps}{(1-\delta)^2}.
\end{align*}
By \eqref{OptV} and the fact that the constraint in \eqref{E:Gamma1} is binding, observe that
\[
V(G,y)=\max\{y,c\}=V(F_{(w_\eps,z_\eps,\sigma_\eps)},y).
\]
Thus, for every $\eps>0$ there exists an environment $F_{(w_\eps,z_\eps,\sigma_\eps)}$ such that
\begin{equation}\label{E:Conc2}
\frac{U_p(G,y)}{V(F,y)}\ge \frac{U_p(G_{(w_\eps,z_\eps,\sigma_\eps)},y)-\frac{\delta\eps}{(1-\delta)^2}}{V(F_{(w_\eps,z_\eps,\sigma_\eps)},y)}.
\end{equation}
In particular, $F_{(w_\eps,z_\eps,\sigma_\eps)}$ satisfies  \eqref{L3-1} and \eqref{L3-10} if one replaces $\eps$ by $\delta\eps/(1-\delta)^2$.

\medskip
Taking $\eps\to 0$ in \eqref{E:Conc1} and \eqref{E:Conc2}, we conclude that, for each best-so-far alternative $y\ge x_0$ and each environment $G\in\mathcal F_X$,
\[
\frac{U_p(G,y)}{V(G,y)}\ge \inf_{(w,z,\sigma)} \frac{U_p(F_{(w,z,\sigma)},y)}{V(F_{(w,z,\sigma)},y)} \ \ \text{s.t. $w\le c/\delta\le z$, $\sigma\in[0,1]$, and $w,z\in X$.}
\]
It follows that
\[
R_p(x_0)\ge \inf_{y\ge x_0}\inf_{(w,z,\sigma)} \frac{U_p(F_{(w,z,\sigma)},y)}{V(F_{(w,z,\sigma)},y)} \ \ \text{s.t. $w\le c/\delta\le z$, $\sigma\in[0,1]$, and $w,z\in X$.}
\]
Thus we have shown that we can restrict attention to environments $F_{(w,z,\sigma)}$.

\medskip\noindent{\it Part 2.} We now show that we can further restrict the set of environments to binary environments $F_{(z,\sigma)}=F_{(0,z,\sigma)}$, so $w=0$. In other words, for each $y$, no environment $F_{(w,z,\sigma)}$ with $0<w<z$ and $\sigma<1$ can generate a payoff ratio $U_p(F_{(w,z,\sigma)},y)/V(F_{(w,z,\sigma)},y)$ smaller than $r_p(y)$ given by \eqref{E:r-p}.

By contradiction, suppose that there exists $y$ and an environment $F_{(w,z,\sigma)}$ with $0<w<z$ and $\sigma<1$ that gives a strictly smaller ratio than $r_p(y)$, so 
\begin{equation}\label{E:PPT0}
\frac{U_p(F_{(w,z,\sigma)},y)}{V(F_{(w,z,\sigma)},y)}<r_p(y).
\end{equation}
Let $c_{F_{(w,z,\sigma)}}$ denote the reservation value of $F_{(w,z,\sigma)}$ as defined by \eqref{E:RV}. 

Because the rule $p$ only depends on the best-so-far alternative, we only need to consider $w>y$, as otherwise $F_{(w,z,\sigma)}$ yields the same payoff as $F_{(0,z,\sigma)}$.
We have
\begin{equation}\label{E:PPT1-0}
U_p(F_{(w,z,\sigma)},y)=p(y)y+(1-p(y))\delta(\sigma U_p(F_{(w,z,\sigma)},z)+(1-\sigma) U_p(F_{(w,z,\sigma)},w))
\end{equation}
and
\begin{equation}\label{E:PPT1-1}
U_p(F_{(w,z,\sigma)},z)=U_p(F_{(0,z,1)},z).
\end{equation}

\medskip\noindent {\it Case 1}. Assume $U_p(F_{(w,z,\sigma)},w)\ge U_p(F_{(w,z,\sigma)},z)$ (see Figure~\ref{f:1}(a)). By \eqref{E:PPT1-0} and \eqref{E:PPT1-1},
\[
U_p(F_{(w,z,\sigma)},y)\ge p(y)y+(1-p(y)) U_p(F_{(w,z,\sigma)},z) = U_p(F_{(0,z,1)},y).
\]
Together with $V(F_{(w,z,\sigma)},y)\le V(F_{(0,z,1)},y)$, it follows that $\frac{U_p(F_{(w,z,\sigma)},y)}{V(F_{(w,z,\sigma)},y)}\ge \frac{U_p(F_{(0,z,1)},y)}{V(F_{(0,z,1)},y)}\ge r_p(y)$, which is a contradiction.

\medskip\noindent{\it Case 2}. Assume $U_p(F_{(w,z,\sigma)},w)<U_p(F_{(w,z,\sigma)},z)$ and $w>c_{F_{(w,z,\sigma)}}$ (see Figure~\ref{f:1}(b)). Then the optimal rule stops in the next round with certainty and, by \eqref{OptV}, yields the payoff of 
\[
V(F_{(w,z,\sigma)},y)=\delta((1-\sigma)w+\sigma z)=(1-\sigma)V(F_{(0,w,1)},y)+\sigma V(F_{(0,z,1)},y).
\]
Also, by \eqref{E:PPT1-0} and  \eqref{E:PPT1-1}, and using $U_p(F_{(w,z,\sigma)},z)>U_p(F_{(w,z,\sigma)},w)$, we obtain
\begin{align*}
U_p(F_{(w,z,\sigma)},w)&=p(y)y+(1-p(y))\delta((1-\sigma)U_p(F_{(w,z,\sigma)},w)+\sigma U_p(F_{(w,z,\sigma)},z))\\
&\ge p(y)y+(1-p(y))\delta U_p(F_{(w,z,\sigma)},w)=U_p(F_{(0,w,1)},y),
\end{align*}
and thus
\begin{align*}
U_p(F_{(w,z,\sigma)},y)&= p(y)y+(1-p(y))\delta((1-\sigma)U_p(F_{(w,z,\sigma)},w)+\sigma U_p(F_{(w,z,\sigma)},z))\\
&\ge (1-\sigma) U_p(F_{(0,w,1)},y)+\sigma U_p(F_{(0,z,1)},y).
\end{align*}
So,
\begin{align*}
\frac{U_p(F_{(w,z,\sigma)},y)}{V(F_{(w,z,\sigma)},y)}&\ge\frac{(1-\sigma)U_p(F_{(0,w,1)},y)+\sigma U_p(F_{(0,z,1)},y)}{(1-\sigma)V(F_{(0,z,1)},y)+\sigma V(F_{(0,w,1)},y)}\\
&\ge \min\left\{\frac{U_p(F_{(0,w,1)},y)}{V_p(F_{(0,w,1)},y)},\frac{U_p(F_{(0,z,1)},y)}{V_p(F_{(0,z,1)},y)}\right\}\ge r_p(y),
\end{align*}
which is a contradiction.

\medskip\noindent{\it Case 3}. Let $U_p(F_{(w,z,\sigma)},z)>U_p(F_{(w,z,\sigma)},w)$ and $w\le c_{F_{(w,z,\sigma)}}$ (see Figure~\ref{f:1}(c)). Then the optimal rule waits for the realization of $z$ and, by \eqref{OptV}, satisfies
\begin{equation}\label{E:PPT5}
V(F_{(w,z,\sigma)},y)=V(F_{(0,z,\sigma)},y).
\end{equation}
Rearranging \eqref{E:U-Z} we obtain
\begin{equation}\label{E:PPT1}
p(y)y+(1-p(y))\delta \sigma U_p(F_{(0,z,\sigma)},z)=(1-\delta(1-p(y))(1-\sigma))U_p(F_{(0,z,\sigma)},y).
\end{equation}
Thus,
\begin{align}\label{E:PPT2}
U_p(F_{(w,z,\sigma)},w)&=\frac{U_p(F_{(w,z,\sigma)},w)}{V(F_{(w,z,\sigma)},w)}V(F_{(w,z,\sigma)},w)=\frac{U_p(F_{(0,z,\sigma)},w)}{V(F_{(0,z,\sigma)},w)}V(F_{(w,z,\sigma)},w)\tag*{}\\
&\ge r_p(w) V(F_{(w,z,\sigma)},w)\ge r_p(y)V(F_{(w,z,\sigma)},w)\ge r_p(y)V(F_{(w,z,\sigma)},y)\notag\\
&> \frac{U_p(F_{(w,z,\sigma)},y)}{V(F_{(w,z,\sigma)},y)} V(F_{(w,z,\sigma)},y)=U_p(F_{(w,z,\sigma)},y),\label{E:PPT2}
\end{align}
where the first inequality is by the definition of $r_p(w)$, the second inequality is by the assumption that $r_p(y)$ is nondecreasing, the third inequality follows from \eqref{OptV}, and the fourth inequality is by \eqref{E:PPT0}. Then, using \eqref{E:PPT1-0}, \eqref{E:PPT1-1}, \eqref{E:PPT1}, and \eqref{E:PPT2}, we obtain
\[
U_p(F_{(w,z,\sigma)},y)>(1-\delta(1-\sigma)(1-p(y)))U_p(F_{(0,z,\sigma)},y)+\delta(1-\sigma)(1-p(y))U_p(F_{(w,z,\sigma)},y).
\]
Since $1-\delta(1-\sigma)(1-p(y))>0$, it follows that
\[
U_p(F_{(w,z,\sigma)},y)>U_p(F_{(0,z,\sigma)},y).
\]
Since $V(F_{(w,z,\sigma)},y)= V(F_{(0,z,\sigma)},y)$ by \eqref{E:PPT5}, we obtain
\[
\frac{U_p(F_{(w,z,\sigma)},y)}{V(F_{(w,z,\sigma)},y)}>\frac{U_p(F_{(0,z,\sigma)},y)}{V(F_{(0,z,\sigma)},y)}\ge r_p(y),
\]
which is a contradiction. 
This completes the proof.

\subsection{Proof of Theorem \ref{T:Quarter}}\label{s:tquarter}

{\it Part (a).} We need to show that the decision rule $\bar p$ given by the constant stopping probability
\begin{equation*}\label{E:pstar}
\bar p(y)=\bar \pi=\frac{1-\delta}{2-\delta} \ \ \text{for all $h_t$}
\end{equation*}
always yields a performance ratio of at least $1/4$. 

Note that $\bar p$ is stationary and monotone. It also has the monotone ratio property, as can be easily verified by substitution of $p(y)=p(z)=\bar\pi$ into \eqref{E:PR-1} and \eqref{E:PR-0}. By Proposition \ref{P:Binary}, we can restrict attention to binary environments in ${\mathcal B}_X$.

First, suppose that $c_{F_{(z,\sigma)}}\le y$. Using $p(y)=\bar\pi=(1-\delta)/(2-\delta)$, we have by \eqref{E:PR-0}
\begin{equation}\label{e-rat0}
\frac{U_{\bar p}(F_{(z,\sigma)},y)}{V(F_{(z,\sigma)},y)}\ge\frac{\bar\pi}{1-\delta(1-\bar\pi)}=\frac 1 2>\frac 1 4.
\end{equation}
Next, suppose that $c_{F_{(z,\sigma)}}>y$. By \eqref{E:PR-1}, using ${ p}(y)={ p}(z)=\bar\pi$, we obtain
\begin{align}
\frac{U_{\bar p}( F_{(z,\sigma)},y)}{V( F_{(z,\sigma)},y)}&=\frac{\bar\pi y+(1-\bar\pi)\delta\sigma \frac{\bar\pi z}{1-\delta(1-\bar\pi)}}{(1-\delta(1-\sigma)(1-\bar\pi))\frac{\delta\sigma z}{1-\delta(1-\sigma)}}> \frac{(1-\bar\pi)\frac{\bar\pi}{1-\delta(1-\bar\pi)}}{(1-\delta(1-\sigma)(1-\bar\pi))\frac{1}{1-\delta(1-\sigma)}} \notag\\
&\ge \frac{\bar\pi}{1-\delta (1-\bar\pi)}\left(1-\frac{\bar\pi}{1-\delta (1-\bar\pi)}\right)=\frac{1}{4},\label{e-rat1}
\end{align}
where the first inequality is by $y>0$, the second equality is by the minimum w.r.t.~$\sigma\in[0,1]$ attained at $\sigma=0$, and the last equality is by $\bar\pi=(1-\delta)/(2-\delta)$. 

Note that, the expression
\[
\frac{\bar\pi}{1-\delta (1-\bar\pi)}=\bar\pi+\delta(1-\bar\pi)\bar\pi+\delta^2(1-\bar\pi)^2\bar\pi+... 
\]
is the reciprocal of the expected delay of obtaining $z$ after its realization, and the expression $1-\frac{\bar\pi}{1-\delta (1-\bar\pi)}$ is the probability of not stopping before $z$ realizes for the first time. Setting $\frac{\bar\pi}{1-\delta (1-\bar\pi)}$ equal to $1/2$ maximizes the product, leading to a payoff ratio of $1/4$.

{\it Part (b).} Let $\sup X=\infty$. As shown above, the rule $\bar p$ yields $R_{\bar p}(x_0,\mathcal F_X)=1/4$. We now show that no other rule can achieve more than $1/4$, thus proving that $\bar p$ is dynamically robust.

By Proposition \ref{P:Stat}, we can restrict attention to decision rules that are stationary, monotone, and have the monotone ratio property. Consider any such rule $p(y)$. 

As $\sup X=\infty$, there exists an increasing sequence $(y_n)_{n\in\N}$ of elements in $X$ such that $y_1\ge x_0$ and $\lim_{n\to\infty} y_n=\infty$. Let $(z_k)_{k\in\N}$ be an increasing subsequence of $(y_n)_{n\in\N}$ and let $(\sigma_k)_{k\in \N}$ be a decreasing sequence of probabilities that satisfy the following. For all $k\in\N$,
\begin{equation}\label{E:wzs}
z_k\ge y_k,   \quad \lim_{k\to\infty} \sigma_k=0, \quad \text{and}\quad \lim_{k\to\infty} \frac {y_k}{\sigma_k z_k}=0,
\end{equation}
and, in addition,
\begin{equation}\label{E:vk}
c_k:=\frac{\delta \sigma_k z_k}{1-\delta(1-\sigma_k)}>y_k.
\end{equation}
Such sequences always exist, as $z_k$ can be chosen to increase fast enough relative to $y_k$. E.g., if $X=\R$, choose $\sigma_k=1/k$, $w_k=k$, and $z_k=k^3$. 

For each $k\in\N$, consider the binary environment $F_{(z_k,\sigma_k)}$. Let $y_k$ denote the best-so-far alternative. By \eqref{E:RVS}, $c_k$ is the reservation value for the environment $F_{(z_k,\sigma_k)}$. By \eqref{E:vk}, $c_k>y_k$, so the optimal rule waits for $z_k$ to realize. Therefore, by \eqref{E:PR-1},
\[
R_p(x_0,\mathcal F_X)\le \frac{U_p(F_{(z_k,\sigma_k)},y_k)}{V(F_{(z_k,\sigma_k)},y_k)}=\frac{\left(p(y_k)\frac{y_k}{\delta \sigma_k z_k}+(1-p(y_k)) \frac{p(z_k)}{1-\delta(1-p(z_k))}\right)(1-\delta(1-\sigma_k))}{1-\delta(1-\sigma_k)(1-p(y_k))}.
\]
As $p(y)$ is nondecreasing, and $y_k$ and $z_k$ diverge, both $p(y_k)$ and $p(z_k)$ converge to the same probability denoted by $\bar q$:
\[
\bar q=\lim_{k\to\infty}p(y_k)=\lim_{k\to\infty}p(z_k).
\]
Using the above and \eqref{E:wzs}, we obtain 
\begin{align*}
R_p(x_0,\mathcal F_X)\le \lim_{k\to\infty}\frac{U_p(F_{(z_k,\sigma_k)},w_k)}{V(F_{(z_k,\sigma_k)},w_k)}=\frac{(1-\bar q)\frac{\bar q}{1-\delta(1-\bar q)}(1-\delta)}{1-\delta(1-\bar q)}=\frac{(1-\delta)(1-\bar q)\bar q}{(1-\delta(1-\bar q))^2}\le \frac 1 4,
\end{align*}
where the last inequality is easily verified for $\delta\in(0,1)$ and $\bar q\in[0,1]$.

\subsection{Proof of Theorem \ref{P:Rules}}\label{s:prules}
Let $\sup X<\infty$. By rescaling the values, without loss of generality assume that
\[
\sup X=1.
\]
The proof consists of two steps. In Step 1, we assume that the set of feasible environments is  $\mathcal F_{\{0,1\}}$ and find a dynamically robust rule for each $x_0>0$. This rule will be different from the rule $q^*$ that we found in Section \ref{s:binary}. 
In Step 2, we expand the set of feasible environments to $\mathcal F_{X}$, where $X\subset[0,1]$ and $\{0,1\}\subset X$. We show that the previously derived rule attains the same performance ratio if and only if the outside option $x_0$ exceeds some constant $L>0$. We numerically find an upper bound for this constant, which is $1/89$.\footnote{Specifically, we fix $\delta$ and numerically (using Maple software) find the smallest value of $x_0\in(0,1]$ such that the payoff ratio over all environments in $\mathcal F_{[0,1]}$ is minimized by an environment that randomizes between 0 and 1. Let us call this value $L_\delta$. Thus, as long as $x_0\ge L_\delta$, the restriction to $\mathcal F_{\{0,1\}}$ is w.l.o.g. It turns out that the numerically calculated value $L_\delta$ is constant in $\delta$ and is approximately equal to (bounded from above by) $1/89$.}  

\medskip
{\it Step 1.} Fix $x_0\in(0,1]$, and denote
\[
r^*(x_0)=R^*(x_0,\mathcal B_{\{0,1\}}),
\]
where $R^*(x_0,\mathcal B_{\{0,1\}})$ is the dynamically robust ratio for binary environments $\mathcal B_{\{0,1\}}$ given by \eqref{E:R-Bin}. Note that $r^*(x_0)>1/2$ for all $x_0>0$. For binary environments $\mathcal B_{\{0,1\}}$ and rule $q$, recall that the payoff ratio $r_q(y)$ is given by
\[
 r_q(y)=\inf_{F\in\mathcal B_{\{0,1\}}}\frac{U_q(y,F)}{V(y,F)}, \ \ y\in[x_0,\delta r^*(x_0)).
\]
For each best-so-far alternative $y\in[x_0,1]$, we find the greatest probability of stopping, $p_{x_0}(y)$, under the constraint that the payoff ratio is equal to $r^*(x_0)$:
\begin{equation}\label{E:IneqR1}
p_{x_0}(y)=\max\{ q\in[0,1]: r_ q(y)\ge r^*(x_0)\}.
\end{equation}
Following steps \eqref{e-r0} and \eqref{e-r1} in the proof of Theorem \ref{T:Half}, for each $y\in[x_0,1]$ we have
\begin{align}
r_q(y)&=\min\left\{\frac{q}{1-\delta(1-q)},\min_{\sigma\in[0,1]}\frac{q y+(1-q)\delta \sigma}{(1-\delta(1-\sigma)(1-q))\frac{\delta\sigma}{1-\delta(1-\sigma)}}\right\}.\label{e-R}
\end{align}

Clearly, $p_{x_0}(y)=1$ for each $y\in[\delta r^*(x_0),1]$. Let $y\in[x_0,\delta r^*(x_0))$. Consider the two expressions under minimum in \eqref{e-R}. The first expression is strictly increasing in $q$, so it cannot be binding. The derivative of the second expression w.r.t.~$q$ has a constant sign for all $q$:
\begin{equation}\label{E:IneqR10}
\frac{\df}{\df q}\left(\frac{q y+(1-q)\delta\sigma}{1-\delta(1-q)(1-\sigma)}\right)=\frac{1-\delta(1-\sigma)}{(1-\delta(1-q)(1-\sigma))^2}\left(y-\frac{\delta\sigma}{1-\delta(1-\sigma)}\right).
\end{equation}
If \eqref{E:IneqR10} is nonnegative, then the solution of \eqref{E:IneqR1} is $p_{x_0}(y)=1$. If \eqref{E:IneqR10} is negative, then, if a solution of \eqref{E:IneqR1} exists, it must satisfy the equation
\begin{equation}\label{E:IneqR3}
\min_{\sigma\in[0,1]} \frac{ q y+(1- q)\delta\sigma}{\left(1-\delta(1- q)(1-\sigma)\right)\frac{\delta \sigma}{1-\delta(1-\sigma)}}= r^*(x_0).
\end{equation}
It is straightforward to verify that the unique solution $(\tilde q,\tilde \sigma)$ of \eqref{E:IneqR3} is given by
\begin{align*}
\tilde q&=\begin{cases}
\frac{(1-\delta)(1-r^*)(y+\sqrt{y r^*})(r^*+\sqrt{y r^*})}{(1-\delta)(1-r^*) 2yr^*+(\delta (r^*)^2+((1-\delta+y)y+(1-\delta-(3-\delta) y)r^*)\sqrt{yr^*}}, & \text{if $y\in (0,\delta (r^*)^2)$},\\
\frac{\delta(1-r^*)}{\delta-y}, & \text{if $y\in[\delta (r^*)^2,\delta r^*)$,}
\end{cases}\\
\tilde \sigma&=\begin{cases}
\frac{(1-\delta)(y+\sqrt{yr})}{\delta(r-y)}, & \text{if $y\in (0,\delta (r^*)^2)$,}\\
1, & \text{if $y\in[\delta (r^*)^2,\delta r^*)$,}
\end{cases}
\end{align*}
where we write $r^*$ for $r^*(x_0)$ for notational convenience.
It is also straightforward to verify that $\frac{\tilde q}{1-\delta(1-\tilde q)}\ge r^*(x_0)$, so $\tilde q$ is a solution of \eqref{E:IneqR1}.

So, for each $y\in[x_0,\delta r^*(x_0))$, we have $p_{x_0}(y)=\tilde q$ and, by construction, $r_{p_{x_0}}(y)=r^*(x_0)$. We thus obtain $R_{p_{x_0}}(x_0,\mathcal B_{\{0,1\}})=r^*(x_0)=R^*(x_0,\mathcal B_{\{0,1\}})$.

\medskip
{\it Step 2.} Now consider all environments in $\mathcal F_{X}$, where $X\subset [0,1]$ and $\{0,1\}\subset X$. As $\mathcal B_{\{0,1\}}\subset \mathcal F_{X}$, we have 
\[
R_{p_{x_0}}(x_0,\mathcal F_{X})\le R^*(x_0,\mathcal F_{X})\le r^*(x_0)=R^*(x_0,\mathcal B_{\{0,1\}}).
\] 

We now identify the lower bound $L$ on $x_0$ such that $R^*(x_0,\mathcal F_{X})=r^*(x_0)$ for all $x_0\in [L,1]$, and thus the rule $p_{x_0}$ derived in Step 1 is dynamically robust on $\mathcal F_{[0,1]}$.
Define
\[
L=\inf\{x_0\in(0,1]: R_{p_{x_0}}(x_0,\mathcal F_{X})=r^*(x_0)\}.
\] 
Observe that $R_{p_{x_0}}(x_0,\mathcal F_{X})=r^*(x_0)=1$ for all $x_0\in[\delta,1]$. However, $\lim_{x_0\to 0} r^*(x_0)=\rho(0)=1/2$ and $\lim_{x_0\to 0} R_{p_{x_0}}(x_0,\mathcal F_{X})\le 1/4$.\footnote{\label{f:12}Rescaling the values by $\bar x=1/x_0$, we have $\lim_{x_0\to 0} R^*(x_0,\mathcal F_{X})\le \lim_{\bar x\to\infty } R^*(1,\mathcal F_{[0,\bar x]})\le 1/4$ by Theorem \ref{T:Quarter}(b).} Therefore, $L\in(0,\delta]$. We numerically find that the value of $L$ is at most $1/89$,  with the equality when $X=[0,1]$. This numeric bound does not depend on the discount factor $\delta$.

It remains to show statements (a) and (b) of Theorem \ref{P:Rules}. By Theorem \ref{T:Half}, the rule $p^*$ satisfies $R_{p^*}(x_0,\mathcal B_{\{0,1\}})=\rho(x_0)=R^*(x_0,\mathcal B_{\{0,1\}})$ for all $x\le \delta^2/(2-\delta)$.
As $\mathcal B_{\{0,1\}}\subset \mathcal F_{X}$, we have for all $x\le \delta^2/(2-\delta)$
\[
R_{p^*}(x_0,\mathcal F_{X})\le R^*(x_0,\mathcal F_{X})\le \rho(x_0).
\] 
We now find the lower bound $L'$ on $x_0$ such that $R_{q^*}(x_0,\mathcal F_{X})\ge \rho(x_0)$ for all $x_0\in [L',1]$, and thus the rule $q^*$ is dynamically robust on $\mathcal F_{X}$ for $x_0\in[L',\delta^2/(2-\delta)]$. Define
\[
L'=\inf\{x_0\in(0,1]: R_{p^*}(x_0,\mathcal F_{X})\ge \rho(x_0)\}.
\] 
For $X=[0,1]$, we verify that $L'=1/6$, by checking that, for $x_0<1/6$, 
\[
\inf_{z\in[0,1],\sigma\in[0,1]}\frac{U_{p^*}(x_0,F_{(z,\sigma)})}{V(x_0,F_{(z,\sigma)})}<\rho(x_0).
\]
In words, for $x_0<1/6$, the worst-case ratio is attained by a lottery over $0$ and $z$ with $z<1$, which is why rule $p^*$ no longer attains the dynamically robust ratio $\rho(x_0)$.

\section*{Appendix C. Variations and Extensions}
\renewcommand{\thesection}{C}
\setcounter{subsection}{0}

\subsection{A Dynamically Robust Rule for Bounded Environments.\label{s:Deriv}}
In the following we present a recursive procedure for constructing a dynamically robust rule for any value of $x_0$. For simplicity, we consider the case where the set of alternatives $X$ is an interval. The proof is easily adapted to a more general case. 

By rescaling the values, without loss of generality assume that $X=[0,1]$. We fix a target performance ratio $r$ and find a rule, together with a threshold $x_0(r)$, such that this rule attains a performance ratio at least $r$ when the outside option $x_0$ is at least $x_0(r)$. We also show that there is no rule that has a performance ratio better than $r$ for $x_0=x_0(r)$, and use this to argue that $r$ is the dynamically robust performance ratio when $x_0=x_0(r)$.


Let us introduce the following notation. Let $q$ be a stationary and monotone decision rule such that $r_q(y)$ is nondecreasing. By \eqref{E:r-p}, \eqref{E:PR-1}, \eqref{E:PR-0}, and the fact that $y\le c_{F_{(z,\sigma)}}$ implies $y\le \delta z$, the payoff ratio of a rule that stops with probability $s\in[0,1]$ when the best-so-far alternative is $y$, and stops with probability $q(z)$ for all $z>y/\delta$ is given by
\begin{equation}\label{E:PQ}
\tilde r_q(y,s)=\min\left\{\frac{s}{1-\delta(1-s)},\inf_{\substack{\sigma\in[0,1],\\ z\in (y/\delta,1]}}\frac{sy+(1-s)\delta \sigma \frac{q(z) z}{1-\delta(1-q(z))}}{(1-\delta(1-s)(1-\sigma))\frac{\delta\sigma z}{1-\delta(1-\sigma)}}\right\}.
\end{equation}
Note that $\tilde r_q(y,q(y))=r_q(y)$ by the definition of $r_q$.

Fix a target performance ratio $r\in(\frac 1 4,1]$. The following procedure will derive a decision rule $p$ and a lower bound $x_0(r)$ such that $p$ attains the ratio $r_p(y)\ge r$ for all $y\in[x_0(r),1]$. This decision rule $p$ will be compared to a different hypothetical rule $q$ that guarantees a strictly better ratio at $x_0(r)$, and hence at all higher best-so-far alternatives. So we suppose that 
\begin{equation}\label{E:q-hyp}
r_q(y)>r \ \ \text{for all $y\in[x_0(r),1]\cap X$}.
\end{equation}
We will then show that no such $q$ exists, thus proving dynamic robustness of $p$.

We now construct $p$ by induction. During this construction, we will verify some properties of the hypothetical rule $q$.

First, for each $y\in[\delta,1]$ define
\[
S_r(y)=\left\{s\in [0,1]: \frac{s}{1-\delta(1-s)}\ge r\right\}
\]
and
\begin{equation}\label{E:popt0}
p(y)=\max\{s\in [0,1]:  s\in S_r(y)\}=1.
\end{equation}
By \eqref{E:PQ} and \eqref{E:q-hyp}, the hypothetical rule $q$ satisfies
\[
r_q(y)=\frac{q(y)}{1-\delta(1-q(y))}>r \ \ \text{for each $y\in[\delta,1]$},
\]
so $q(y)\in S_r(y)$.

We proceed by induction. For each $k=1,2,...$, we derive $p(y)$ for $y\in [\delta^{k+1},\delta^{k})$, using our solution $p(z)$ for all $z\ge \delta^k$ from the earlier induction steps. We also verify that $q(y)\in S_r(y) $ for each $y\in [\delta^{k+1},\delta^{k})$ using the induction assumption 
\begin{equation}\label{E:topt0}
q(z)\in S_r(z) \ \ \text{for all $z\in[\delta^k,1]$}.
\end{equation}

For each $y\in [\delta^{k+1},\delta^{k})$, define
\[
S_r(y)=\left\{s\in [0,1]: \tilde r_p(y,s)\ge r\right\}
\]
and
\begin{equation}\label{E:popt}
p(y)=\max\{s\in [0,1]:  s\in S_r(y)\}  \ \ \text{if $S_r(y)\ne\varnothing$.}
\end{equation}
Notice that $S_r(y)$ depends on $p$ only through the values of $p(z)$ defined in the previous iterations of the procedure.

Now we check the properties of  the hypothetical rule $q$. By \eqref{E:popt0} and \eqref{E:popt} and the induction assumption \eqref{E:topt0}, we have $q(z)\le p(z)$ for all $z\ge y/\delta$. By \eqref{E:PQ}, $\tilde r_q(y,s)$ is increasing in $q(z)$  for all $z\ge y/\delta$. Consequently,
\begin{equation}\label{E:topt-1}
\tilde r_q(y,s)\le \tilde r_p(y,s) \ \ \text{for all $s\in [0,1]$.}
\end{equation}
As $\tilde r_q(y,q(y))=r_q(y)>r$ by \eqref{E:q-hyp}, we obtain
\begin{equation}\label{E:topt}
q(y)\in S_r(y).
\end{equation}
Now let us return to the construction of $p$. If $S_r(y)$ is nonempty for all $y\in [\delta^{k+1},\delta^{k})$, then we proceed to the next step of the induction, $k+1$. Otherwise, we terminate the procedure. Upon termination, we define $S_r(y)=\varnothing$ for all $y\in (0,\delta^{k+1})$ and
\[
x_0(r)=\min\{y: S_r(y)\ne\varnothing\}.
\]
We thus obtain $p(y)$ that satisfies $r_{p}(y)\ge r$ for all $y\in[x_0(r),1]$.

Note that the procedure terminates in a finite number of steps for each $r>1/4$. The proof of Theorem \ref{T:Quarter}(b) actually shows that for each $\eps>0$ there exists $\bar x >0$ such that $R_p(1,\mathcal F_{ X})\ge 1/4+\eps$ whenever $\sup  X\le \bar x$. By rescaling the values by $1/(\sup X)$, we obtain that $R_p(x_0,\mathcal F_{X})\ge 1/4+\eps$ whenever $x_0\ge (\sup X)/\bar x$.

Furthermore, since $\tilde r_p(y,s)$ is continuous in $y$ and $s$, $S_r$ defined by the above procedure is continuous in $r$.\footnote{Specifically, the graph $\{(y,S_r(y))\}_{y>0}$ is continuous in $r$ in the topology of uniform convergence.} Therefore, $x_0(r)$ is continuous. 

We now show that every stopping probability in $S_r(x_0(r))$ gives the same payoff ratio, $r$, that is,
\begin{equation}\label{E:topt-2}
\tilde r_p(x_0(r),s)=r \ \ \text{for all $s\in  S_r(x_0(r))$}.
\end{equation}
If there were $s\in  S_r(x_0(r))$ such that $\tilde r_p(x_0(r),s)>r$, then, by continuity of $\tilde r_p(y,s)$ in $y$, there would exist $\eps>0$ such that $\tilde r_p(x_0(r)-\eps,s)\ge r$, which is a contradiction to the definition of $x_0(r)$. 

By continuity of  $x_0(r)$ and \eqref{E:topt-2}, we obtain that $x_0(r)$ is a one-to-one mapping. That is, for each $x_0>0$ there exists $r$, and a decision rule $p$ defined by \eqref{E:popt0} and \eqref{E:popt} for this value of $r$, such that $r_{p}(y)\ge r$ for all $y\in[x_0,1]$ with equality for $y=x_0$, and thus
\[
R_p(x_0,\mathcal F_X)=\inf_{y\ge x_0} r_p(y) =r.
\]

We now show that $p$ defined by \eqref{E:popt0} and \eqref{E:popt} for a given $r$ is dynamically robust. Recall the hypothetical rule $q$ that satisfies \eqref{E:q-hyp}. By \eqref{E:topt}, $q(y)\in S_r(y)$ for all $y\in[x_0(r),1]$. Inserting $s=q(x_0(r))$ into \eqref{E:topt-2} we obtain $\tilde r_p(x_0(r),q(x_0(r))=r$. By \eqref{E:topt-1}, 
\[
r_q(x_0(r))=\tilde r_q(x_0(r),q(x_0(r))\le \tilde r_p(x_0(r),q(x_0(r))=r.
\] 
This is a contradiction to \eqref{E:q-hyp}, thus proving dynamic robustness of $p$.

\subsection{Linear Decision Rules.}\label{s:linear}
Analogously to Appendix C.1, consider $X=[0,1]$. In this section we investigate how much we lose in terms of the performance ratio if we consider simple rules where the stopping probability is linear in the best-so-far alternative (wherever this probability is below 1). We find that these rules approximate the dynamically robust performance ratio identified in Theorem \ref{P:Rules} well, with the performance loss around 5\%, provided the discount factor is not too close to one.

Consider a truncated linear rule $p_{\alpha}$ given by
\[
p_{\alpha}\left(y\right) =\min \left \{ \tfrac{1-\delta }{2-\delta }+\alpha y,1\right \},
\]
where $\alpha>0$ is a parameter to be determined. Note that the
intercept $\frac{1-\delta }{2-\delta }$ is taken from decision rule $\bar p$ in Theorem \ref{T:Quarter}. 

The intercept ensures good performance when the best so far alternative $y$ is very small. The slope $\alpha$ is used to ensure good performance for higher values of $y$.

By Proposition \ref{P:Binary} it is sufficient to investigate performance when facing binary envirornments.
For each value of $\alpha$ and $x_0$, we derive the performance ratio $R_{p_\alpha}(x_0,\mathcal F_X)$ for the linear rule and evaluate the performance loss given by
\[
\eps_\alpha=\sup_{x_0\in (1/89,1)} \Big(R^*(x_0,\mathcal X)-R_{p_\alpha}(x_0,\mathcal X)\Big),
\]
where $R^*$ is the dynamically robust performance ratio (we consider $x_0\ge 1/89$ to apply Theorem \ref{P:Rules}). We search for the value $\alpha^{\ast}$ that minimizes the performance loss,
\[
\eps^*=\eps_{\alpha^*}=\inf\nolimits_{\alpha\ge 0} \eps_\alpha.
\]
That is, we look for linear rules that are closest to being dynamically robust. Closeness refers here to the smallest maximal loss in performance ratio, $\eps^*$, as compared to the dynamically robust rule.

Table \ref{T:2} presents, for various values of discount factor $\delta$, how much one loses in terms of the performance ratio when limiting attention to linear rules. It also presents the corresponding slopes of the linear rules.

\begin{table}[!htb]
\[
{
\begin{array}{l|llllllllllll}
\delta  & 0.1 & 0.2 & 0.3 & 0.4 & 0.5 & 0.6 & 0.7 & 0.8 & %
0.9 & 0.95 & 0.99 & 0.999 \\ 
\hline
\alpha^{\ast } & 4.68 & 2.48 & 1.75 & 1.38 & 1.19 & 1.05 & 0.93 & 
0.81 & 0.6 & 0.39 & 0.1 & 0.01 \\ 
\eps^{\ast } & 4.9\% & 4.7\% & 4.6\% & 4.5\% & 4.8\% & 5\% & %
4.8\% & 4.4\% & 5.5\% & 6.6\% & 8.1\% & 8.3\%
\end{array}}
\]
\caption{\small Numerically computed coefficients $\alpha^*$ of the linear rules that are closest to being dynamically robust, with the corresponding bounds  $\eps^*$.}\label{T:2}
\end{table}

One may not be satisfied by the performance of the linear rules when $\delta 
$ is large. For large $\delta $, there is a different simple rule that
performs almost as well as the dynamically robust rule. Let 
\[
\hat p_{\beta}(y)=\min \left \{ \sqrt{\frac{\beta \left( 1-\delta \right) y}{%
1-y}},1\right \}
\]
 for $1/89\le y<\delta $ and $\hat p_{\beta}\left(y\right) =1$ for $%
y\geq \delta ,$ where $\beta >0$ is a parameter. Again, we are searching for the parameter $\beta ^{\ast }$ that makes $\bar p_{\beta}$ closest to being dynamically robust. We list the values of $\beta ^{\ast }$ and the corresponding bounds on the performance loss in Table \ref{T:3}.

\begin{table}[!htb]
\[
\begin{array}{l|llll}
\delta  & 0.9 & 0.95 & 0.99 & 0.999 \\ 
\hline
\beta ^{\ast } & 1.35 & 0.8 & 0.22 & 0.024 \\ 
\eps^{\ast } & 2.8\% & 1.6\% & 2.5\% & 3\%
\end{array}
\]
\caption{\small Numerically computed coefficients $\beta^*$ of rules $\hat p_{\beta}$ that are closest to being dynamically robust, with the corresponding bounds  $\eps^*$.}\label{T:3}
\end{table}

\subsection{Additive-Multiplicative Search Costs.}\label{s:cost}

Let us now extend the model by introducing an additive cost of search. Suppose that the individual incurs a cost of $\kappa\ge 0$ in each round of search. That is, in each round $t$, the individual has a choice between consuming the best-so-far alternative $y_t$, or to proceed to the next round, where a fixed cost of $\kappa$ is incurred, and all future payoffs are discounted by $\delta$. Thus, if the individual stops the search in round $t\ge 1$, her payoff from the perspective of round 0 is
\[
-(\delta+...+\delta^{t-1}+\delta^t)\kappa+\delta^t y_t.
\]
We assume that the cost parameters satisfy $\kappa\ge 0$, $0<\delta\le 1$, and $\kappa+(1-\delta)>0$. The last assumption demands that the search is costly. We allow for either zero additive cost, $\kappa=0$, or zero multiplicative cost, $1-\delta=0$, but not both.

First, we point out that Propositions \ref{L:Mixed} and \ref{P:Hist}, as well as Propositions \ref{P:Stat} and \ref{P:Binary}, continue to hold in this setting. The proofs of these propositions are easily adjusted to take into account the additive cost of search.

As before, Propositions \ref{P:Stat} and \ref{P:Binary} allow us to restrict attention to monotone decision rules that depend on the best-so-far alternative only, and to narrow down the set of priors to the set of binary environments ${\mathcal B}_X$. 
Let $p$ be a monotone rule. Let $V(F,y)$ be the optimal payoff and $U_p(F,y)$ be the payoff of rule $p$ in environment $F$ under best-so-far alternative $y$. Then for $F\in\mathcal B_X$ we obtain
\[
V(F,y)=\max_{q\in[0,1]} \left(q y+(1-q)\delta\left(-\kappa+ \int_0^1 V(F,\max\{y,x\})\df F(x)\right)\right)
\]
and
\[
U_p(F,y)=p(y) y+(1-p(y))\delta \left(-\kappa+\int_0^1 U_p(F,\max\{y,x\})\df F(x)\right).
\]
The performance ratio of rule $p$ is defined for each outside option $x_0>0$ as
\[
R_p(x_0,\mathcal F_X)=\inf_{y\ge x_0}\inf_{ F\in{\mathcal B}_X} \frac{U_p( F,y)}{V( F,y)}.
\]

We now find the dynamically robust performance ratio when the outside option is at least twice the present value of all future discounted costs, so $x_0\ge \frac{2\delta\kappa}{1-\delta}$.

\begin{theorem}\label{T:C}
Let $x_0\ge \frac{2\delta\kappa}{1-\delta}$. The stationary decision rule $\bar p$ given by
\[
\bar p(y)=\frac{1-\delta}{2-\delta} \ \ \text{for all $y\ge x_0$}
\]

(a) attains the performance ratio at least $1/4$;

(b) is dynamically robust if $\sup X=\infty$.
%
\end{theorem}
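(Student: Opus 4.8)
The plan is to mirror the proof of Theorem \ref{T:Quarter}, adapting every payoff expression to incorporate the additive cost $\kappa$, and to deploy the hypothesis $x_0\ge\frac{2\delta\kappa}{1-\delta}$ at exactly the one place where the cost would otherwise erode the bound. As noted before the statement, Propositions \ref{P:Stat} and \ref{P:Binary} carry over to this setting, so I may restrict attention to the single stationary rule $\bar p\equiv\bar\pi:=\frac{1-\delta}{2-\delta}$ and to binary environments $F_{(z,\sigma)}\in\mathcal B_X$ with best-so-far alternative $y\ge x_0$. The first task is to record the cost-adjusted analogues of the lemma preceding Proposition \ref{P:Stat}: solving the relevant fixed-point equations gives, for $z\le y$, $U_{\bar p}(F_{(z,\sigma)},y)=\frac{\bar\pi y-(1-\bar\pi)\delta\kappa}{1-\delta(1-\bar\pi)}$; the analogue of \eqref{E:U-Y} for $z>y$, with continuation value $U_{\bar p}(F_{(z,\sigma)},z)=\frac{\bar\pi z-(1-\bar\pi)\delta\kappa}{1-\delta(1-\bar\pi)}$; and the reservation value $c_{F_{(z,\sigma)}}=\frac{\delta\sigma z-\delta\kappa}{1-\delta(1-\sigma)}$, whence $V(F_{(z,\sigma)},y)=\max\{y,c_{F_{(z,\sigma)}}\}$. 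Throughout I will use the identities $1-\bar\pi=\frac{1}{2-\delta}$, $1-\delta(1-\bar\pi)=\frac{2(1-\delta)}{2-\delta}$, and hence $\frac{\bar\pi}{1-\delta(1-\bar\pi)}=\frac12$.

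For part (a) I split into the two contingencies exactly as in Theorem \ref{T:Quarter}. When $c_{F_{(z,\sigma)}}\le y$ the optimal payoff is $V=y$, and by the cost-adjusted analogue of \eqref{E:PR-0} the worst case is $\sigma=0$, where the individual merely pays the flow cost until stopping. Here the computation gives $\frac{U_{\bar p}}{V}=\frac12-\frac{\delta\kappa}{2(1-\delta)\,y}$, which is decreasing in $y$ and hence minimised at $y=x_0$; the hypothesis $x_0\ge\frac{2\delta\kappa}{1-\delta}$ is precisely what forces $\frac{\delta\kappa}{2(1-\delta)y}\le\frac14$, yielding a ratio of at least $\frac14$. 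When $c_{F_{(z,\sigma)}}>y$ I use the analogue of \eqref{E:PR-1}: writing each cost-adjusted payoff as the corresponding cost-free payoff from the proof of Theorem \ref{T:Quarter} minus a nonnegative correction, I verify that the correction to $U_{\bar p}$ and the correction to the optimal payoff $c_{F_{(z,\sigma)}}$ are balanced so that the ratio stays at or above $\frac14$; the cost-free part already equals $\frac14$ only in the binding limit $\sigma\to0$, $\sigma z\to\infty$, and in that limit both corrections are negligible relative to $z$, so the bound persists.

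For part (b), with $\sup X=\infty$, I reuse the construction in the proof of Theorem \ref{T:Quarter}(b): by Proposition \ref{P:Stat} restrict to stationary monotone rules $p$ with the monotone ratio property, and take sequences $z_k\to\infty$, $\sigma_k\to0$ with $\sigma_k z_k\to\infty$ and $c_k>y_k$, so that it is optimal to wait. Because $z_k\to\infty$, the fixed additive term $\kappa$ is asymptotically dominated by the $\sigma_k z_k$ terms in both $U_p$ and $V$, so the limiting payoff ratio coincides with the cost-free one, namely $\frac{(1-\delta)(1-\bar q)\bar q}{(1-\delta(1-\bar q))^2}\le\frac14$ for $\bar q=\lim_k p(y_k)=\lim_k p(z_k)$. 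Hence no rule can beat $\frac14$, and $\bar p$ is dynamically robust.

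I expect the main obstacle to be the second contingency of part (a): unlike the cost-free case, the additive cost lowers both the rule's payoff and the optimal payoff, and one must check that these two reductions do not tip the ratio below $\frac14$ for finite $(z,\sigma)$ rather than only in the limit. The cleanest route is to reduce it to the one-line inequality $U^{\mathrm{nocost}}-\tfrac14 c^{\mathrm{nocost}}\ge C_U-\tfrac14 C_V$, where $C_U,C_V\ge0$ are the cost corrections to the rule's payoff and to the reservation value respectively; the left-hand side is nonnegative by Theorem \ref{T:Quarter}, and the hypothesis on $x_0$ (equivalently $\frac{\delta\kappa}{2(1-\delta)}\le x_0/4\le y/4$) is exactly what controls the right-hand side.
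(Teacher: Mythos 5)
Your overall architecture---carrying over Propositions \ref{P:Stat} and \ref{P:Binary}, reducing to binary environments, computing the cost-adjusted payoffs $U_{\bar p}(F_{(z,\sigma)},y)$ and the reservation value $c_{F_{(z,\sigma)}}=\frac{\delta(\sigma z-\kappa)}{1-\delta(1-\sigma)}$, settling the contingency $c_{F_{(z,\sigma)}}\le y$ via the hypothesis on $x_0$, and adapting the divergent-sequence argument for part (b)---coincides with the paper's proof, and those pieces are correct (one small slip: the ratio $\frac12-\frac{\delta\kappa}{2(1-\delta)y}$ is \emph{increasing} in $y$, not decreasing, which is precisely why it is minimized at $y=x_0$). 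The genuine gap is in the second contingency of part (a). Your reduction inequality $U^{\mathrm{nocost}}-\tfrac14 c^{\mathrm{nocost}}\ge C_U-\tfrac14 C_V$ is in fact true, but the justification you offer for it fails. Writing $A=1-\delta$ and $s=\delta\sigma$, the corrections are $C_U=\frac{\delta\kappa}{2A}$ and $C_V=\frac{\delta\kappa}{A+s}$, so the right-hand side equals $\delta\kappa\,\frac{A+2s}{4A(A+s)}\ge\frac{\delta\kappa}{4A}>0$: it is bounded away from zero uniformly in $(z,\sigma)$. Hence ``the left-hand side is nonnegative by Theorem \ref{T:Quarter}'' cannot close the argument; and the hypothesis on $x_0$ alone does not control it either: dropping the $z$-term, the left-hand side is only guaranteed to be $\ge\frac{Ay}{2A+s}$, and with $y=\frac{2\delta\kappa}{A}$ the needed comparison $\frac{2\delta\kappa}{2A+s}\ge\delta\kappa\,\frac{A+2s}{4A(A+s)}$ reduces to $6A^2+3As\ge 2s^2$, which is false when $\delta$ is close to $1$ and $\sigma$ is of order one.

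The missing ingredient is the defining constraint of this contingency, $c_{F_{(z,\sigma)}}\ge y$, i.e.\ $\delta\sigma z\ge y(1-\delta+\delta\sigma)+\delta\kappa$; it is what rules out the problematic $(z,\sigma)$ above. With it, your route closes in two lines: a direct computation gives
\[
U^{\mathrm{nocost}}-\tfrac14 c^{\mathrm{nocost}}
=\frac{Ay}{2A+s}+\frac{s^2z}{4(2A+s)(A+s)}
\ \ge\ \frac{Ay}{2A+s}+\frac{sy}{4(2A+s)}
=\frac{(4A+s)\,y}{4(2A+s)}
\ \ge\ \frac y4,
\]
while $C_U-\tfrac14 C_V\le C_U=\frac{\delta\kappa}{2(1-\delta)}\le\frac{x_0}{4}\le\frac y4$ by the hypothesis. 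So the load-bearing step is the quantitative bound $U^{\mathrm{nocost}}-\tfrac14 c^{\mathrm{nocost}}\ge y/4$ on the waiting region, not nonnegativity, and it requires using the lower bound on $\delta\sigma z$. For comparison, the paper avoids this issue entirely: after using the hypothesis to bound $y-\frac{\delta\kappa}{1-\delta}\ge\frac y2$, it observes that the resulting bound on the payoff ratio is monotone in $z$ with direction determined by whether $\sigma\gtrless 2(1-\delta)/\delta$, and in each case either reduces $z$ to the boundary $c_{F_{(z,\sigma)}}=y$ (the already-settled contingency) or sends $z\to\infty$, where the limiting ratio $\frac{1-\delta(1-\sigma)}{4(1-\delta)+2\delta\sigma}\ge\frac14$. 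Either completion is fine; as written, your sketch does not yet constitute a proof of part (a).
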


\begin{proof}
Let $x_0\ge \frac{2\delta\kappa}{1-\delta}$ be an outside option. We show that the decision rule $\bar p(y)$ that stops with the constant probability
\[
q:=\bar p(y) \ \ \text{for all $y\ge x_0$}
\]
attains the performance ratio of $1/4$.

Fix a best-so-far alternative $y\ge x_0$. Consider an environment $F_{(z,\sigma)}\in{\mathcal B}_X$ such that $z\le y$, so $V(F_{(z,\sigma)},y)=y$, and
\[
U_{\bar p}(F_{(z,\sigma)},y)=q y+(1-q)\delta (U_p(F_{(z,\sigma)},y)-\kappa)=\frac{q y-(1-q)\delta\kappa}{1-\delta(1-q)}=\frac 1 2\left(y-\frac{\delta\kappa}{(1-\delta)}\right).
\]
By $y\ge x_0\ge \frac{2\delta\kappa}{1-\delta}$, the payoff ratio satisfies
\[
\frac{U_{\bar p}(F_{(z,\sigma)},y)}{V(F_{(z,\sigma)},y)}=\frac 1 2\left(1-\frac{\delta\kappa}{(1-\delta)y}\right)\ge \frac 1 4.
\]
Second, consider an environment $F_{(z,\sigma)}$ such that $z>y$, so
\[
V(F_{(z,\sigma)},y)=\max\left\{y,\frac{\delta(\sigma z-\kappa)}{1-\delta(1-\sigma)}\right\}.
\]
The individual's payoff is
\[
U_{\bar p}(F_{(z,\sigma)},y)=q y+(1-q)\delta \left((1-\sigma)U_{\bar p}(F_{(z,\sigma)},y)+\sigma U_{\bar p}(F_{(z,\sigma)},z)-\kappa\right).
\]
Substituting $q=\frac{1-\delta}{2-\delta}$ and $U_{\bar p}(F_{(z,\sigma)},z)=\frac 1 2\left(z-\frac{\delta\kappa}{(1-\delta)}\right)$, solving for $U_{\bar p}(F_{(z,\sigma)},y)$, and simplifying the expression yields
\[
U_{\bar p}(F_{(z,\sigma)},y)=\frac{q y+(1-q)\delta \left(\frac \sigma 2\left(z-\frac{\delta\kappa}{(1-\delta)}\right)-\kappa\right)}{1-\delta(1-q)(1-\sigma)}=\frac 1 2\left(y-\frac{\delta\kappa}{(1-\delta)}\right)+\frac{\delta\sigma(z-y)}{4(1-\delta)+2\delta\sigma}.
\]
If $V(F_{(z,\sigma)},y)=y$, then, by $y\ge x_0\ge \frac{2\delta\kappa}{1-\delta}$,
\begin{equation}\label{B-11}
\frac{U_{\bar p}(F_{(z,\sigma)},y)}{V(F_{(z,\sigma)},y)}=\frac 1 {2y}\left(y-\frac{\delta\kappa}{(1-\delta)}+\frac{\delta\sigma(z-y)}{2(1-\delta)+\delta\sigma}\right)\ge \frac 1 {2y}\left(y-\frac{\delta\kappa}{(1-\delta)}\right)\ge \frac 1 4. 
\end{equation}
If $V(F_{(z,\sigma)},y)=\frac{\delta(\sigma z-\kappa)}{1-\delta(1-\sigma)}$, then, by $y\ge x_0\ge \frac{2\delta\kappa}{1-\delta}$,
\begin{align}
\frac{U_{\bar p}(F_{(z,\sigma)},y)}{V(F_{(z,\sigma)},y)}&=\frac 1 2\left(y-\frac{\delta\kappa}{(1-\delta)}+\frac{\delta\sigma(z-y)}{2(1-\delta)+\delta\sigma}\right)\frac{1-\delta(1-\sigma)}{\delta(\sigma z-\kappa)}\notag\\
&\ge \frac 1 2\left(\frac{y}{2}+\frac{\delta\sigma(z-y)}{2(1-\delta)+\delta\sigma}\right)\frac{1-\delta(1-\sigma)}{\delta(\sigma z-\kappa)}.\label{B-12}
\end{align}
Assume that $\sigma\ge 2(1-\delta)/\delta$. Then the right-hand side in \eqref{B-12} is increasing in $z$, so reducing $z$ until $\frac{\delta(\sigma z-\kappa)}{1-\delta(1-\sigma)}=y$ yields $\inf_{z}\frac{U_{\bar p}(F_{(z,\sigma)},y)}{V(F_{(z,\sigma)},y)}\ge 1/4$ by \eqref{B-11}. Alternatively, assume that $\sigma< 2(1-\delta)/\delta$. Then the right-hand side in \eqref{B-12} is decreasing in $z$, so, taking $z\to\infty$, we obtain
\[
\frac{U_{\bar p}(F_{(z,\sigma)},y)}{V(F_{(z,\sigma)},y)}\ge \inf_{\sigma<\frac{2(1-\delta)}{\delta}}\lim_{z\to\infty}\frac{U_{\bar p}(F_{(z,\sigma)},y)}{V(F_{(z,\sigma)},y)}=\inf_{\sigma<\frac{2(1-\delta)}{\delta}}\frac{1-\delta(1-\sigma)}{4(1-\delta)+2\delta\sigma}= \frac 1 4,
\]
as can be easily verified for all $\sigma\in[0,1]$. We thus obtain
\[
\inf_{\sigma\in[0,1], z\ge 0}\frac{U_{\bar p}(F_{(z,\sigma)},y)}{V(F_{(z,\sigma)},y)}\ge \frac 1 4.
\]
Moreover, for a sequence $(z_k,\sigma_k)_{k\in\N}$ such that $z_k\to\infty$, $\sigma_k\to 0$, and $\frac{\delta(\sigma_k z_k-\kappa)}{1-\delta(1-\sigma_k)}> y$ for all $k$,
\[
\lim_{k\to\infty}\frac{U_{\bar p}(F_{(z_k,\sigma_k)},y)}{V(F_{(z_k,\sigma_k)},y)}= \frac 1 4.
\]
\end{proof}

{\setlength{\baselineskip}{0.19in} 
\setlength\bibsep{0.13\baselineskip}
\bibliographystyle{econometrica}
\bibliography{search}}

\end{document}